\documentclass[a4paper,fleqn]{cas-sc}

\usepackage{subfigure}  
\usepackage{graphicx}
\usepackage{booktabs}
\usepackage{enumitem}
\usepackage{float}
\usepackage[section]{placeins}
\usepackage{tikz}
\usetikzlibrary{arrows.meta, positioning, shapes.geometric, fit}

\usepackage{amsthm}
\usepackage[numbers]{natbib}
\usepackage{algpseudocode}
\usepackage{algorithm}
\usepackage[english]{babel}
\newtheorem{theorem}{Theorem}[section]
\newtheorem{definition}{Definition}[section]

\newtheorem{corollary}{Corollary}[theorem]

\def\tsc#1{\csdef{#1}{\textsc{\lowercase{#1}}\xspace}}
\tsc{WGM}
\tsc{QE}
\tsc{EP}
\tsc{PMS}
\tsc{BEC}
\tsc{DE}

\begin{document}
	\let\WriteBookmarks\relax
	\def\floatpagepagefraction{1}
	\def\textpagefraction{.001}
	\shorttitle{SCV-PINN: A Novel Framework for Forward and Inverse Problems}
	\shortauthors{Barman, Ray, Chatterjee}
	
	\title [mode = title]{Split Complex-Valued Physics-Informed Neural Networks for Forward and Inverse Nonlinear PDEs}                      
	
	
	\author[1]{Biswanath Barman}
    \ead{biswa.nit.rourkela@gmail.com}

        \author[1]{Rajendra K. Ray$^{*}$} 
        \cortext[cor1]{Corresponding author}
        \ead{rajendra@iitmandi.ac.in}
        \author[2]{Debdeep Chatterjee}  
        
        \address[1]{School of Mathematical and Statistical Sciences, Indian Institute of Technology Mandi\\
        Mandi, Himachal Pradesh, 175005, India}
        \address[2]{Department of Computer Science \& Engineering, Sikkim Manipal Institute of Technology, \\
        Rangpo, Sikkim, 737136, India}

\begin{abstract}
Physics-informed neural networks (PINNs) have emerged as a powerful framework for solving forward and inverse partial differential equations (PDEs); however, conventional real-valued PINNs (RV-PINNs) often suffer from spectral bias, limited expressivity, and difficulties in accurately capturing high-frequency, oscillatory, and phase-dependent solution dynamics. Although several extensions, including adaptive sampling strategies, gradient-enhanced formulations, and transformer-based architectures, have been proposed to improve predictive performance, the inherent representational limitation associated with purely real-valued formulations remains largely unresolved. In this work, we propose a generalized split complex-valued physics-informed neural network (SCV-PINN) framework, in which the network parameters and latent representations are defined in the complex domain. The proposed framework employs split complex-valued activation functions, where conventional real-valued activation functions are independently applied to both the real and imaginary components, ensuring numerical stability, computational efficiency, and enhanced approximation capability. This formulation enables the simultaneous learning of amplitude and phase information, thereby improving the representation of highly nonlinear and oscillatory physical systems. To assess robustness and stability, extensive ablation studies are conducted using different split activation functions and collocation point sampling strategies. The proposed framework is validated on a diverse set of forward and inverse PDEs, including Burgers’, Allen--Cahn, Korteweg--de Vries (KdV), nonlinear Schrödinger, Helmholtz equations on both regular and irregular domains, and Poisson equations on regular and irregular geometries, as well as Kovasznay flow at $\mathrm{Re}=20$, lid-driven cavity flow at $\mathrm{Re}=100$, the Lorenz system, inverse Burgers’ equation, and inverse Navier--Stokes equations at $\mathrm{Re}=100$. To further demonstrate the scalability and versatility of the proposed framework for high-dimensional nonlinear systems, a three-dimensional Navier--Stokes Beltrami flow benchmark is also considered, for which the proposed SCV-PINN achieves a relative $L_2$ error of $4.07\times10^{-5}$. Numerical experiments demonstrate that the proposed SCV-PINN consistently achieves significantly lower relative $L_2$ errors and improved parameter identification accuracy compared with conventional RV-PINNs and several existing PINN variants. In particular, the framework exhibits strong performance for complex-valued, multi-scale, highly oscillatory, and high-dimensional PDEs while also providing stable and accurate predictions for real-valued nonlinear systems. These results establish SCV-PINNs as an efficient, robust, and generalized extension of standard PINNs for scientific machine learning and complex PDE modeling.
\end{abstract}

\begin{keywords}
		 Complex-valued physics-informed neural networks \sep
        Complex-valued neural networks \sep
        Forward and inverse problems \sep
        Spectral bias \sep
        Sampling strategies \sep
        Scientific machine learning \sep Split complex-valued activation \end{keywords}
	
	\date{}
	\maketitle
\section{Introduction}
Real-valued neural networks (RVNNs) constitute the dominant paradigm in modern machine learning and scientific computing, primarily because their inputs, parameters, and outputs are defined over the real field $\mathbb{R}$. Their widespread adoption is attributed to their mathematical simplicity, computational efficiency, and the availability of mature software ecosystems. As a result, RVNNs have been successfully applied across a broad range of applications, including computer vision, natural language processing, and scientific modeling. Various architectures, such as multilayer perceptrons, recurrent neural networks, and convolutional neural networks, have been developed to extract hierarchical representations from data \cite{goodfellow2016deep,bishop2006pattern,bishop2023deep,aggarwal2018neural,strang2019linear,zhang2016understanding}. Critical aspects of RVNNs, including weight initialization strategies, loss function design, regularization mechanisms, and optimization algorithms, are now well established.

Despite these advances, RVNNs exhibit inherent limitations when applied to problems characterized by oscillatory dynamics, wave propagation, and phase-dependent interactions. Many physical systems are naturally described using complex-valued quantities, where both amplitude and phase play a fundamental role. In such scenarios, restricting the representation to real-valued function spaces may lead to inefficient approximations or loss of important structural information. This limitation motivates the development of complex-valued neural networks (CVNNs), where inputs, weights, and activations are defined in the complex domain $\mathbb{C}$ \cite{krantz1999handbook,ponnusamy2006complex,zill2009first,suresh2013supervised,guberman2016complex,barrachina2023theory,abdalla2023complex}.

Neural networks based on complex-valued representations have demonstrated several advantages over their real-valued counterparts, although RVNNs remain easier to implement due to simpler activation function design and learning algorithms. CVNNs are particularly effective in applications involving complex-valued data and signals \cite{lee2022complex,bassey2021survey,fuchs2021complex,scarnati2021complex,choi2018phase,fink2014predictin}. Moreover, they often exhibit improved optimization behavior, enhanced generalization, and faster convergence, even when applied to real-valued input data \cite{barrachina2023theory,trabelsi2017deep}. From a biological perspective, complex-valued parameterizations offer a more expressive and plausible representation of neuronal activity \cite{lee2022complex,bassey2021survey,hirose2006complex}.

From a biological viewpoint, neurons transmit information through rhythmic spiking patterns characterized by both firing rate and temporal synchronization \cite{reichert2013neuronal}. These features can be naturally represented through the amplitude and phase of complex-valued signals. In contrast, real-valued neurons encode only magnitude information and neglect phase relationships, which may limit their ability to model synchronization phenomena. In complex-valued networks, inputs with similar phases combine constructively, while those with differing phases interfere destructively. This phase-aware interaction enables more effective signal propagation and synchronization across layers, which is particularly beneficial in deep architectures involving temporal or gating mechanisms.

Moreover, although RVNNs have achieved remarkable success in the deep learning era, the representational capacity of individual real-valued neurons remains fundamentally limited. A classical example illustrating this limitation is the XOR problem, which cannot be solved by a single real-valued neuron due to its inability to construct nonlinearly separable decision boundaries. This highlights an intrinsic constraint of RVNNs in capturing certain classes of nonlinear relationships. In contrast, a single complex-valued neuron can solve the XOR problem with strong generalization capability \cite{nitta2003solving}. This enhanced expressiveness arises from the geometric properties of complex-valued transformations. In particular, Nitta demonstrated that complex-valued neurons can form orthogonal decision boundaries, where intersecting hypersurfaces partition the input space into multiple regions \cite{nitta2003solving,nitta2003inherent,nitta2000analysis}. Such orthogonality enables CVNNs to represent more intricate decision structures compared to their real-valued counterparts. 

Furthermore, this property enhances the ability of CVNNs to detect symmetries and encode richer structural patterns in data. By leveraging the inherent phase and amplitude interactions of complex numbers, CVNNs can construct more flexible and expressive mappings. Consequently, complex-valued neural networks possess a higher representational capacity, allowing them to effectively address nonlinear problems that are difficult or impossible for single real-valued neurons to solve. Despite these advantages, the development of fully complex-valued neural networks faces several theoretical and practical challenges. A fundamental limitation arises from Liouville’s theorem \cite{ponnusamy2006complex}, which states that any bounded entire function must be constant. This imposes significant restrictions on the design of holomorphic activation functions and was historically considered a major obstacle in CVNN research \cite{tanaka2013complex}. Early studies suggested that enforcing complex differentiability could limit the expressive power of neural networks and hinder the analysis of their training dynamics.

This limitation was later addressed through the introduction of Wirtinger calculus \cite{fischer2002appendix,wirtinger1927formalen}, which provides a generalized framework for defining gradients of non-holomorphic functions. By treating the real and imaginary components independently, Wirtinger calculus enables the extension of backpropagation to complex-valued networks \cite{amin2013learning, amin2011wirtinger}, thereby facilitating stable training and improved flexibility in activation design. Despite these advances, fully complex-valued networks remain computationally expensive and lack widely adopted, user-friendly implementation frameworks. To address these challenges, a practical alternative is the use of split complex-valued neural networks \cite{hammad2024comprehensive}, where complex-valued weights and biases are retained, while activation functions are applied separately to the real and imaginary components using standard real-valued nonlinearities. This approach avoids the constraints associated with holomorphic activation functions while maintaining computational efficiency and compatibility with existing optimization techniques.

Parallel to these developments, physics-informed neural networks (PINNs) have emerged as a powerful framework for solving forward and inverse problems governed by partial differential equations (PDEs). By embedding the governing equations directly into the loss function, PINNs enable the seamless integration of physical laws and data-driven learning \cite{raissi2019physics,karniadakis2021physics,cuomo2022scientific}. This methodology has been successfully applied across a wide range of applications, including fluid mechanics, bioengineering, uncertainty quantification, high-dimensional PDEs, stochastic systems, and fractional models \cite{raissi2020hidden,sun2020surrogate,raissi2019deep,jin2021nsfnets,barman2026efficient,kissas2020machine,liu2019multi,yang2019adversarial,zhu2019physics,yang2021b,han2018solving,pang2019fpinns}. However, standard PINNs based on real-valued fully connected architectures often encounter difficulties when dealing with multi-scale or highly oscillatory solutions. These challenges manifest as slow convergence, training instability, and reduced accuracy \cite{fuks2020limitations,krishnapriyan2021characterizing,wang2021understanding,wang2022and}. Recent studies have attributed these issues to stiffness in gradient flow dynamics and imbalances between different loss components \cite{wang2021understanding}. Furthermore, theoretical analyses based on neural tangent kernel (NTK) theory have shown that fully connected networks exhibit a spectral bias toward low-frequency components, limiting their ability to approximate high-frequency features \cite{rahaman2019spectral,cao2019towards,tancik2020fourier,basri2020frequency,jacot2018neural}.

While several approaches, including adaptive loss weighting \cite{mcclenny2023self,song2024loss}, sequence-to-sequence architectures \cite{mattey2022novel,sundar2025sequential}, and transformer-based PINNs \cite{zhao2023pinnsformer,zhu2026physicssolver,barman2026physicsformer}, have been proposed to mitigate these issues, they typically introduce additional complexity without fundamentally enhancing the representational capacity of the network. A more principled approach is to enrich the function space itself by extending PINNs to the complex domain. Motivated by these considerations, this work proposes a comprehensive framework of \emph{split complex-valued physics-informed neural networks} (SCV-PINNs) for solving forward and inverse nonlinear PDEs. The proposed approach adopts a split-complex formulation, where network parameters are complex-valued and conventional real-valued activation functions are independently applied to the real and imaginary components. This split complex-valued formulation provides a balance between enhanced representation capability, numerical stability, and computational efficiency.

In addition to architectural considerations, sampling strategies play a critical role in the performance of PINNs. In this work, we systematically investigate both non-adaptive and adaptive sampling techniques \cite{wu2023comprehensive}. Specifically, six non-adaptive sampling methods are considered: (i) equispaced grid sampling, (ii) uniform random sampling, (iii) Latin hypercube sampling, (iv) Halton sequences, (v) Hammersley sequences, and (vi) Sobol sequences. Furthermore, we incorporate an adaptive residual-based distribution (RAD) strategy, which dynamically concentrates collocation points in regions with high residual errors. The training process employs a hybrid optimization strategy, where a first-order optimizer (Adam) \cite{kingma2014adam} is used during the initial training phase, followed by a second-order quasi-Newton method (L-BFGS) \cite{liu1989limited} for fine-tuning. The optimization is carried out in a split manner, consistently treating the real and imaginary components within a real-valued computational framework.

Recent studies have explored related directions, including the development of complex-valued neural network architectures, specialized activation functions inspired by complex analysis, and problem-specific implementations of complex-valued PINNs \cite{si2026complex,mohuț2026towards,zhang2025complex}. However, these works primarily focus on particular components of complex-valued learning and often adopt fully complex-valued formulations, which remain computationally demanding and lack a unified training strategy. In contrast, the present work adopts a split complex-valued neural network framework, where complex-valued representations are combined with real-valued activation functions independently applied to the real and imaginary components, together with a consistent split optimization strategy for training. To the best of our knowledge, this is the first study that systematically develops a split complex-valued physics-informed neural network framework for solving both forward and inverse nonlinear partial differential equations across diverse geometries and solution regimes, while also providing a comprehensive and practical guideline on when and why SCV-PINNs should be employed to achieve improved accuracy and robustness.

In contrast to existing studies, the present work develops a generalized split complex-valued physics-informed neural network (SCV-PINN) framework and systematically investigates its effectiveness for a broad class of forward and inverse PDE problems. The proposed framework provides practical insights into when split complex-valued representations offer advantages over conventional real-valued PINNs and their variants.

\section*{Main Contributions}

The principal contributions of this work are summarized as follows:

\begin{itemize}

\item To the best of our knowledge, this work presents one of the first generalized split complex-valued physics-informed neural network (SCV-PINN) frameworks for solving both forward and inverse nonlinear partial differential equations.

\item The proposed SCV-PINN employs complex-valued weights and biases together with split complex-valued activation functions, enabling the network to exploit both magnitude and phase information while maintaining computational stability.

\item A hybrid optimization strategy combining Adam and L-BFGS is incorporated within the SCV-PINN framework to improve training stability, convergence efficiency, and solution accuracy across diverse forward and inverse problems.

\item A systematic investigation of split activation functions and collocation-point sampling strategies is conducted, including six non-adaptive sampling methods (grid, uniform, Latin hypercube, Halton, Hammersley, and Sobol) and an adaptive residual-based distribution (RAD) approach.

\item The proposed framework is validated on a diverse collection of benchmark problems involving both real-valued and complex-valued governing equations, defined over regular and irregular computational domains, covering forward and inverse modeling tasks.

\item Extensive numerical experiments demonstrate that SCV-PINN consistently achieves higher accuracy and improved parameter-identification capability than conventional real-valued PINNs and several existing variants, while also exhibiting strong scalability on high-dimensional problems, including three-dimensional Navier--Stokes Beltrami flow.

\end{itemize}

The remainder of this paper is organized as follows. Section~\ref{sec:sec2} discusses the motivation for complex-valued formulations in physics-informed learning. Section~\ref{sec:sec3} introduces the fundamentals of complex-valued neural networks (CV-NNs). Section~\ref{sec:sec4} presents the proposed split complex-valued physics-informed neural network (SCV-PINN) framework. Section~\ref{sec:sec5} reports the numerical experiments and practical applications of the proposed methodology. Finally, conclusions and future research directions are provided in Section~\ref{sec:sec6}. Additional ablation studies on split activation functions and adaptive/non-adaptive sampling strategies are presented in Appendix~\ref{appen:appenA} and Appendix~\ref{appen:appenB}, respectively.

\section{Why Complex-Valued Formulations Matter in Physical Problem Solving}
\label{sec:sec2}
\subsection{Geometric Interpretation of Complex Numbers in CVNNs}

A complex number $z \in \mathbb{C}$ can be expressed in Cartesian form as $z = x + i y$, where $x, y \in \mathbb{R}$. Geometrically, $z$ corresponds to a point in the complex plane (Argand plane), or equivalently, a vector originating from the origin to the point $(x,y)$, as illustrated in Fig.~\ref{fig:fig1}. This representation admits a natural polar form defined by two fundamental quantities: the modulus and the argument.

The modulus of $z$, denoted by $|z|$, represents the Euclidean length of the vector and is given by
\begin{equation}
|z| = r = \sqrt{x^2 + y^2}.
\end{equation}
The argument of $z$, denoted by $\theta = \arg(z)$, is the angle between the positive real axis and the vector $z$. By convention, $\theta$ is measured in radians, taking positive values in the counterclockwise direction and negative values in the clockwise direction. The quantities $(r,\theta)$ satisfy the relations
\begin{equation}
\cos\theta = \frac{x}{r}, \qquad \sin\theta = \frac{y}{r}.
\end{equation}

It is important to note that the argument is not unique due to the periodicity of trigonometric functions. If $\theta_0$ is an argument of $z$, then all values $\theta_0 + 2k\pi$, $k \in \mathbb{Z}$, represent valid arguments. In practice, a principal value of the argument is often selected, typically within the interval $(-\pi, \pi]$. The angle can be computed using $\theta = \tan^{-1}(y/x)$, with appropriate adjustments depending on the quadrant in which $z$ lies.

From the perspective of complex-valued neural networks, this geometric interpretation plays a fundamental role. Unlike real-valued representations, which encode only magnitude information, complex-valued representations inherently capture both amplitude (modulus) and phase (argument). The modulus $|z|$ corresponds to the strength or intensity of a signal, while the argument $\theta$ encodes phase information, which is essential in describing oscillatory and wave-like phenomena.

This dual representation enables CVNNs to model richer functional relationships compared to real-valued neural networks. In particular, phase information allows the network to capture rotations, interference patterns, and synchronization effects that cannot be represented using magnitude alone. Consequently, complex-valued formulations provide a more expressive framework for learning high-frequency, multi-scale, and phase-dependent behaviors commonly encountered in scientific computing and physics-informed learning.

\subsection{Motivation for Complex-Valued Representations}

The primary motivation for employing complex-valued neural networks arises from the properties of complex arithmetic, particularly multiplication. While summation operations in real and complex domains exhibit similar behavior, i.e.,
\begin{equation}
(x_1 + x_2) + i(y_1 + y_2),
\end{equation}
the distinction becomes significant in the weighting process. In real-valued networks, multiplication operates independently on each component, whereas in the complex domain, the product of two complex numbers
\begin{equation}
z_1 z_2 = (x_1 + i y_1)(x_2 + i y_2)
= (x_1 x_2 - y_1 y_2) + i(x_1 y_2 + x_2 y_1),
\label{eq:complex_multiplication}
\end{equation}
introduces a coupling between real and imaginary parts. This interaction inherently encodes amplitude scaling and phase rotation, enabling richer feature representations. As a result, complex-valued neural networks provide a more expressive framework for modeling oscillatory, wave-like, and high-frequency phenomena compared to their real-valued counterparts.
\begin{equation}
z_1 z_2 = (x_1 + i y_1)(x_2 + i y_2)
= (x_1 x_2 - y_1 y_2) + i(x_1 y_2 + x_2 y_1).
\label{eq:complex_product}
\end{equation}

Unlike real-valued multiplication, the product of two complex numbers couples the real and imaginary parts, thereby introducing both amplitude modulation and phase rotation~\cite{hirose2006complex}. A more insightful interpretation is obtained by expressing the complex numbers in polar form as $z_1 = r_1 e^{i\theta_1}$ and $z_2 = r_2 e^{i\theta_2}$. In this representation, the product becomes
\begin{equation}
z_1 z_2 = r_1 r_2 e^{i(\theta_1+\theta_2)}.
\label{eq:polar_product}
\end{equation}

Equation~\eqref{eq:polar_product} shows that complex multiplication corresponds to the multiplication of amplitudes together with the addition of phases. This intrinsic amplitude--phase coupling provides a natural motivation for complex-valued neural networks, particularly in applications where the underlying solution exhibits oscillatory, wave-like, or phase-sensitive behavior. In such cases, amplitude--phase-based activations can offer a more expressive and physically meaningful representation than purely real-valued formulations.

\subsection{Complex Numbers as Vector Field Representations}

An important and physically meaningful interpretation of complex-valued functions arises through their correspondence with vector fields. Consider a two-dimensional vector field defined as
\begin{equation}
\mathbf{F}(x,y) = P(x,y)\,\mathbf{i} + Q(x,y)\,\mathbf{j},
\end{equation}
where $P(x,y)$ and $Q(x,y)$ denote the components of the field along the Cartesian directions. A natural representation of this vector field can be constructed using a complex-valued function
\begin{equation}
f(z) = P(x,y) + i\,Q(x,y), \quad z = x + i y.
\end{equation}
In this formulation, the real and imaginary parts of $f$ correspond directly to the components of the vector field. Conversely, any complex function $f(z) = u(x,y) + i v(x,y)$ induces an associated vector field 
\begin{equation}
\mathbf{F}(x,y) = u(x,y)\,\mathbf{i} + v(x,y)\,\mathbf{j}.
\end{equation}
Thus, complex-valued functions and planar vector fields can be regarded as equivalent representations of the same underlying structure.

This correspondence provides a powerful geometric viewpoint, where a complex function defines not only a scalar mapping but also a spatially varying vector field over the domain. Such an interpretation is particularly useful in physical systems where quantities such as velocity, flux, or force naturally arise as vector fields. For instance, in fluid dynamics, electromagnetic theory, and transport phenomena, the governing equations often involve vector-valued quantities that can be compactly expressed using complex formulations.

As a representative example, consider the complex function $f(z) = \overline{z}$, where $\overline{z} = x - i y$ denotes the complex conjugate of $z$. In this case, the corresponding vector field is given by
\begin{equation}
\mathbf{F}(x,y) = x\,\mathbf{i} - y\,\mathbf{j}.
\end{equation}
The geometric structure of this field is illustrated in Fig.~\ref{fig:fig1}, where each point $(x,y)$ is associated with a vector whose components are directly determined by the real and imaginary parts of $f(z)$. This visualization highlights how complex-valued functions can encode directional information and spatial variation simultaneously.

It is important to distinguish this vector field interpretation from the notion of complex mappings. While a complex mapping describes how points in the domain are transformed to new locations, the vector field representation assigns a vector to each point without altering its position. This distinction is particularly relevant when analyzing physical systems, where the primary interest often lies in understanding the distribution and interaction of field quantities rather than geometric transformations.

From the perspective of complex-valued neural networks and physics-informed learning, this interpretation provides additional insight into their representational capability. By encoding both components of a vector field within a single complex-valued function, such models can naturally learn coupled multi-dimensional behaviors. This is especially advantageous in problems involving fluid flow, electromagnetic fields, and other vector-driven phenomena, where preserving the intrinsic structure of the field is essential for accurate and physically consistent predictions.

\refstepcounter{figure}

\begin{center}
    \includegraphics[width=\textwidth]{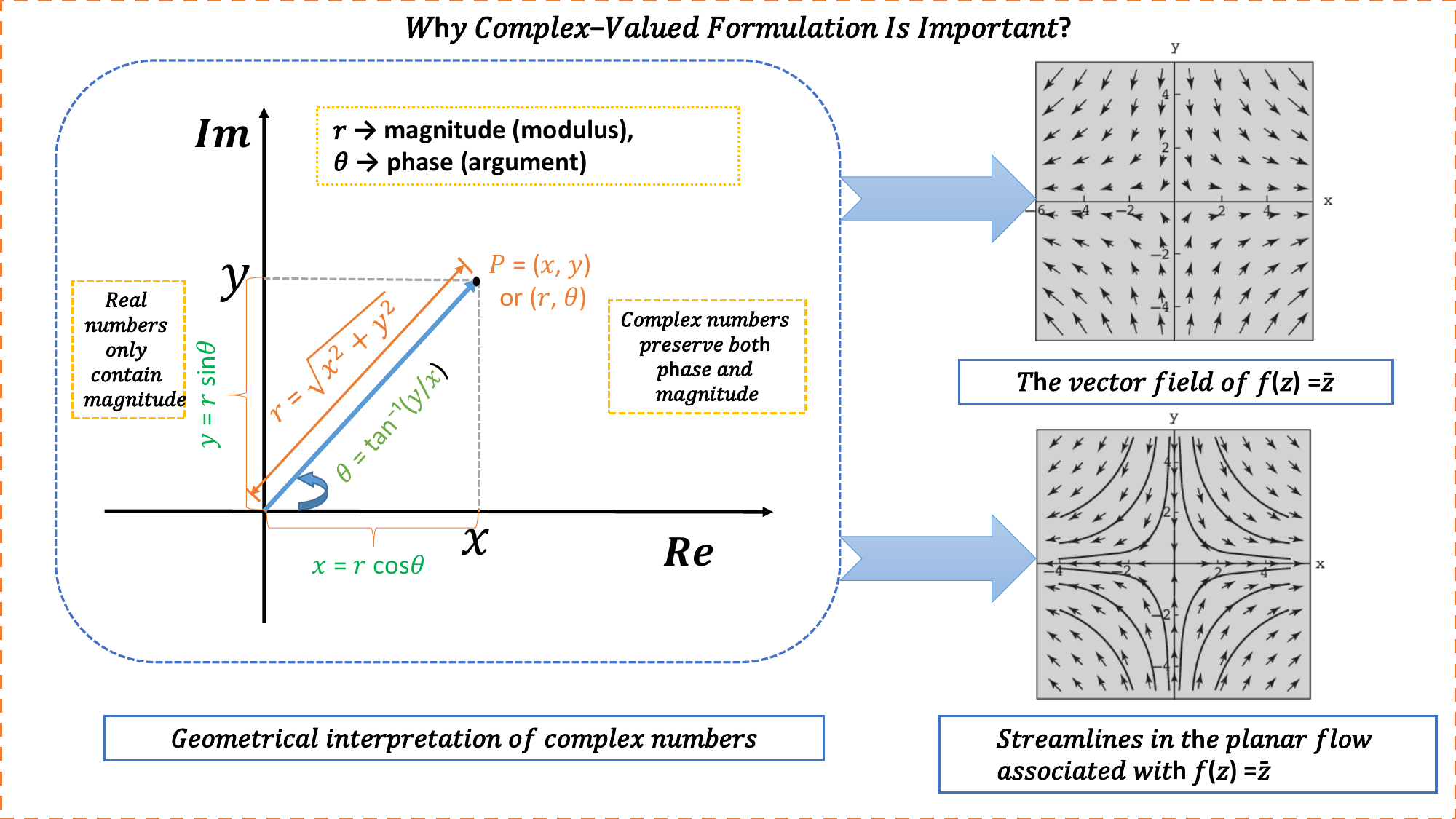}
\end{center}

\vspace{0.5em}

\noindent\textbf{Fig. \thefigure.}
Geometrical interpretation of complex numbers and their enhanced representation capability for vector fields, oscillatory dynamics, and streamline structures arising in complex physical systems and nonlinear PDEs.

\label{fig:fig1}

\subsection{Physical Interpretation: Complex Representation of Fluid Streamlines}

A key motivation for adopting complex-valued formulations in physical modeling arises from their natural ability to represent two-dimensional flow fields in a compact and physically consistent manner. In incompressible fluid dynamics, the velocity field $\mathbf{u}(x,y) = (u(x,y), v(x,y))$ can be directly associated with a complex-valued function
\begin{equation}
f(z) = u(x,y) + i\,v(x,y), \quad z = x + i y,
\end{equation}
where the real and imaginary components correspond to the horizontal and vertical velocity components, respectively. This representation enables the entire vector field to be encoded within a single complex variable, preserving both magnitude and directional information.

Moreover, in two-dimensional potential flows, it is common to introduce a complex potential function
\begin{equation}
W(z) = \phi(x,y) + i\,\psi(x,y),
\end{equation}
where $\phi(x,y)$ denotes the velocity potential and $\psi(x,y)$ represents the stream function. The streamlines of the flow are given by the level curves $\psi(x,y) = \text{constant}$, which describe the trajectories followed by fluid particles. 

As illustrated in Fig.~\ref{fig:fig1}, the complex representation provides a unified description of the flow field and its associated streamlines. The real and imaginary components are intrinsically coupled through the governing equations, allowing the model to capture both the intensity and the directional evolution of the flow. This is particularly important in problems involving vortical structures, wave propagation, and multi-scale interactions, where phase relationships play a critical role.

From a computational perspective, representing such flow structures using purely real-valued formulations often requires treating the velocity components separately, which may obscure the inherent coupling between them. In contrast, complex-valued representations preserve this coupling explicitly, leading to more faithful approximations of the underlying physics. 

Consequently, complex-valued neural network frameworks, such as the proposed CV-PINNs, are well suited for learning fluid flow dynamics, as they naturally encode both magnitude and phase information within a unified representation. This capability enhances the model’s ability to capture streamline patterns and intricate flow features, thereby improving accuracy and generalization in solving physically governed problems.

\section{Preliminary}
\label{sec:sec3}
\subsection{Complex-valued Neural Networks(CV-NNs)}
The early development of complex-valued neural networks (CVNNs) was hindered largely by the implications of Liouville's theorem, which states that any bounded function that is complex differentiable over the entire complex plane must be constant. As a consequence, the use of standard holomorphic formulations was seen as restrictive, and the loss functions required for practical learning were often nonholomorphic. Around the 1990s, this issue was regarded as a major obstacle, and it was argued by some researchers that the lack of holomorphic differentiability could prevent the proper analysis and training of CVNNs~\cite{tanaka2013complex}. This limitation was later addressed through the introduction of Wirtinger calculus~\cite{amin2013learning, fischer2002appendix}, which provides a generalized gradient framework for nonholomorphic functions and is now widely used for the optimization and training of complex-valued models.

\subsubsection{Liouville's Theorem}

Joseph Liouville completed his studies at the École Polytechnique in 1827. Following several years serving as an assistant at different academic institutions, including the École Centrale Paris, he was later appointed as a professor at the École Polytechnique in 1838.

\begin{definition}[Analytic Function]
\label{def:analytic_function}

A complex function $w = f(z)$ is said to be \emph{analytic at a point} $z_0$ if it is differentiable at $z_0$ and at every point in some neighborhood of $z_0$.

\end{definition}

\begin{definition}[Entire Function]
\label{def:entire_function_one}

A function $f:\mathbb{C} \rightarrow \mathbb{C}$ is called \emph{entire} if it is analytic at every point in the complex plane $\mathbb{C}$.

\end{definition}

\begin{definition}
\label{def:entire_function_two}

A function $f:\mathbb{C} \to \mathbb{C}$ is said to be entire if it is holomorphic for all $z \in \mathbb{C}$.

\end{definition}

\begin{definition}
\label{def:bounded_function}

A function $f$ is bounded if there exists a constant $M \in \mathbb{R}^{+}$ such that $|f(z)| < M$ for all $z \in \mathbb{C}$.

\end{definition}

\begin{theorem}[Cauchy Integral Theorem]
\label{thm:cauchy_integral_theorem}

If $f$ is analytic in a region $\mathcal{D}$, then for any closed contour $C$ contained in $\mathcal{D}$,
\begin{equation}
\oint_C f(z)\, dz = 0.
\label{eq:cauchy_theorem}
\end{equation}

\end{theorem}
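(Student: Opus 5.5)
The plan is to prove Theorem~\ref{thm:cauchy_integral_theorem} by first reducing it to a statement about real line integrals and then invoking Green's theorem. I will read ``region'' as a simply connected domain $\mathcal{D}$ and $C$ as a piecewise smooth simple closed contour. This reading is needed: without simple connectivity the statement fails, as $f(z)=1/z$ on $\mathbb{C}\setminus\{0\}$ shows. Write $f = u + iv$ and $dz = dx + i\,dy$. The contour integral then splits as
\begin{equation}
\oint_C f(z)\,dz = \oint_C (u\,dx - v\,dy) + i\oint_C (v\,dx + u\,dy).
\end{equation}
This reduces the claim to showing that two real line integrals vanish.

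Next I apply Green's theorem on the interior region $R$ enclosed by $C$, which lies in $\mathcal{D}$ by simple connectivity. This converts the two integrals into
\begin{equation}
\iint_R \left(-v_x - u_y\right)dA + i\iint_R \left(u_x - v_y\right)dA.
\end{equation}
Analyticity in the sense of Definition~\ref{def:analytic_function} gives the Cauchy--Riemann equations $u_x = v_y$ and $u_y = -v_x$ at every point of $\mathcal{D}$, so both integrands vanish identically and the result follows.

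The main obstacle is that Green's theorem requires $u,v\in C^1(\overline{R})$, i.e.\ continuity of $f'$, while Definition~\ref{def:analytic_function} only gives complex differentiability. I would handle this with the Goursat route, which avoids assuming continuity of $f'$. First, for a triangle $T\subset\mathcal{D}$, subdivide $T$ into four congruent subtriangles by joining the midpoints of its sides, and keep the one with the largest integral; iterating gives nested triangles with $|\oint_{T_n} f| \ge 4^{-n}|\oint_T f|$. These triangles shrink to a point $z_0$, where we expand $f(z) = f(z_0) + f'(z_0)(z-z_0) + o(|z-z_0|)$. The linear part integrates to zero because it has an explicit primitive. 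The remainder is bounded by $\varepsilon\,\mathrm{diam}(T_n)\,\mathrm{perim}(T_n) = \varepsilon\,4^{-n}\mathrm{diam}(T)\,\mathrm{perim}(T)$. Combining the two bounds forces $\oint_T f = 0$.

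From the triangle case I would build a primitive $F(z)=\int_{[z_*,z]} f$ on a disc, or on a star-shaped domain, and check that $F' = f$ there. Any closed contour integral of a function with a primitive is zero, which gives the theorem locally. Extending from star-shaped domains to a general simply connected $\mathcal{D}$ is the delicate point. There I would either restrict the statement to star-shaped or convex regions, or invoke homotopy invariance by covering a homotopy with small discs on which local primitives exist.
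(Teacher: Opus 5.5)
The paper states this theorem without proof. It is quoted as standard background, and in that subsection only Liouville's theorem, its generalized form, and the maximum principle are proved. So there is no argument of the paper's to compare yours with; judged on its own terms, your proposal is correct.

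Your added hypothesis is a necessary repair, not a convenience. With ``region'' read as an arbitrary domain, the statement is false by exactly your example, since $\oint_{|z|=1} dz/z = 2\pi i$ on $\mathbb{C}\setminus\{0\}$.

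The Green's-theorem computation checks out, signs included. You correctly notice that it needs $f'$ continuous, which Definition~\ref{def:analytic_function} does not supply. Once you run the Goursat route, however, the Green's-theorem step becomes redundant. Goursat on closed solid triangles, the primitive $F(z)=\int_{[z_*,z]} f$ on a domain star-shaped about $z_*$, and the fundamental theorem of calculus along $C$ already give the result. (Checking $F'=f$ uses Goursat on the solid triangle with vertices $z_*, z, z+h$, which lies in the domain once $[z,z+h]$ does, together with continuity of $f$.) This argument also works for arbitrary, not necessarily simple, closed contours, so your restriction to simple $C$ is needed only for the Green version.

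The one step you leave as a sketch is the passage from star-shaped to general simply connected $\mathcal{D}$. Your homotopy idea closes it directly if ``simply connected'' means every closed curve is null-homotopic in $\mathcal{D}$. Cover the image of a null-homotopy by finitely many discs carrying local primitives, using a Lebesgue-number subdivision of the parameter square, and check that the integral is unchanged across the grid.

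If instead you use the notion implicit in your Green step (the interior of every Jordan curve in $\mathcal{D}$ lies in $\mathcal{D}$), you also need the equivalence of the two definitions. That is a genuine topological fact and should be cited rather than assumed.

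Everything the paper later does with this theorem only needs the disc and star-shaped cases: Cauchy's formula on circles, and the proof of Liouville's theorem for entire functions on $\mathbb{C}$. So your fallback of restricting the statement to star-shaped regions loses nothing there.
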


\begin{theorem}[Cauchy Integral Formula]
\label{thm:cauchy_integral_formula}

If $f$ is analytic on and inside a closed contour $C$, and $z_0$ lies inside $C$, then
\begin{equation}
f(z_0)=\frac{1}{2\pi i}\oint_C \frac{f(z)}{z-z_0}\,dz.
\label{eq:cauchy_formula}
\end{equation}

\end{theorem}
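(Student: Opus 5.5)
The plan is to deduce the Cauchy Integral Formula (Theorem~\ref{thm:cauchy_integral_formula}) from the Cauchy Integral Theorem (Theorem~\ref{thm:cauchy_integral_theorem}) by a deformation-of-contour argument followed by a limiting estimate. Fix $z_0$ inside $C$ and define
\[
g(z)=\frac{f(z)}{z-z_0}.
\]
This function is analytic on and inside $C$ except at the single point $z_0$. For small $\rho>0$, let $C_\rho$ be the circle $|z-z_0|=\rho$, oriented positively and lying strictly inside $C$. The first step is to show that
\[
\oint_C g(z)\,dz=\oint_{C_\rho} g(z)\,dz .
\]
To do this, join $C$ to $C_\rho$ with two parallel cross-cuts (or a single cut traversed in both directions). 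This produces a closed contour that bounds a region free of $z_0$, on which $g$ is analytic. Theorem~\ref{thm:cauchy_integral_theorem} then gives zero for the integral around this contour. The contributions along the cuts cancel, because each cut is traversed in opposite directions, and the equality follows. I expect this step to be the main technical obstacle: making the cut argument rigorous for a general closed contour requires some care. I would handle it by assuming, as is standard at this level, that $C$ is a simple closed piecewise-smooth curve, and splitting the annular region into two simply connected pieces, to each of which the theorem applies.

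Next I would decompose the integrand as
\[
\oint_{C_\rho}\frac{f(z)}{z-z_0}\,dz
= f(z_0)\oint_{C_\rho}\frac{dz}{z-z_0}
+\oint_{C_\rho}\frac{f(z)-f(z_0)}{z-z_0}\,dz .
\]
The first integral is computed directly from the parametrization $z=z_0+\rho e^{i t}$, $t\in[0,2\pi]$, which gives $2\pi i$. For the second integral, continuity of $f$ at $z_0$, which holds because analyticity implies differentiability and hence continuity, ensures that for every $\varepsilon>0$ there is a $\rho$ with $|f(z)-f(z_0)|<\varepsilon$ on $C_\rho$. The ML-estimate then bounds the second integral by
\[
\frac{\varepsilon}{\rho}\cdot 2\pi\rho = 2\pi\varepsilon .
\]

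Finally, the left-hand side $\oint_C g\,dz$ does not depend on $\rho$. Letting $\varepsilon\to 0$ therefore shows that the second integral vanishes, so
\[
\oint_C \frac{f(z)}{z-z_0}\,dz = 2\pi i\,f(z_0).
\]
Dividing by $2\pi i$ yields the formula stated in Theorem~\ref{thm:cauchy_integral_formula}.
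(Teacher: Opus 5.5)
The paper gives no proof of this statement. It is quoted, alongside Cauchy's theorem and the derivative formula, as standard background from the cited complex-analysis text, and it is used only to justify the coefficient formula in the proof of Liouville's theorem. So there is no argument in the paper to compare against.

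Your argument is the standard textbook proof, and it is correct. You shrink $C$ to a small circle $C_\rho$ using cross-cuts and Cauchy's theorem, compute $\oint_{C_\rho}(z-z_0)^{-1}\,dz=2\pi i$, and kill the remainder using continuity of $f$ at $z_0$ and the ML-estimate. The closing step is also sound: the remainder integral does not depend on $\rho$, yet it is bounded by $2\pi\varepsilon$ for all sufficiently small $\rho$, so it vanishes.

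Two remarks on hypotheses.

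First, assuming $C$ is a simple, positively oriented, piecewise-smooth closed curve is not just a convenience for rigor; the statement needs it. For a general closed contour, ``inside'' is undefined, and the right-hand side becomes $n(C,z_0)\,f(z_0)$, where $n$ is the winding number. Your cross-cut step also tacitly uses that $C$ is traversed in the same counterclockwise sense as $C_\rho$, so state this explicitly.

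Second, you need Cauchy's theorem in its correct form: for a simply connected domain, or for a function analytic on and inside a simple closed contour. The paper's Theorem~\ref{thm:cauchy_integral_theorem} omits simple connectivity. Read literally, it would apply to $g$ on the punctured region and give $\oint_C g\,dz=0$, i.e.\ $f(z_0)=0$. Applying the theorem only to the two simply connected pieces of the split annulus, as you propose, is therefore essential rather than cosmetic.
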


\begin{corollary}
\label{cor:cauchy_derivative_formula}

The higher-order derivatives of $f$ are given by
\begin{equation}
f^{(n)}(z_0)=\frac{n!}{2\pi i}\oint_C \frac{f(z)}{(z-z_0)^{n+1}}\,dz.
\label{eq:cauchy_derivative}
\end{equation}

\end{corollary}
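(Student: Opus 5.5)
The plan is to get the corollary from the Cauchy Integral Formula (Theorem~\ref{thm:cauchy_integral_formula}) by differentiating under the integral sign with respect to $z_0$, and then to close the argument by induction on $n$. Fix the contour $C$ and a point $z_0$ inside it. Choose $\delta>0$ so that the closed disc $\overline{D(z_0,2\delta)}$ lies inside $C$. Then $|z-w|\ge \delta$ for every $z\in C$ and every $w$ with $|w-z_0|<\delta$. By Theorem~\ref{thm:cauchy_integral_formula}, for every such $w$,
\[
f(w)=\frac{1}{2\pi i}\oint_C \frac{f(z)}{z-w}\,dz .
\]
The base case $n=0$ is exactly this formula. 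The inductive hypothesis is
\[
f^{(n)}(w)=\frac{n!}{2\pi i}\oint_C \frac{f(z)}{(z-w)^{n+1}}\,dz
\]
for all $w$ near $z_0$, which I want to upgrade to the same statement with $n+1$.

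For the inductive step I form the difference quotient $\bigl(f^{(n)}(z_0+h)-f^{(n)}(z_0)\bigr)/h$ for $0<|h|<\delta$ and write it as one contour integral. Its integrand is $\frac{n!}{2\pi i}\,f(z)$ times
\[
\frac{1}{h}\left[\frac{1}{(z-z_0-h)^{n+1}}-\frac{1}{(z-z_0)^{n+1}}\right].
\]
Set $a=z-z_0-h$ and $b=z-z_0$, so that $b-a=h$. The algebraic identity
\[
b^{n+1}-a^{n+1}=(b-a)\sum_{k=0}^{n}a^{k}b^{n-k}
\]
turns the bracketed expression into
\[
\sum_{k=0}^{n}\frac{1}{a^{\,n+1-k}\,b^{\,k+1}}.
\]
This sum converges pointwise to $(n+1)/(z-z_0)^{n+2}$ as $h\to0$.

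The main obstacle is justifying that the limit may be taken inside the integral, i.e.\ that this convergence is uniform on $C$. I would handle this directly. Since $f$ is continuous on the compact set $C$, it is bounded there by some $M$. Both $|a|$ and $|b|$ stay at least $\delta$, and a telescoping estimate gives
\[
\left|\frac{1}{a^{m}}-\frac{1}{b^{m}}\right|\le \frac{m\,|h|}{\delta^{m+1}}.
\]
Hence the difference between the integrand and its limit is bounded by $C_n M|h|/\delta^{n+3}$, with a constant $C_n$ depending only on $n$. The ML-inequality (length of $C$ times the supremum of the integrand) then shows that the error tends to $0$ as $h\to0$. In the limit, the difference quotient therefore equals
\[
\frac{(n+1)!}{2\pi i}\oint_C \frac{f(z)}{(z-z_0)^{n+2}}\,dz ,
\]
which is the formula for $n+1$.

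Since $z_0$ was an arbitrary interior point, the formula holds at every point inside $C$, and this closes the induction. As a by-product, every derivative $f^{(n)}$ exists, so an analytic function is infinitely differentiable inside $C$.
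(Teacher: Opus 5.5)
Your proof is correct: the difference-quotient induction, the factorization $b^{n+1}-a^{n+1}=(b-a)\sum_{k=0}^{n}a^{k}b^{n-k}$, and the uniform bound $C_nM|h|/\delta^{n+3}$ (one can take $C_n=(n+1)(n+2)/2$), combined with the ML-inequality, rigorously justify passing the limit inside the integral. The paper gives no separate proof and presents the corollary simply as a consequence of the Cauchy Integral Formula (Theorem~\ref{thm:cauchy_integral_formula}); your argument is the standard differentiation-under-the-integral-sign derivation that this framing implicitly relies on.
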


\begin{theorem}[Liouville's Theorem \cite{ponnusamy2006complex}]
\label{thm:liouville}

If a function $f:\mathbb{C} \to \mathbb{C}$ is entire and bounded, then $f$ must be constant. That is,
\begin{equation}
\left( f \text{ is entire and } \exists M>0 \text{ such that } |f(z)|<M,\; \forall z \in \mathbb{C} \right) \Rightarrow f(z)=c,\; c \in \mathbb{C}.
\label{eq:liouville}
\end{equation}

\end{theorem}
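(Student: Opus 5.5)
The plan is to use the Cauchy derivative formula of Corollary~\ref{cor:cauchy_derivative_formula} with $n=1$ on large circles, and show that $f'$ vanishes identically. Since $f$ is entire, Definition~\ref{def:entire_function_one} lets us fix an arbitrary point $z_0\in\mathbb{C}$ and any radius $R>0$. The function $f$ is then analytic on and inside the circle $C_R=\{z:|z-z_0|=R\}$, and the corollary gives
\begin{equation*}
f'(z_0)=\frac{1}{2\pi i}\oint_{C_R}\frac{f(z)}{(z-z_0)^{2}}\,dz .
\end{equation*}

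Next we estimate this integral with the standard ML-inequality. On $C_R$ we have $|z-z_0|=R$ and, by the boundedness hypothesis in \eqref{eq:liouville}, $|f(z)|<M$. The length of $C_R$ is $2\pi R$. Hence
\begin{equation*}
|f'(z_0)|\le \frac{1}{2\pi}\cdot\frac{M}{R^{2}}\cdot 2\pi R=\frac{M}{R}.
\end{equation*}
The left-hand side does not depend on $R$, while $R$ is arbitrary because $f$ is analytic on all of $\mathbb{C}$. Letting $R\to\infty$ gives $f'(z_0)=0$. Since $z_0$ was arbitrary, $f'\equiv 0$ on $\mathbb{C}$.

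Finally, we pass from $f'\equiv0$ to $f$ being constant. For any $z\in\mathbb{C}$, integrate $f'$ along the straight segment $\gamma$ from $0$ to $z$:
\begin{equation*}
f(z)-f(0)=\int_\gamma f'(w)\,dw=0 .
\end{equation*}
Equivalently, writing $f=u+iv$, the Cauchy--Riemann equations give $u_x=u_y=v_x=v_y=0$, so $u$ and $v$ are constant on the connected set $\mathbb{R}^2$. Either way $f(z)=c$ with $c=f(0)\in\mathbb{C}$.

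The step that needs the most care is the validity of the derivative formula on arbitrarily large contours. This is exactly where entireness, and not just analyticity near $z_0$, is used: the radius can be taken as large as we like only because $f$ is analytic everywhere. Once that is granted, the rest is a direct estimate.
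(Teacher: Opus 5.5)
Your argument is correct. Its core is the same as the paper's: the Cauchy derivative formula on a circle of radius $R$, the ML-estimate, and then $R\to\infty$. The decomposition differs, though. The paper expands $f=\sum_{k\ge0}a_kz^k$ about the origin, bounds every Taylor coefficient by $|a_k|\le M/r^k$ at that single point, and reads off $a_k=0$ for $k\ge1$. You use only the case $n=1$ of Corollary~\ref{cor:cauchy_derivative_formula}, but at every point $z_0$, obtain $f'\equiv0$, and close with the elementary fact that a function with vanishing derivative on the connected set $\mathbb{C}$ is constant. The paper's route tacitly relies on the theorem that the Taylor series of an entire function converges to $f$ on all of $\mathbb{C}$, which is not among the preliminaries it lists. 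Yours needs nothing beyond the stated corollary and the fundamental theorem of calculus for contour integrals (or the Cauchy--Riemann equations), so it is more self-contained relative to what the paper sets up. In exchange, the paper's coefficient formulation generalizes in one line to polynomial growth: if $|f(z)|\le M(1+|z|)^n$, the same estimate gives $a_k=0$ for $k>n$, so $f$ is a polynomial of degree at most $n$. One small point in your favour: you note that $f'(z_0)$ is independent of $R$, so the bound forces $f'(z_0)=0$ outright, whereas the paper's phrasing ``$a_k\to0$'' should really read $a_k=0$.
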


\begin{proof}

Since $f$ is entire, it admits a Taylor series expansion about the origin:
\begin{equation}
f(z) = \sum_{k=0}^{\infty} a_k z^k.
\label{eq:taylor}
\end{equation}

Using the Cauchy integral formula~\ref{thm:cauchy_integral_formula} for derivatives, the coefficients $a_k$ can be expressed as
\begin{equation}
a_k = \frac{f^{(k)}(0)}{k!} = \frac{1}{2\pi i} \oint_{C_r} \frac{f(z)}{z^{k+1}}\, dz,
\label{eq:ak}
\end{equation}
where $C_r$ denotes a circle of radius $r>0$ centered at the origin.

If $f$ is bounded, i.e., $|f(z)| \leq M$, then
\begin{align}
|a_k|
&\leq
\frac{1}{2\pi}
\oint_{C_r}
\frac{|f(z)|}{|z|^{k+1}}
|dz|
\nonumber \\
&\leq
\frac{1}{2\pi}
\oint_{C_r}
\frac{M}{r^{k+1}}
|dz|
\nonumber \\
&=
\frac{M}{2\pi r^{k+1}}
\cdot
(2\pi r)
\nonumber \\
&=
\frac{M}{r^k}.
\label{eq:cauchy_ineq}
\end{align}

Since $r$ is arbitrary, taking the limit $r \to \infty$ yields $a_k \to 0$ for all $k \geq 1$. Therefore, the series reduces to
\begin{equation}
f(z)=a_0,
\end{equation}
which implies that $f$ is constant.

\end{proof}

\medskip

\begin{theorem}[Generalized Liouville's Theorem \cite{ponnusamy2006complex}]
\label{thm:generalized_liouville}

A non-constant entire function comes arbitrarily close to every complex number.

\end{theorem}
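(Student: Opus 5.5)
We argue by contradiction and reduce the statement to Liouville's Theorem~\ref{thm:liouville}. This is the Casorati--Weierstrass-type statement for entire functions: the image $f(\mathbb{C})$ is dense in $\mathbb{C}$.

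Suppose $f$ is a non-constant entire function that does \emph{not} come arbitrarily close to every complex number. Then there exist $w_0\in\mathbb{C}$ and $\varepsilon>0$ such that
\begin{equation}
|f(z)-w_0|\geq \varepsilon \quad \text{for all } z\in\mathbb{C}.
\end{equation}
In particular $f(z)-w_0$ never vanishes. We then introduce the auxiliary function
\begin{equation}
g(z)=\frac{1}{f(z)-w_0}.
\end{equation}

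The key steps, in order, are as follows.
\begin{enumerate}
\item[(i)] Check that $g$ is entire. Since $f-w_0$ is entire and has no zeros, its reciprocal is complex differentiable at every point, by the quotient rule for holomorphic functions. Hence $g$ satisfies Definition~\ref{def:entire_function_one}.
\item[(ii)] Bound $g$. From the inequality above, $|g(z)|\leq 1/\varepsilon$ for all $z$. Taking $M=2/\varepsilon$ gives the strict bound required in Definition~\ref{def:bounded_function}.
\item[(iii)] Apply Theorem~\ref{thm:liouville}. This yields $g\equiv c$ for some constant $c$, and $c\neq 0$ because $g$ is a reciprocal.
\item[(iv)] Invert. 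We get $f(z)=w_0+1/c$, so $f$ is constant, contradicting the hypothesis.
\end{enumerate}

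The main point needing care is step (i): we must be sure the reciprocal introduces no singularities. This is exactly where the hypothesis $|f-w_0|\geq\varepsilon>0$ is used, since it rules out zeros everywhere rather than merely on a bounded set. Everything else is a direct invocation of the already-established Liouville theorem. We also interpret ``comes arbitrarily close to every complex number'' precisely as: for every $w\in\mathbb{C}$ and every $\varepsilon>0$ there is $z$ with $|f(z)-w|<\varepsilon$. The contrapositive assumption above is then its exact negation.
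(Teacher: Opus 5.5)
Your proof is correct and follows the paper's argument exactly: assume $|f(z)-w_0|\geq\varepsilon$ for all $z$, apply Liouville's theorem to the bounded entire function $g=1/(f-w_0)$, and invert to conclude that $f$ is constant. Your extra details, namely taking $M=2/\varepsilon$ to meet the strict bound in Definition~\ref{def:bounded_function} and noting $c\neq 0$, fill in points the paper leaves implicit.
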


\begin{proof}

Suppose that $f(z)$ is entire and that there exists a complex number $z_0$ and a constant $\epsilon > 0$ such that
\begin{equation}
|f(z) - z_0| \geq \epsilon, \quad \forall z \in \mathbb{C}.
\end{equation}

Then define the function
\begin{equation}
g(z) = \frac{1}{f(z) - z_0}.
\end{equation}

Since $f(z)$ is entire and $f(z) \neq z_0$ for all $z \in \mathbb{C}$, it follows that $g(z)$ is entire. Moreover,
\begin{equation}
|g(z)| = \frac{1}{|f(z) - z_0|} \leq \frac{1}{\epsilon}.
\end{equation}

Thus, $g(z)$ is bounded. By Liouville's theorem~\ref{thm:liouville}, $g(z)$ must be a constant. Hence,
\begin{equation}
f(z) = \frac{1}{g(z)} + z_0
\end{equation}
is also a constant, which completes the proof.

\end{proof}

\begin{theorem}
\label{thm:harmonic_bounded_constant}

A function harmonic and bounded in $\mathbb{C}$ must be a constant.

\end{theorem}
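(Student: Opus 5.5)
The plan is to reduce the statement to Liouville's theorem (Theorem~\ref{thm:liouville}) by building an entire function out of the harmonic one. Let $u:\mathbb{R}^2\cong\mathbb{C}\to\mathbb{R}$ be harmonic and bounded, say $|u|\le M$. Since $\mathbb{C}$ is simply connected, $u$ has a harmonic conjugate $v$, so $f=u+iv$ is entire. I would construct $v$ explicitly so the argument is self-contained: set
\begin{equation*}
v(x,y)=\int_{(0,0)}^{(x,y)} \left(-u_y\,dx+u_x\,dy\right),
\end{equation*}
which is path-independent because $\partial_y(-u_y)=\partial_x(u_x)$ is exactly $\Delta u=0$. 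Then $v_x=-u_y$ and $v_y=u_x$ are the Cauchy--Riemann equations, and since $u$ is smooth, $f$ is holomorphic on all of $\mathbb{C}$.

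The difficulty is that $f$ need not be bounded, since we know nothing about $v$. So I will not apply Liouville to $f$ directly. Instead I would compose with the exponential: $g(z)=e^{f(z)}$ is entire and satisfies $|g(z)|=e^{u(z)}\le e^{M}$. Thus $g$ is bounded, and Theorem~\ref{thm:liouville} makes $g$ constant. Then $g'=f'e^{f}\equiv0$, and because $e^{f}$ never vanishes we get $f'\equiv0$. Hence $f$ is constant, and so is $u=\operatorname{Re} f$.

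As an alternative I could use $h=1/(f+M+1)$, which is entire because $\operatorname{Re} f\ge -M>-M-1$, and bounded by $1$. This is essentially the idea in the proof of Theorem~\ref{thm:generalized_liouville}. Either route works. The exponential trick needs only an upper bound on $u$, so it actually proves slightly more.

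The main obstacle is the existence of the global harmonic conjugate. This uses simple connectivity of $\mathbb{C}$ (Green's theorem applied to the closed form $-u_y\,dx+u_x\,dy$, equivalently Theorem~\ref{thm:cauchy_integral_theorem} applied to $u_x-iu_y$). It also uses enough regularity of $u$ to differentiate under the line integral. I would state that harmonic functions are taken $C^2$, which suffices. Everything after that is a direct application of Theorem~\ref{thm:liouville}.
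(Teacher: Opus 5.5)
Your proof is correct and complete. The paper states this theorem without any proof; it is listed, together with Theorem~\ref{thm:real_imaginary_bounded}, as a standard fact between the generalized Liouville theorem and the mean value property. So there is no argument of the paper's to match, but yours is built entirely from the paper's own toolkit.

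The device $g=e^{f}$ with $|g|=e^{u}$ is exactly what the paper uses in its alternative proof of the maximum principle (Theorem~\ref{thm:maximum_principle_harmonic}). Your alternative $h=1/(f+M+1)$ is literally the paper's proof of Theorem~\ref{thm:generalized_liouville} with $z_0=-M-1$ and $\epsilon=1$. You could simply cite that theorem: since $|f(z)-(-M-1)|\ge \mathrm{Re}\,f(z)+M+1\ge 1$, the entire function $f$ never comes within distance $1$ of $-M-1$.

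Your remark that only a one-sided bound is needed is worth keeping. Applying the exponential step to $\pm F$ and $\pm iF$ for an arbitrary entire $F$ also proves Theorem~\ref{thm:real_imaginary_bounded}, which the paper likewise leaves unproved.

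The placement of Theorem~\ref{thm:mean_value_property} suggests the other natural route, Nelson's argument. Averaging the circle mean value identity over radii gives $u(z_0)=\frac{1}{\pi R^2}\int_{D(z_0,R)}u\,dA$. Two discs of radius $R$ centred at $z_1,z_2$ differ on a set of area at most $4\pi R|z_1-z_2|$. Hence $|u(z_1)-u(z_2)|\le 4M|z_1-z_2|/R\to 0$ as $R\to\infty$.

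That route needs no harmonic conjugate, so the simple-connectivity and regularity issues you flag as the main obstacle disappear. It also works unchanged in $\mathbb{R}^n$. In this simple form, however, it uses the two-sided bound $|u|\le M$. Your route is specific to the plane but gives the one-sided strengthening for free.
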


\begin{theorem}
\label{thm:real_imaginary_bounded}

If the real or imaginary part of an entire function is bounded above or below by a real number $M$, then the function is a constant.

\end{theorem}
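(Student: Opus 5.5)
The plan is to reduce Theorem~\ref{thm:real_imaginary_bounded} to Liouville's Theorem~\ref{thm:liouville}: compose $f$ with an exponential and a suitable rotation or reflection so that the one-sided bound on $\operatorname{Re} f$ or $\operatorname{Im} f$ becomes a bound on the modulus of a new entire function. Write $f = u + iv$ with $f$ entire. There are four cases ($u \le M$, $u \ge M$, $v \le M$, $v \ge M$). They all reduce to the first by replacing $f$ with $-f$, $-if$, or $if$. Each of these is entire, and it is constant exactly when $f$ is. For example, $\operatorname{Re}(-if) = v$, so $v \le M$ becomes a bound on the real part of $-if$. A lower bound $u \ge M$ becomes $\operatorname{Re}(-f) \le -M$. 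So it suffices to treat $u(x,y) \le M$ for all $z\in\mathbb{C}$.

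In that case set $g(z) = e^{f(z)}$. It is entire, since it is a composition of entire functions. It satisfies
\begin{equation*}
|g(z)| = e^{\operatorname{Re} f(z)} = e^{u(x,y)} \le e^{M},
\end{equation*}
so $g$ is bounded. By Theorem~\ref{thm:liouville}, $g$ is constant. Hence $g'(z) = f'(z)e^{f(z)} \equiv 0$. Since $e^{f(z)} \neq 0$ everywhere, we get $f' \equiv 0$ on the connected set $\mathbb{C}$, so $f$ is constant.

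The one step needing some care is the passage from ``$e^{f}$ constant'' to ``$f$ constant''. One cannot simply take a logarithm, because $\log$ is multivalued. The derivative argument above avoids this. An alternative is to note that $f$ takes values in the discrete set $\log c + 2\pi i\mathbb{Z}$ and $f$ is continuous on the connected set $\mathbb{C}$, which again forces $f$ to be constant.

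A second route would be to apply Theorem~\ref{thm:generalized_liouville} directly, since an upper bound on $u$ means $f$ misses a whole half-plane. However, that theorem as stated only gives density of the image, and it is no shorter, so I would keep the exponential argument as the main proof.
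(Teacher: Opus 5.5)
The paper states Theorem~\ref{thm:real_imaginary_bounded} without proof, as a standard textbook fact alongside Theorem~\ref{thm:harmonic_bounded_constant}, so there is no argument of the authors' to compare with. Your proof is correct and complete.

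\begin{itemize}
\item The case reduction via $f\mapsto -f,\,-if,\,if$ is right.
\item $e^{f}$ is entire with $|e^{f}|=e^{u}\le e^{M}$. Take any constant larger than $e^{M}$ to match the strict inequality in the paper's Theorem~\ref{thm:liouville}.
\item The passage from ``$e^{f}$ constant'' to ``$f$ constant'' through $f'e^{f}\equiv 0$ is handled properly.
\end{itemize}

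This is the same device the paper uses in its alternative proof of Theorem~\ref{thm:maximum_principle_harmonic}, with $g=e^{f}$ and $|g|=e^{u}$.

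One correction to your closing remark. The route through Theorem~\ref{thm:generalized_liouville} is not weaker, and it is slightly shorter. If $u\le M$, then for all $z$
\[
|f(z)-(M+1)|\ge \mathrm{Re}\bigl(M+1-f(z)\bigr)\ge 1 .
\]
This is exactly the hypothesis that the paper's proof of Theorem~\ref{thm:generalized_liouville} shows forces $f$ to be constant, via the bounded entire function $1/(f-(M+1))$. Density of the image is precisely what omitting an open half-plane contradicts. This route also avoids the ``$e^{f}$ constant $\Rightarrow f$ constant'' step altogether. Either argument is fine.
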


\begin{theorem}[Mean Value Property]
\label{thm:mean_value_property}

Suppose $u(z)$ is harmonic in a domain containing the disk $|z-z_0|\leq R$. Then
\begin{equation}
u(z_0)=\frac{1}{2\pi}\int_{0}^{2\pi} u\left(z_0+Re^{i\theta}\right)\, d\theta.
\end{equation}

\end{theorem}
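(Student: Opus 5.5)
The plan is to derive the Mean Value Property from the Cauchy Integral Formula (Theorem~\ref{thm:cauchy_integral_formula}), by realizing $u$ as the real part of an analytic function on a slightly larger disk.

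\textbf{Step 1: harmonic conjugate.} Since $u$ is harmonic on a domain containing the closed disk $|z-z_0|\le R$, I would first pick $R'>R$ such that the open disk $D'=\{|z-z_0|<R'\}$ still lies in that domain. This is possible because a compact set has positive distance to the complement of an open set. A disk is simply connected, so $u$ has a harmonic conjugate $v$ on $D'$. I would construct it explicitly by
\begin{equation*}
v(x,y)=\int_{\gamma}\left(-u_y\,dx+u_x\,dy\right),
\end{equation*}
with $\gamma$ any path in $D'$ from $z_0$ to $(x,y)$. The integrand is closed precisely because $u_{xx}+u_{yy}=0$, so on the simply connected set $D'$ the integral is path independent. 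The Cauchy--Riemann equations then hold for $u+iv$. Hence $f=u+iv$ is analytic on $D'$, and so in particular on and inside the circle $C:\ |z-z_0|=R$.

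\textbf{Step 2: Cauchy formula with a circle parametrization.} I would apply Theorem~\ref{thm:cauchy_integral_formula} to $f$ on $C$, parametrized by $z=z_0+Re^{i\theta}$ for $\theta\in[0,2\pi]$. Then $dz=iRe^{i\theta}\,d\theta$ and $z-z_0=Re^{i\theta}$, and the factors cancel:
\begin{equation*}
f(z_0)=\frac{1}{2\pi i}\int_0^{2\pi}\frac{f(z_0+Re^{i\theta})}{Re^{i\theta}}\,iRe^{i\theta}\,d\theta=\frac{1}{2\pi}\int_0^{2\pi} f\left(z_0+Re^{i\theta}\right)d\theta .
\end{equation*}

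\textbf{Step 3: take real parts.} Taking real parts of both sides, and using that $\operatorname{Re}$ commutes with integration in $\theta$, gives exactly the stated identity for $u$.

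The main obstacle is Step 1. The paper has not stated the existence of harmonic conjugates, so it must be justified. The key point is the enlargement from the closed disk to an open disk $D'$, so that $f$ is analytic on a neighbourhood of $C$ as Theorem~\ref{thm:cauchy_integral_formula} requires. I would include the line-integral construction above with a brief check that $v_x=-u_y$ and $v_y=u_x$. This check follows from differentiating the path-independent integral along horizontal and vertical segments.
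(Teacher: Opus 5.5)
The paper does not prove this theorem. It is quoted as a standard fact from the cited complex-analysis text and is used only afterwards, in the proof of the maximum principle, so there is no in-paper argument to compare yours against.

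Your proof is correct and complete as it stands. Each step holds:
\begin{itemize}
\item Compactness of the closed disk gives $R'>R$ with the disk of radius $R'$ still inside the domain.
\item The form $-u_y\,dx+u_x\,dy$ is closed exactly when $\Delta u=0$, and $u\in C^2$ gives the regularity needed for path independence on the disk, so $v$ is well defined.
\item The identities $v_x=-u_y$, $v_y=u_x$ hold with continuous partials, so $f=u+iv$ is analytic on a neighbourhood of the closed disk and the Cauchy integral formula applies.
\item The factor $Re^{i\theta}$ cancels in the Cauchy formula, and taking real parts gives the identity for $u$.
\end{itemize}

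For comparison, the other standard route avoids harmonic conjugates altogether. Set $\phi(r)=\frac{1}{2\pi}\int_0^{2\pi}u(z_0+re^{i\theta})\,d\theta$ for $0<r\le R$. Differentiating under the integral sign and applying the divergence theorem gives $\phi'(r)=\frac{1}{2\pi r}\int_{|z-z_0|<r}\Delta u\,dA=0$. Since $\phi(r)\to u(z_0)$ as $r\to 0^+$ by continuity, $\phi$ is constant and equals $u(z_0)$.

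That argument stays within real analysis and works verbatim for harmonic functions on $\mathbb{R}^n$. Your harmonic-conjugate step, by contrast, is specific to two dimensions. On the other hand, your route reuses the paper's own Cauchy integral formula. As a by-product, your Step 1 also justifies the local existence of an analytic $f=u+iv$ with $\Re f=u$, which the paper's alternative proof of the maximum principle simply assumes.
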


\begin{theorem}[Maximum Principle for Harmonic Functions \cite{ponnusamy2006complex}]
\label{thm:maximum_principle_harmonic}

Let $u(z)$ be harmonic in a domain $D$. If $u(z)$ is nonconstant, then it cannot attain either a maximum or a minimum value at any interior point of $D$.

\end{theorem}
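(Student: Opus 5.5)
The plan is to prove the maximum principle directly from the Mean Value Property (Theorem~\ref{thm:mean_value_property}), using a connectedness argument. It suffices to treat the maximum, because if $u$ is harmonic then so is $-u$, and a minimum of $u$ is a maximum of $-u$. So suppose, for contradiction, that $u$ is harmonic and nonconstant in the domain $D$ and attains its maximum value $M$ at some interior point $z_0\in D$. The goal is to show that $u\equiv M$ on $D$.

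\textbf{Step 1 (local constancy).} Fix any point $a\in D$ with $u(a)=M$, and choose $R>0$ so that the closed disk $|z-a|\le R$ lies in $D$. For every $0<\rho\le R$, Theorem~\ref{thm:mean_value_property} gives
\begin{equation*}
0=\frac{1}{2\pi}\int_0^{2\pi}\bigl(M-u(a+\rho e^{i\theta})\bigr)\,d\theta .
\end{equation*}
The integrand is continuous and nonnegative, since $M$ is the maximum of $u$. Hence it vanishes identically, so $u=M$ on every circle $|z-a|=\rho$ with $0<\rho\le R$. Therefore $u\equiv M$ on the whole disk $|z-a|\le R$.

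\textbf{Step 2 (connectedness).} Let $S=\{z\in D: u(z)=M\}$. This set is nonempty, because $z_0\in S$. Step~1 shows that $S$ is open. Since $u$ is continuous, $S=u^{-1}(\{M\})$ is relatively closed in $D$. A domain is connected, so $S=D$. Thus $u\equiv M$ on $D$, which contradicts the assumption that $u$ is nonconstant.

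\textbf{Main obstacle.} The only delicate point is Step~1, where we pass from "the average equals the maximum" to "the function equals the maximum on the whole circle". This rests on the fact that a continuous nonnegative function with zero integral over $[0,2\pi]$ must vanish identically. Continuity of $u$ is available because harmonic functions are $C^2$. We must also make sure the disk lies in $D$ so that Theorem~\ref{thm:mean_value_property} applies, which holds because $D$ is open. Finally, connectedness of the domain is essential in Step~2: without it the statement fails, since $u$ could be constant on one component and not on another.
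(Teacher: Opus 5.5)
Your proof is correct and is essentially the paper's first argument: reduce the minimum case to $-u$, apply the mean value property at a maximum point, conclude that a continuous nonnegative integrand with zero integral vanishes, obtain local constancy, and then use the connectedness of $D$. The only difference is the globalization step: you use an open--closed argument on $\{z\in D: u(z)=M\}$, which is valid because $M$ is the global maximum, while the paper appeals to the uniqueness property of harmonic functions and also sketches an alternative proof via $g=e^{f}$ and the maximum modulus principle.
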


\begin{proof}

Since the minimum principle follows immediately from the maximum principle by considering the harmonic function $-u(z)$, it is sufficient to prove only the maximum case.

Assume that $u(z)$ attains its maximum value at some interior point $z_0 \in D$. Let
\[
0<r\leq \mathrm{dist}(z_0,\partial D),
\]
so that the closed disk centered at $z_0$ with radius $r$ is completely contained in $D$. By the Mean Value Property~\ref{thm:mean_value_property},
\begin{equation}
u(z_0)=\frac{1}{2\pi}\int_{0}^{2\pi}u(z_0+re^{i\theta})\,d\theta.
\end{equation}

Since $u(z_0)$ is the maximum value, it follows that
\[
u(z_0)-u(z_0+re^{i\theta})\geq0
\]
for all $\theta\in[0,2\pi]$. Therefore,
\begin{equation}
\frac{1}{2\pi}\int_{0}^{2\pi}
\left[u(z_0)-u(z_0+re^{i\theta})\right]d\theta =0.
\end{equation}

Because the integrand is continuous and nonnegative, it must vanish identically. Hence,
\[
u(z_0)=u(z_0+re^{i\theta}),
\qquad 0\leq\theta\leq2\pi.
\]

Thus, $u(z)$ is constant on a neighborhood of $z_0$. By the connectedness of the domain and the uniqueness property of harmonic functions, $u(z)$ must be constant throughout $D$, contradicting the assumption that $u(z)$ is nonconstant.

Therefore, a nonconstant harmonic function cannot attain an interior maximum. Consequently, it also cannot attain an interior minimum.

\medskip

An alternative proof can be obtained using analytic functions. Let
\[
f(z)=u(z)+iv(z)
\]
be an analytic function in a neighborhood of $z_0$, where $u(z)$ is harmonic. Define
\[
g(z)=e^{f(z)}.
\]

Then,
\[
|g(z)|=e^{u(z)}.
\]

If $u(z)$ attains a maximum at $z_0$, then $|g(z)|$ also attains a maximum at $z_0$. By the Maximum Modulus Principle for analytic functions, $g(z)$ must be constant in that neighborhood. Consequently, $u(z)$ is constant in a neighborhood of $z_0$, and therefore constant throughout the domain $D$, again yielding a contradiction.

Hence, the proof is complete.

\end{proof}

\subsubsection{Wirtinger Calculus for Complex-Valued Optimization}

Wirtinger calculus provides a convenient mathematical framework for differentiating complex-valued functions and plays an important role in complex-valued neural network optimization. Unlike standard complex analysis, which is restricted to holomorphic functions, the Wirtinger formulation allows differentiation of more general complex-valued mappings. For a complex variable
\[
z=x+iy,
\]
and a complex-valued function $f(z)$, the Wirtinger derivatives with respect to $z$ and its complex conjugate $\bar{z}$ are defined as \cite{barrachina2023theory}
\begin{equation}
\frac{\partial f}{\partial z}
=
\frac{1}{2}
\left(
\frac{\partial f}{\partial x}
-
i\frac{\partial f}{\partial y}
\right),
\qquad
\frac{\partial f}{\partial \bar{z}}
=
\frac{1}{2}
\left(
\frac{\partial f}{\partial x}
+
i\frac{\partial f}{\partial y}
\right).
\label{eq:wirtinger_derivative}
\end{equation}

From Eq.~\eqref{eq:wirtinger_derivative}, the derivatives with respect to the real and imaginary components can be expressed as
\begin{equation}
\frac{\partial f}{\partial x}
=
\frac{\partial f}{\partial z}
+
\frac{\partial f}{\partial \bar{z}},
\qquad
\frac{\partial f}{\partial y}
=
i
\left(
\frac{\partial f}{\partial z}
-
\frac{\partial f}{\partial \bar{z}}
\right).
\label{eq:real_imaginary_derivatives}
\end{equation}

For holomorphic functions satisfying the Cauchy--Riemann conditions, the derivative with respect to the conjugate variable vanishes, namely
\begin{equation}
\frac{\partial f}{\partial \bar{z}} = 0.
\label{eq:holomorphic_condition}
\end{equation}

This property significantly simplifies gradient propagation in complex-valued optimization and provides computational advantages during the backpropagation process. In the proposed SCV-PINN framework, the Wirtinger formulation offers a mathematically consistent approach for handling split complex-valued representations while maintaining compatibility with standard automatic differentiation and optimization procedures commonly employed in scientific machine learning.

\subsubsection{Classification of Complex-Valued Neural Networks}

Complex-valued neural networks (CVNNs) extend conventional neural networks by directly processing complex-valued inputs and representations. In general, CVNNs can be categorized into two classes: fully complex-valued neural networks and split complex-valued neural networks~\cite{lee2022complex, yang2007sensitivity}.

\paragraph{Fully Complex-Valued Neural Networks.}

In fully complex-valued neural networks, both the network parameters and activation functions are defined in the complex domain. Such formulations naturally preserve amplitude and phase information, which is beneficial for oscillatory and wave-dominated problems. However, fully complex-valued models often suffer from high computational complexity and difficulties associated with constructing holomorphic activation functions~\cite{georgiou1992complex, hirose1992proposal}.

\paragraph{Split Complex-Valued Neural Networks.}

Split complex-valued neural networks overcome these limitations by decomposing complex-valued quantities into real and imaginary components. The decomposition can be expressed either in rectangular form
\begin{align}
\{z_1, z_2\} &= \{(x_1, y_1), (x_2, y_2)\},
\label{eq:rectangular_split}
\end{align}
or in polar form
\begin{align}
\{z_1, z_2\} &= \{(r_1, \phi_1), (r_2, \phi_2)\},
\label{eq:polar_split}
\end{align}
where $(x_i,y_i)$ denote the real and imaginary components, while $(r_i,\phi_i)$ represent the magnitude and phase, respectively.

Based on parameterization, split-CVNNs are further classified into two categories: (i) networks with real-valued weights and real-valued activation functions, and (ii) networks with complex-valued weights and real-valued activation functions.

\paragraph{Split-CVNNs with Real-Valued Weights.}

In this formulation, complex-valued inputs are transformed into equivalent real-valued representations. Although this enables the use of standard real-valued neural network architectures, the dimensionality increases and the intrinsic coupling between amplitude and phase may be weakened, potentially leading to phase distortion and reduced accuracy.

\paragraph{Split-CVNNs with Complex-Valued Weights.}

An alternative and more effective formulation employs complex-valued weights together with real-valued activation functions independently applied to the real and imaginary components. For a complex input $z$, the split activation is defined as
\begin{equation}
\sigma(z)=\sigma^{\Re}(\Re[z]) + i\,\sigma^{\Re}(\Im[z]),
\label{eq:split_activation}
\end{equation}
where $\sigma^{\Re}(\cdot)$ denotes a conventional real-valued activation function such as ReLU, sigmoid, or hyperbolic tangent.

This formulation preserves important complex-valued characteristics while maintaining numerical stability and computational efficiency, making it particularly suitable for the proposed SCV-PINN framework.

\subsection{Various Types of Complex Activation Functions}

The design of activation functions in complex-valued neural networks (CVNNs) is considerably more challenging than in real-valued networks due to restrictions imposed by complex analysis. In particular, bounded holomorphic functions are severely constrained by Liouville's theorem, motivating the use of non-holomorphic and split activation formulations in practical applications.

\subsubsection{Split Activation Functions}

For a complex-valued input $z=x+iy=|z|e^{i\arg(z)}$, split activation functions independently process the real and imaginary components. Two common formulations are
\begin{align}
f_A(z) &= f_{R}(x) + i f_{I}(y), \label{eq:typeA} \\
f_B(z) &= f_r(|z|) + i f_{\phi}(\arg(z)), \label{eq:typeB}
\end{align}
where $f_R, f_I, f_r,$ and $f_{\phi}$ denote standard real-valued activation functions. Type-A activations operate in Cartesian coordinates, whereas Type-B activations separately treat magnitude and phase information, making them suitable for oscillatory problems.

\subsubsection{Fully Complex Activation Functions}

Fully complex activation functions treat the complex variable as a single entity and are generally required to satisfy holomorphicity conditions. A representative example is the multi-valued neuron (MVN),
\begin{equation}
f(z)=\exp(i\arg(z))=\frac{z}{|z|},
\label{eq:mvn}
\end{equation}
while elementary functions such as $\sin(z)$ and $\tanh(z)$ may also be employed. However, these formulations often suffer from increased computational complexity and optimization difficulties.

\subsubsection{Complex ReLU Variants}

Several ReLU-based complex activation functions have been proposed for stable optimization. The $z$-ReLU activation is defined as
\begin{equation}
z\text{-ReLU}(z)=
\begin{cases}
z, & \text{if } 0<\arg(z)<\frac{\pi}{2}, \\
0, & \text{otherwise},
\end{cases}
\label{eq:zrelu}
\end{equation}
whereas the modReLU activation introduces magnitude thresholding:
\begin{equation}
\text{modReLU}(z)=
\begin{cases}
\text{ReLU}(|z|+b)\dfrac{z}{|z|}, & |z|\geq b, \\
0, & \text{otherwise},
\end{cases}
\label{eq:modrelu}
\end{equation}
with $b$ denoting a learnable bias parameter. Another commonly used variant is the cardioid activation function,
\begin{equation}
f(z)=\frac{1+\cos(\arg(z))}{2}z.
\label{eq:cardioid}
\end{equation}

\subsubsection{Output Layer Activation}

Although the internal representations of the proposed SCV-PINN are constructed in the complex domain, the governing equations considered in this work admit real-valued physical solutions. Therefore, if the complex-valued network output is represented as
\begin{equation}
z=x+iy,
\end{equation}
the predicted physical quantity is obtained from the real component:
\begin{equation}
u_{\mathrm{pred}}=\Re(z)=x.
\end{equation}

In this formulation, the imaginary component acts as an auxiliary latent representation that enriches the approximation space while preserving compatibility with real-valued physical systems.

\section{Methodology}
\label{sec:sec4}
\subsection{Real-Valued Multi-Layer Perceptrons (RV-MLP)}

In conventional real-valued physics-informed neural networks (RV-PINNs), the underlying function approximator is typically constructed using real-valued multi-layer perceptrons (RV-MLPs). These networks serve as universal approximators to represent the latent solution fields of interest, mapping spatio-temporal coordinates to the corresponding physical quantities.

Let $\mathbf{x} \in \mathbb{R}^{d}$ denote the input vector representing spatial and/or temporal coordinates. The neural network approximates a target function $u(\mathbf{x})$ through a parametric mapping $f_{\boldsymbol{\theta}}(\mathbf{x})$, where $\boldsymbol{\theta}$ denotes the collection of all trainable parameters.

An $L$-layer RV-MLP is recursively defined as
\begin{align}
\mathbf{g}^{(0)}(\mathbf{x}) &= \mathbf{x}, \qquad d_0 = d, \\
\mathbf{f}^{(l)}(\mathbf{x}) &= \mathbf{W}^{(l)} \mathbf{g}^{(l-1)}(\mathbf{x}) + \mathbf{b}^{(l)}, \\
\mathbf{g}^{(l)}(\mathbf{x}) &= \sigma\big(\mathbf{f}^{(l)}(\mathbf{x})\big), \qquad l = 1,2,\ldots,L,
\end{align}
where $\mathbf{W}^{(l)} \in \mathbb{R}^{d_l \times d_{l-1}}$ and $\mathbf{b}^{(l)} \in \mathbb{R}^{d_l}$ denote the weight matrix and bias vector of the $l$-th layer, respectively, and $\sigma(\cdot)$ is an element-wise nonlinear activation function.

The final network output is given by
\begin{equation}
f_{\boldsymbol{\theta}}(\mathbf{x}) = \mathbf{W}^{(L+1)} \mathbf{g}^{(L)}(\mathbf{x}) + \mathbf{b}^{(L+1)}.
\label{eq:rv_mlp_output}
\end{equation}

Here, all learnable parameters are defined over the real field, i.e.,
\begin{equation}
\boldsymbol{\theta} = \left\{ \mathbf{W}^{(1)}, \mathbf{b}^{(1)}, \ldots, \mathbf{W}^{(L+1)}, \mathbf{b}^{(L+1)} \right\} \subset \mathbb{R}.
\end{equation}

This formulation implies that RV-MLPs operate purely on magnitude-based transformations, without explicitly encoding phase information. While effective in many scenarios, such representations may be insufficient for accurately capturing oscillatory, multi-scale, or phase-dependent physical phenomena, motivating the transition to complex-valued architectures.

\subsection{Real-Valued Physics-Informed Neural Networks}

Consider a general nonlinear partial differential equation (PDE) defined on the space--time domain $\Omega \times [0,T]$:
\begin{align}
\mathcal{N}_{x,t}[u(x,t)] &= f(x,t), 
\qquad x \in \Omega,\; t\in[0,T], 
\label{eq:governing} \\
u(x,0) &= h(x), 
\qquad x\in\Omega,
\label{eq:initial} \\
\mathcal{B}_{x,t}[u(x,t)] &= g(x,t), 
\qquad x\in\partial\Omega,\; t\in[0,T],
\label{eq:boundary}
\end{align}
where $\mathcal{N}_{x,t}$ denotes the nonlinear differential operator, $\mathcal{B}_{x,t}$ represents the boundary operator, and $f$, $h$, and $g$ correspond to the forcing, initial, and boundary data, respectively.

In the standard PINN framework, the solution $u(x,t)$ is approximated using a neural network
\begin{equation}
\hat{u}(x,t;\boldsymbol{\theta}),
\end{equation}
where $\boldsymbol{\theta}$ contains all trainable weights and biases. The governing equation together with the initial and boundary conditions are incorporated into the optimization process through a physics-constrained loss function.

The PDE residual loss is defined over $\mathcal{N}_r$ collocation points 
$\{(x_r^i,t_r^i)\}_{i=1}^{\mathcal{N}_r}$ sampled inside the domain:
\begin{equation}
\mathcal{L}_r=
\frac{1}{\mathcal{N}_r}
\sum_{i=1}^{\mathcal{N}_r}
\left|
\mathcal{N}_{x,t}
\big[
\hat{u}(x_r^i,t_r^i)
\big]
-
f(x_r^i,t_r^i)
\right|^2.
\label{eq:residual_loss}
\end{equation}

The initial-condition loss is evaluated at $\mathcal{N}_0$ initial samples:
\begin{equation}
\mathcal{L}_0=
\frac{1}{\mathcal{N}_0}
\sum_{i=1}^{\mathcal{N}_0}
\left|
\hat{u}(x_0^i,0)-h(x_0^i)
\right|^2,
\label{eq:initial_loss}
\end{equation}
while the boundary-condition mismatch is measured using $\mathcal{N}_b$ boundary points:
\begin{equation}
\mathcal{L}_b=
\frac{1}{\mathcal{N}_b}
\sum_{i=1}^{\mathcal{N}_b}
\left|
\mathcal{B}_{x,t}
\big[
\hat{u}(x_b^i,t_b^i)
\big]
-
g(x_b^i,t_b^i)
\right|^2.
\label{eq:boundary_loss}
\end{equation}

If observational or measurement data are available, an additional data loss is introduced:
\begin{equation}
\mathcal{L}_d=
\frac{1}{\mathcal{N}_d}
\sum_{i=1}^{\mathcal{N}_d}
\left|
\hat{u}(x_d^i,t_d^i)-u_d^i
\right|^2,
\end{equation}
where $\mathcal{N}_d$ denotes the number of supervised data samples.

The overall loss function is then expressed as
\begin{equation}
\mathcal{L}_{\mathrm{total}}
=
\lambda_r \mathcal{L}_r
+
\lambda_0 \mathcal{L}_0
+
\lambda_b \mathcal{L}_b
+
\lambda_d \mathcal{L}_d,
\label{eq:total_loss_rvpinn}
\end{equation}
where $\lambda_r$, $\lambda_0$, $\lambda_b$, and $\lambda_d$ are weighting coefficients associated with the PDE residual, initial condition, boundary condition, and data losses, respectively.

All spatial and temporal derivatives are computed using automatic differentiation, enabling direct enforcement of the governing physics without explicit mesh-based discretization.

\subsection{Complex-Valued Multi-Layer Perceptrons (CV-MLP)}

In the proposed split complex-valued physics-informed neural network (SCV-PINN) framework, the solution approximation is constructed using a complex-valued multilayer perceptron (CV-MLP). Unlike real-valued neural networks, the CV-MLP operates in the complex domain and can simultaneously encode magnitude and phase information.

Let $\mathbf{x}\in\mathbb{R}^{d}$ denote the input coordinates. The CV-MLP approximates a complex-valued mapping
\begin{equation}
f_{\boldsymbol{\theta}}(\mathbf{x}) \in \mathbb{C},
\end{equation}
where $\boldsymbol{\theta}$ represents the trainable complex-valued parameters.

For an $L$-layer CV-MLP, the forward propagation is written as
\begin{align}
\mathbf{g}^{(0)}(\mathbf{x}) &= \mathbf{x}, \\
\mathbf{f}^{(l)}(\mathbf{x}) 
&=
\mathbf{W}^{(l)}
\mathbf{g}^{(l-1)}(\mathbf{x})
+
\mathbf{b}^{(l)}, \\
\mathbf{g}^{(l)}(\mathbf{x})
&=
f\big(
\mathbf{f}^{(l)}(\mathbf{x})
\big),
\qquad l=1,\dots,L,
\end{align}
where 
$\mathbf{W}^{(l)}\in\mathbb{C}^{d_l\times d_{l-1}}$
and
$\mathbf{b}^{(l)}\in\mathbb{C}^{d_l}$ 
denote the complex-valued weights and biases.

The final output layer is given by
\begin{equation}
f_{\boldsymbol{\theta}}(\mathbf{x})
=
\mathbf{W}^{(L+1)}
\mathbf{g}^{(L)}
+
\mathbf{b}^{(L+1)}.
\label{eq:cv_mlp_output}
\end{equation}

All learnable parameters satisfy
\begin{equation}
\boldsymbol{\theta}
=
\left\{
\mathbf{W}^{(l)},\mathbf{b}^{(l)}
\right\}
\subset \mathbb{C}.
\end{equation}

In practice, split activation functions are employed:
\begin{equation}
f(z)
=
\sigma^{\Re}(\Re(z))
+
i\,\sigma^{\Re}(\Im(z)),
\end{equation}
where $\sigma^{\Re}(\cdot)$ denotes a real-valued activation function such as Tanh, GELU, or ReLU.

Compared with RV-MLPs, CV-MLPs provide richer feature representations through complex-valued transformations, enabling improved approximation of oscillatory and phase-dependent physical phenomena. The architectural and representational differences between the two frameworks are visually illustrated in Fig.~\ref{fig:fig2}.

\refstepcounter{figure}

\begin{center}
\includegraphics[width=\textwidth]{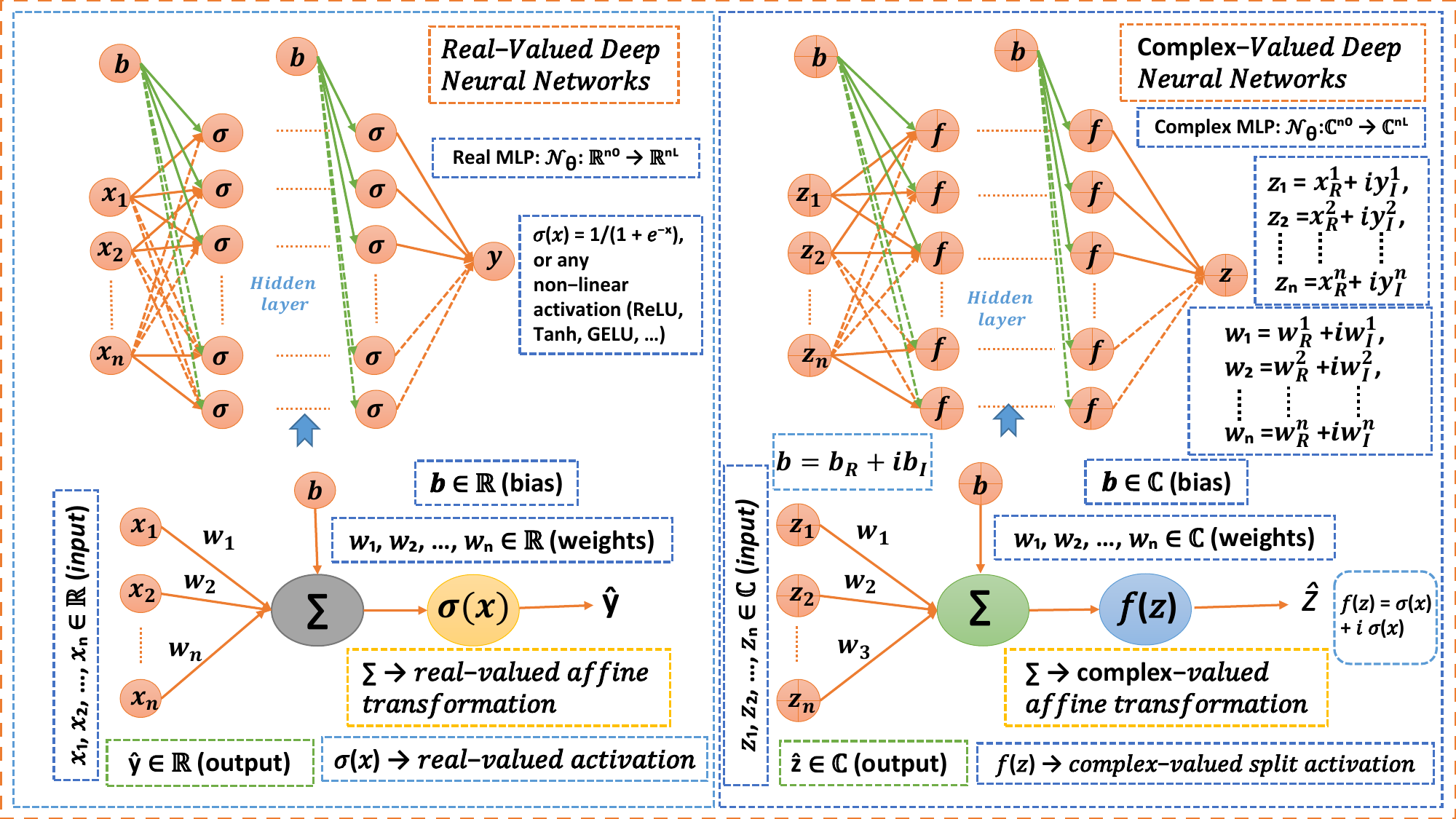}
\end{center}

\vspace{0.5em}

\noindent\textbf{Fig. \thefigure.}
Architectural comparison between conventional real-valued deep neural networks (RV-DNNs) and complex-valued deep neural networks (CV-DNNs), highlighting the enhanced representation capability of complex-valued feature learning.

\label{fig:fig2}

\label{fig:rv_vs_cvmlp}


\subsection{Split Complex-Valued Physics-Informed Neural Networks(SCV-PINN)}

Let $\Omega\subset\mathbb{R}^{d}$ denote the spatial domain and let $\mathcal{T}=[0,T]$ represent the temporal interval. We consider a general nonlinear partial differential equation (PDE) in the form
\begin{align}
\mathcal{N}_{x,t}[u(x,t);\lambda] &= f(x,t),
\qquad (x,t)\in\Omega\times\mathcal{T},
\label{eq:scvpinn_governing}
\\
u(x,0) &= u_0(x),
\qquad x\in\Omega,
\label{eq:scvpinn_initial}
\\
\mathcal{B}_{x,t}[u(x,t)] &= g(x,t),
\qquad (x,t)\in\partial\Omega\times\mathcal{T},
\label{eq:scvpinn_boundary}
\end{align}
where $u(x,t)\in\mathbb{R}^{m}$ denotes the unknown physical field, such as velocity, pressure, temperature, or concentration, and $\lambda$ represents the set of physical or PDE parameters. Here, $\mathcal{N}_{x,t}$ denotes the nonlinear differential operator governing the underlying physical system, while $\mathcal{B}_{x,t}$ represents the boundary operator associated with Dirichlet, Neumann, or mixed boundary conditions. The function $f(x,t)$ corresponds to the source or forcing term, $u_0(x)$ denotes the prescribed initial condition, and $g(x,t)$ specifies the boundary data on the domain boundary $\partial\Omega$. The general workflow of the proposed split complex-valued physics-informed neural network (SCV-PINN) framework is illustrated in Fig.~\ref{fig:fig3}.

\subsubsection{Complex-Valued Neural Representation}

In the proposed SCV-PINN framework, the solution is approximated using a CV-MLP:
\begin{equation}
\hat{y}(x,t)
=
\mathcal{F}_{\theta}(z)
=
\hat{y}_{\mathrm{Re}}
+
i\,\hat{y}_{\mathrm{Im}}
\in\mathbb{C}^{m},
\end{equation}
where $\theta\subset\mathbb{C}$ contains all trainable parameters.

The physical prediction is obtained from the real component:
\begin{equation}
\hat{u}(x,t)
=
\Re
\big[
\hat{y}(x,t)
\big].
\end{equation}

The imaginary component acts as an auxiliary latent representation that enriches the approximation space during training.

\subsubsection{Input Normalization and Complex Embedding}

Each coordinate is normalized to $[-1,1]$:
\begin{equation}
\xi_k
=
\frac{
2(x_k-L_k)
}{
R_k-L_k
}
-1,
\end{equation}
and embedded into the complex domain as
\begin{equation}
z_k=\xi_k+i\,0.
\end{equation}

Thus, the complex input vector becomes
\begin{equation}
\mathbf{z}\in\mathbb{C}^{d+1}.
\end{equation}

\subsubsection{Complex Linear Transformation}

For the $l$-th layer, the complex-valued affine mapping is
\begin{equation}
\mathbf{f}^{(l)}
=
\mathbf{W}^{(l)}
\mathbf{g}^{(l-1)}
+
\mathbf{b}^{(l)}.
\end{equation}

By separating real and imaginary components:
\begin{align}
\Re(\mathbf{f}^{(l)})
&=
\mathbf{W}_R^{(l)}
\Re(\mathbf{g}^{(l-1)})
-
\mathbf{W}_I^{(l)}
\Im(\mathbf{g}^{(l-1)})
+
\mathbf{b}_R^{(l)}, \\
\Im(\mathbf{f}^{(l)})
&=
\mathbf{W}_R^{(l)}
\Im(\mathbf{g}^{(l-1)})
+
\mathbf{W}_I^{(l)}
\Re(\mathbf{g}^{(l-1)})
+
\mathbf{b}_I^{(l)}.
\end{align}

\subsubsection{Split Complex Activation}

A split complex-valued activation strategy is employed:
\begin{equation}
\sigma_{\mathbb{C}}(z)
=
\sigma(\Re[z])
+
i\,\sigma(\Im[z]),
\end{equation}
where $\sigma(\cdot)$ denotes a real-valued nonlinear activation function applied independently to the real and imaginary components. Within the proposed SCV-PINN framework, several split activation functions are investigated, including CTanh, CGELU, CSwish, CSigmoid, CSoftplus, CReLU, and z-ReLU~\cite{hammad2024comprehensive}. Their influence on convergence behavior and solution accuracy is systematically analyzed in the Appendix section.

\subsubsection{Network Mapping}

The recursive CV-MLP mapping is defined as
\begin{align}
\mathbf{g}^{(0)} &= \mathbf{z}, \\
\mathbf{f}^{(l)}
&=
\mathbf{W}^{(l)}
\mathbf{g}^{(l-1)}
+
\mathbf{b}^{(l)}, \\
\mathbf{g}^{(l)}
&=
\sigma_{\mathbb{C}}
\big(
\mathbf{f}^{(l)}
\big),
\qquad l=1,\dots,L.
\end{align}

The final network output is
\begin{equation}
\hat{y}(x,t)
=
\mathbf{W}^{(L+1)}
\mathbf{g}^{(L)}
+
\mathbf{b}^{(L+1)}.
\end{equation}

\subsubsection{Complex PDE Residual}

Consider a PDE written as
\begin{equation}
\frac{\partial u}{\partial t}
+
\mathcal{N}[u]
-
\mathcal{L}[u]
=
0.
\end{equation}

The SCV-PINN residual is constructed as
\begin{equation}
f(x,t)
=
\frac{\partial \hat{y}}{\partial t}
+
\mathcal{N}
\big[
\Re(\hat{y})
\big]
-
\mathcal{L}
\big[
\hat{y}
\big].
\end{equation}

Linear operators are applied directly to the complex-valued representation, whereas nonlinear physical terms are evaluated using the real component to preserve physical consistency.

Automatic differentiation computes all derivatives:
\begin{equation}
\frac{\partial\hat{y}}{\partial x_k}
=
\frac{\partial\hat{y}_{\mathrm{Re}}}{\partial x_k}
+
i
\frac{\partial\hat{y}_{\mathrm{Im}}}{\partial x_k}.
\end{equation}

\refstepcounter{figure}

\begin{center}
\includegraphics[width=\textwidth]{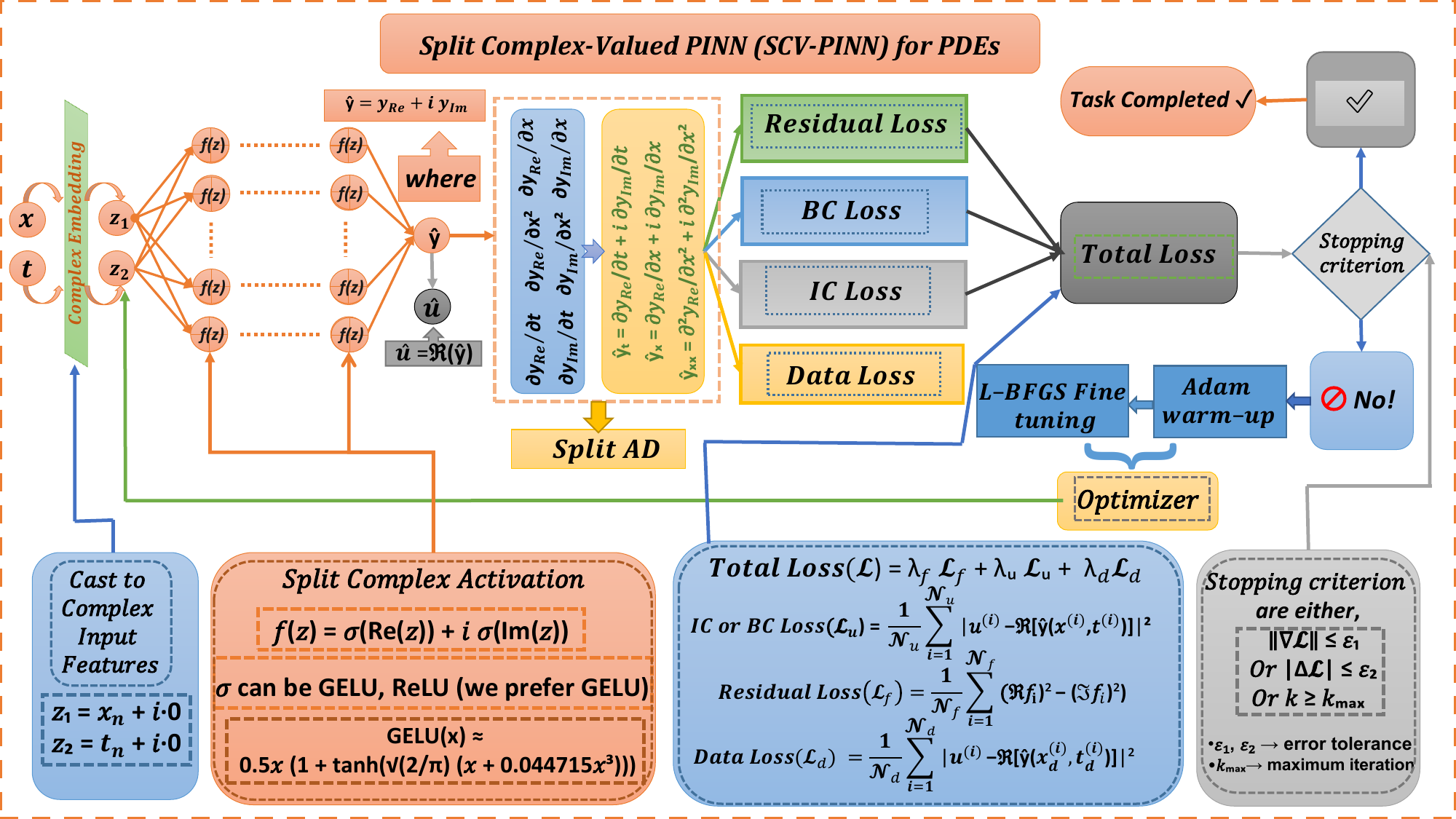}
\end{center}

\vspace{0.5em}

\noindent\textbf{Fig. \thefigure.}
General workflow of the proposed split complex-valued physics-informed neural network (SCV-PINN) framework for solving general nonlinear partial differential equations, including both forward solution prediction and inverse parameter identification problems.

\label{fig:fig3}

\subsubsection{Loss Function}

The total training objective combines physics, initial--boundary, and data constraints.

\paragraph{Physics loss.}

Let $\mathcal{T}_f=\{(x_f^i,t_f^i)\}_{i=1}^{\mathcal{N}_f}$ denote the set of collocation points inside the computational domain. The PDE residual loss is defined as
\begin{equation}
\mathcal{L}_f
=
\frac{1}{\mathcal{N}_f}
\sum_{i=1}^{\mathcal{N}_f}
|f(x_f^i,t_f^i)|^2,
\end{equation}
or equivalently,
\begin{equation}
\mathcal{L}_f
=
\frac{1}{\mathcal{N}_f}
\sum_{i=1}^{\mathcal{N}_f}
\left[
(\Re f_i)^2
+
(\Im f_i)^2
\right].
\end{equation}

\paragraph{Initial and boundary loss.}

Let $\mathcal{T}_u$ denote the set of initial and boundary training samples containing $\mathcal{N}_u$ points. The mismatch loss is
\begin{equation}
\mathcal{L}_u
=
\frac{1}{\mathcal{N}_u}
\sum_{i=1}^{\mathcal{N}_u}
\left|
u^{(i)}
-
\Re
\big[
\hat{y}(x^{(i)},t^{(i)})
\big]
\right|^2.
\end{equation}

\paragraph{Data loss.}

For inverse problems or supervised observations, the data mismatch is written as
\begin{equation}
\mathcal{L}_d
=
\frac{1}{\mathcal{N}_d}
\sum_{i=1}^{\mathcal{N}_d}
\left|
u_d^{(i)}
-
\Re
\big[
\hat{y}(x_d^{(i)},t_d^{(i)})
\big]
\right|^2.
\end{equation}

\paragraph{Total loss.}

The overall objective function is given by
\begin{equation}
\mathcal{L}_{\mathrm{total}}
=
\lambda_f\mathcal{L}_f
+
\lambda_u\mathcal{L}_u
+
\lambda_d\mathcal{L}_d,
\end{equation}
where $\lambda_f$, $\lambda_u$, and $\lambda_d$ denote weighting coefficients associated with the physics, initial--boundary, and data losses, respectively.

\subsubsection{Sampling Strategy}

The distribution of collocation points plays a crucial role in the convergence and accuracy of SCV-PINNs. In this work, both non-adaptive and adaptive sampling strategies are considered to systematically investigate the capability of different collocation methods within the SCV-PINN framework. The sampling approaches adopted in this study follow the methodology discussed in~\cite{wu2023comprehensive}, where these techniques are presented in detail. A comprehensive comparison and discussion of the different sampling methods are provided in the Appendix section.

\paragraph{Non-adaptive sampling.}

The collocation points remain fixed during training. The following strategies are investigated:

\begin{itemize}
\item Uniform random sampling,
\item Structured grid sampling,
\item Latin hypercube sampling (LHS),
\item Halton sequence,
\item Hammersley sequence,
\item Sobol sequence.
\end{itemize}

\paragraph{Adaptive sampling (RAD).}

To enhance the accuracy of the SCV-PINN in regions where the governing equation is poorly satisfied, a residual-adaptive distribution (RAD) strategy is incorporated during training. At each resampling stage, a large set of candidate points
\begin{equation}
\mathcal{C}
=
\left\{
(x_i,t_i)
\right\}_{i=1}^{\mathcal{N}_{\mathrm{cand}}}
\end{equation}
is uniformly generated over the computational space--time domain.

For every candidate point, the complex-valued PDE residual is evaluated and its magnitude is defined by
\begin{equation}
r_i
=
\left|
f(x_i,t_i)
\right|^2
=
\left(\Re(f_i)\right)^2
+
\left(\Im(f_i)\right)^2.
\end{equation}

The residual information is then used to construct an adaptive sampling distribution,
\begin{equation}
p_i
=
\frac{r_i}
{\sum_{j=1}^{\mathcal{N}_{\mathrm{cand}}} r_j},
\end{equation}
where points associated with larger residuals receive higher sampling probabilities.

A new set of collocation points is subsequently drawn according to the distribution $\{p_i\}$. As a result, the training process automatically allocates more collocation points to regions exhibiting large local errors, while fewer points are assigned to well-resolved regions. This adaptive redistribution improves the representation of sharp solution structures and typically leads to faster convergence and enhanced numerical accuracy.

\subsubsection{Optimization Strategy}

Training is performed using a hybrid optimization procedure.

\paragraph{Stage 1: Adam optimization.}

An adaptive first-order optimizer is first employed:
\begin{equation}
\theta
\leftarrow
\theta
-
\eta
\nabla_{\theta}
\mathcal{L},
\end{equation}
where $\eta$ denotes the learning rate.

\paragraph{Stage 2: L-BFGS optimization.}

After the initial training stage, the parameters are further refined using the L-BFGS quasi-Newton optimizer to achieve high-accuracy convergence.

\subsubsection{Stopping Criteria}

Let $\mathcal{L}_k$ denote the total loss at the $k$-th training iteration and let 
$\boldsymbol{\theta}$ represent the trainable parameters of the SCV-PINN framework. 
The optimization process is terminated when any one of the following criteria is satisfied:
\begin{align}
\left\|\nabla_{\boldsymbol{\theta}}\mathcal{L}_k\right\| &\leq \epsilon_1, \label{eq:stop_grad} \\
\left|\mathcal{L}_{k+1}-\mathcal{L}_k\right| &\leq \epsilon_2, \label{eq:stop_loss} \\
k &\geq k_{\max}, \label{eq:stop_iter}
\end{align}
where $\epsilon_1$ and $\epsilon_2$ denote user-defined tolerance thresholds associated with the gradient norm and loss variation, respectively, and $k_{\max}$ represents the maximum allowable number of training iterations. The values of $\epsilon_1$ and $\epsilon_2$ may vary depending on the complexity and stiffness of the underlying physical problem.
\subsubsection{Final Approximation}

After optimization, the numerical solution is approximated as
\begin{equation}
u(x,t)
\approx
\hat{u}(x,t)
=
\Re
\big[
\mathcal{F}_{\theta}(z)
\big].
\end{equation}

\begin{algorithm}[t]
\caption{SCV-PINNs for forward and inverse nonlinear PDE problems}
\label{alg:scvpinn_unified}
\begin{algorithmic}[1]

\Require PDE operator $\mathcal{N}[\cdot;\lambda]$, boundary/initial operators $\mathcal{B}[\cdot]$ and $\mathcal{C}[\cdot]$, training sets $\mathcal{N}_d$ and $\mathcal{N}_f$, network parameters $\theta$, unknown PDE parameters $\lambda$ (for inverse problems), activation $\sigma_{\mathbb{C}}$, and optimization budgets $N_A$, $N_L$

\Ensure Predicted solution $\hat{u}(x,t)$ and identified parameters $\lambda^\ast$ (inverse case)

\State Initialize the SCV-PINN $\mathcal{F}_{\theta}$ and trainable PDE parameters $\lambda$ (if unknown)

\State Generate collocation points $\mathcal{T}_f\subset\Omega\times\mathcal{T}$ using the prescribed sampling strategy

\For{each point $(x_i,t_i)\in\mathcal{N}_d\cup\mathcal{N}_f$}

\State Normalize inputs into $[-1,1]^{d+1}$ and define
\[
z_i=(x_i,t_i)+0\,i
\]

\State Compute the complex network output
\[
\hat{y}_i=\mathcal{F}_{\theta}(z_i)\in\mathbb{C}^{m}
\]

\State Apply split activation
\[
\sigma_{\mathbb{C}}(z)
=
\sigma(\Re[z])
+
i\,\sigma(\Im[z])
\]

\State Extract the physical prediction
\[
\hat{u}_i=\Re(\hat{y}_i)
\]

\EndFor

\State Compute the data/initial/boundary loss
\[
\mathcal{L}_{d}
=
\frac{1}{\mathcal{N}_d}
\sum_{i=1}^{\mathcal{N}_d}
\|
\hat{u}_i-u_i
\|^2
\]

\State Evaluate the PDE residuals $f_i$ using automatic differentiation

\State Compute the physics loss
\[
\mathcal{L}_{f}
=
\frac{1}{\mathcal{N}_f}
\sum_{i=1}^{\mathcal{N}_f}
\left(
\Re(f_i)^2+\Im(f_i)^2
\right)
\]

\State Form the total loss
\[
\mathcal{L}
=
\lambda_d\mathcal{L}_d
+
\lambda_f\mathcal{L}_f
\]

\State Train using Adam optimizer for $N_A$ iterations

\State Refine using L-BFGS optimization for $N_L$ iterations

\State Obtain the final prediction
\[
\hat{u}(x,t)
=
\Re\!\left(
\mathcal{F}_{\theta^\ast}(z)
\right)
\]

\If{inverse problem}
\State Simultaneously identify the optimal PDE parameters $\lambda^\ast$
\EndIf

\end{algorithmic}
\end{algorithm}

\begin{algorithm}[t]
\caption{SCV-RAD: Residual-Adaptive Collocation for SCV-PINNs}
\label{alg:scvrad}
\begin{algorithmic}[1]

\Require Initial SCV-PINN parameters $\theta$, number of collocation points $\mathcal{N}_f$, candidate pool size $\mathcal{N}_{\mathrm{cand}} \gg \mathcal{N}_f$, residual exponent $k$, stabilization constant $c$, resampling interval $K_r$, and maximum training iterations $K_{\max}$

\Ensure Trained network parameters $\theta^\ast$

\State Initialize collocation set $\mathcal{N}_f$ using a prescribed sampling strategy

\State Train the SCV-PINN for $K_r$ iterations

\While{total training iterations $< K_{\max}$}

\State Draw a candidate set
\[
\mathcal{C}=\{(x_i,t_i)\}_{i=1}^{\mathcal{N}_{\mathrm{cand}}}
\sim
\mathcal{U}(\Omega \times [0,T])
\]

\For{each $(x_i,t_i)\in\mathcal{C}$}

\State Evaluate the complex-valued PDE residual:
\[
|f_i|^2
=
|f(x_i,t_i;\theta)|^2
=
\Re(f_i)^2+\Im(f_i)^2
\]

\State Compute the RAD weight:
\[
w_i=
\frac{|f_i|^k}
{\frac{1}{\mathcal{N}_{\mathrm{cand}}}\sum_j |f_j|^k}
+c
\]

\EndFor

\State Normalize the sampling probabilities:
\[
p_i=
\frac{w_i}
{\sum_j w_j}
\]

\State Resample $\mathcal{N}_f$ collocation points from $\mathcal{C}$ according to $\{p_i\}$

\State Replace the previous collocation set $\mathcal{N}_f$

\State Train the SCV-PINN for another $K_r$ iterations

\EndWhile

\State \Return $\theta^\ast$

\end{algorithmic}
\end{algorithm}

\section{Numerical Experiments and Applications}
\label{sec:sec5}
In this section, we demonstrate the capability and robustness of the proposed split complex-valued physics-informed neural network (SCV-PINN) framework for solving a broad class of forward and inverse partial differential equations (PDEs). The considered benchmark problems include the nonlinear Schrödinger equation, nonlinear Poisson equation, Burgers' equation, Helmholtz equation on regular and irregular domains, Korteweg--de Vries (KdV) equation, Allen--Cahn equation, Poisson equation on an L-shaped domain, Kovasznay flow at $Re=20$, lid-driven cavity flow at $Re=100$, and the three-dimensional Beltrami flow governed by the Navier--Stokes equations. For inverse problems, the Lorenz system, 1-D Burgers' equation and two-dimensional Navier--Stokes equations are considered.

These benchmark problems involve nonlinear convection, diffusion, oscillatory dynamics, chaotic behavior, and incompressible fluid flow. Although some governing equations are intrinsically complex-valued while others are purely real-valued, the proposed SCV-PINN framework provides a unified learning strategy for both classes of problems. In the proposed formulation, the imaginary component serves as an auxiliary latent representation that enriches the approximation space and improves optimization, whereas the physically meaningful solution is recovered from the real component of the network output.

To the best of our knowledge, the present work introduces a unified split complex-valued PINN framework capable of solving both forward and inverse nonlinear PDE problems within a single complex-valued learning architecture. Across all benchmark examples, the proposed SCV-PINN framework achieves accurate solution reconstruction while effectively capturing sharp gradients, oscillatory structures, and vortex-dominated flow features.

The predictive performance is quantitatively evaluated using the relative $L_2$ error norm defined as
\begin{equation}
\text{relative }L_2\text{error}
=
\frac{
\sqrt{
\sum_{i=1}^{\mathcal{N}}
\left|
\hat{u}(\mathbf{x}_i,t_i)-u(\mathbf{x}_i,t_i)
\right|^2
}
}{
\sqrt{
\sum_{i=1}^{\mathcal{N}}
\left|
u(\mathbf{x}_i,t_i)
\right|^2
}
},
\label{eq:l2_error}
\end{equation}
where $u(\mathbf{x}_i,t_i)$ and $\hat{u}(\mathbf{x}_i,t_i)$ denote the reference and predicted solutions, respectively, evaluated at $\mathcal{N}$ discrete sample points. The relative $L_2$ error measures the normalized discrepancy between the predicted and reference solutions over the computational domain.

In addition, the mean squared error (MSE) is employed to measure the average prediction discrepancy and is defined as
\begin{equation}
\mathrm{MSE}
=
\frac{1}{\mathcal{N}}
\sum_{i=1}^{\mathcal{N}}
\left|
\hat{u}(\mathbf{x}_i,t_i)
-
u(\mathbf{x}_i,t_i)
\right|^2,
\label{eq:mse_error}
\end{equation}
where $\mathcal{N}$ represents the total number of evaluation points in the computational domain.

\subsection{Forward Problem: Nonlinear Schrödinger Equation}

To evaluate the performance of the proposed split complex-valued physics-informed neural network (SCV-PINN) for intrinsically complex-valued systems, we consider the nonlinear Schrödinger (NLS) equation
\begin{equation}
i\psi_t + \frac{1}{2}\psi_{xx} + |\psi|^2\psi = 0,
\label{eq:nls}
\end{equation}
where
\begin{equation}
\psi(x,t)=u(x,t)+iv(x,t).
\end{equation}

By separating the real and imaginary parts, Eq.~\eqref{eq:nls} can be written as
\begin{align}
f_u &:= v_t - \frac{1}{2}u_{xx} - (u^2+v^2)u = 0,\\
f_v &:= u_t + \frac{1}{2}v_{xx} + (u^2+v^2)v = 0.
\end{align}

The problem is solved on the domain
\begin{equation}
x\in[-5,5], \qquad
t\in\left[0,\frac{\pi}{2}\right],
\end{equation}
subject to the initial condition
\begin{equation}
\psi(x,0)=2,\mathrm{sech}(x),
\end{equation}
and periodic boundary conditions at both spatial boundaries.

The proposed SCV-MLP architecture consists of $5$ hidden layers with 100 neurons in each layer and employs the CGELU activation function. Approximately 20,000 collocation points are generated using Latin Hypercube Sampling (LHS). Training is performed using a two-stage optimization procedure comprising 5,000 Adam iterations followed by 5,000 L-BFGS iterations for fine-tuning.

The numerical results are summarized in Figs.~\ref{fig:fig4}--\ref{fig:fig6} and Tables~\ref{tab:tab9}--\ref{tab:tab10}. Fig.~\ref{fig:fig4} compares the reference solution~\cite{raissi2019physics}, SCV-PINN prediction, and corresponding absolute error for the modulus field $|\psi|$. The proposed framework accurately reconstructs the nonlinear wave structure and achieves a relative $L_2$ error of $8.45\times10^{-5}$.

Further validation is provided in Fig.~\ref{fig:fig5}, which compares the exact and predicted profiles of $u$, $v$, and $|\psi|$ at $t=0.78,\mathrm{s}$. Figure~\ref{fig:fig6} presents additional comparisons at $t=0.59,\mathrm{s}$, $t=0.79,\mathrm{s}$, and $t=0.98,\mathrm{s}$. In all cases, the predicted solutions remain in close agreement with the reference data, demonstrating accurate reconstruction of the spatio-temporal wave evolution.

The quantitative results reported in Table~\ref{tab:tab9} confirm the high predictive accuracy of the proposed framework. For a fair comparison, Table~\ref{tab:tab10} evaluates SCV-PINN against a conventional real-valued PINN using identical network architectures, training samples, and optimization settings. The proposed SCV-PINN attains a relative $L_2$ error of $1.05\times10^{-3}$ and an MSE of $8.83\times10^{-7}$, whereas the corresponding errors obtained by the real-valued PINN are $6.04\times10^{-2}$ and $2.91\times10^{-3}$, respectively.

The loss convergence histories shown in Fig.~\ref{fig:fig27} further indicate that SCV-PINN converges more rapidly and reaches a lower final loss under the same computational budget. These results demonstrate the advantage of split complex-valued representations for learning oscillatory and phase-dependent nonlinear wave dynamics.

\refstepcounter{figure}

\begin{center}
\includegraphics[width=\textwidth]{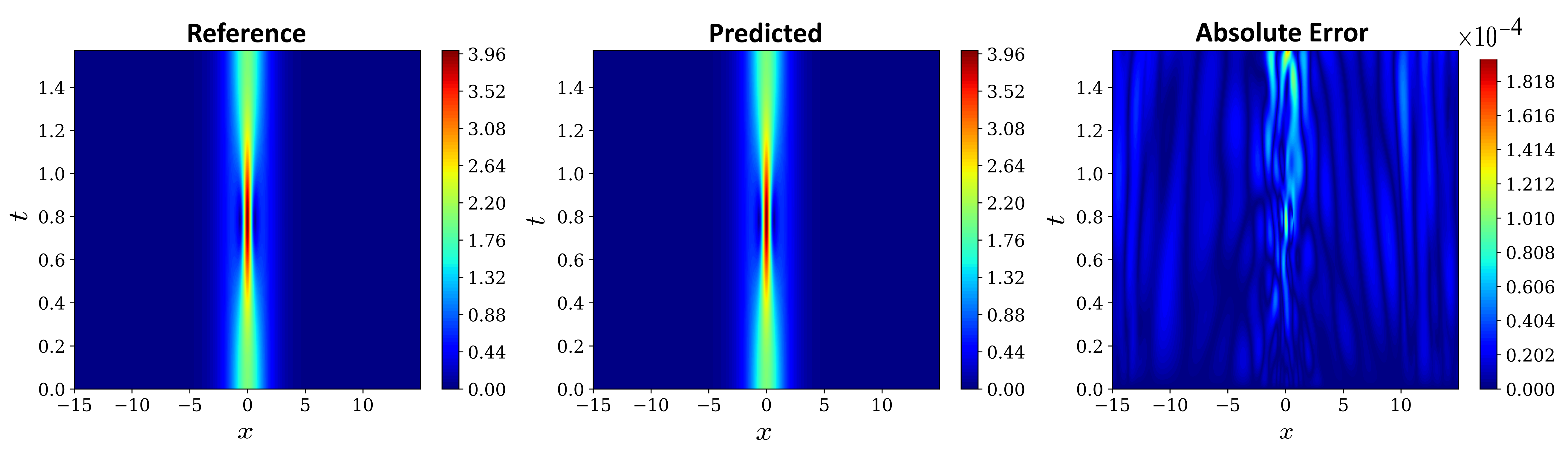}
\end{center}

\vspace{0.5em}

\noindent\textbf{Fig. \thefigure.}
Reference solution, SCV-PINN predicted solution, and corresponding absolute error contours for the nonlinear Schrödinger equation in terms of the modulus field $|\psi(x,t)|$. The proposed SCV-PINN achieves a relative $L_2$ error of $8.45\times10^{-5}$.

\label{fig:fig4}

\refstepcounter{figure}

\begin{center}
\includegraphics[width=\textwidth]{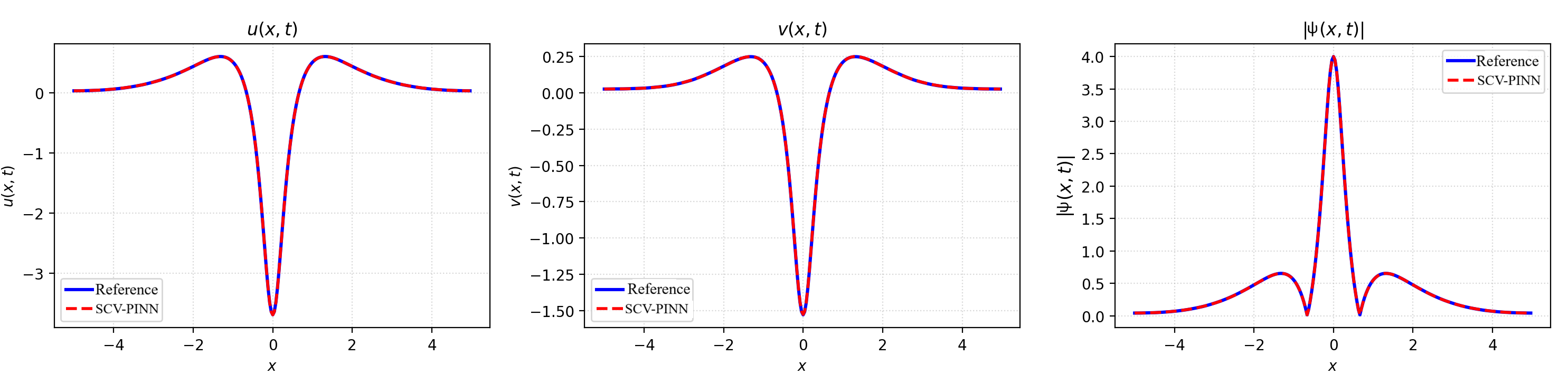}
\end{center}

\vspace{0.5em}

\noindent\textbf{Fig. \thefigure.}
Reference and predicted solution comparison for the nonlinear Schrödinger equation at $t=0.78\,\mathrm{s}$ for the real part $u(x,t)$, imaginary part $v(x,t)$, and modulus field $|\psi(x,t)|$. The proposed SCV-PINN accurately captures the complex-valued wave dynamics with close agreement to the reference solution.

\label{fig:fig5}

\refstepcounter{figure}

\begin{center}
\includegraphics[width=\textwidth]{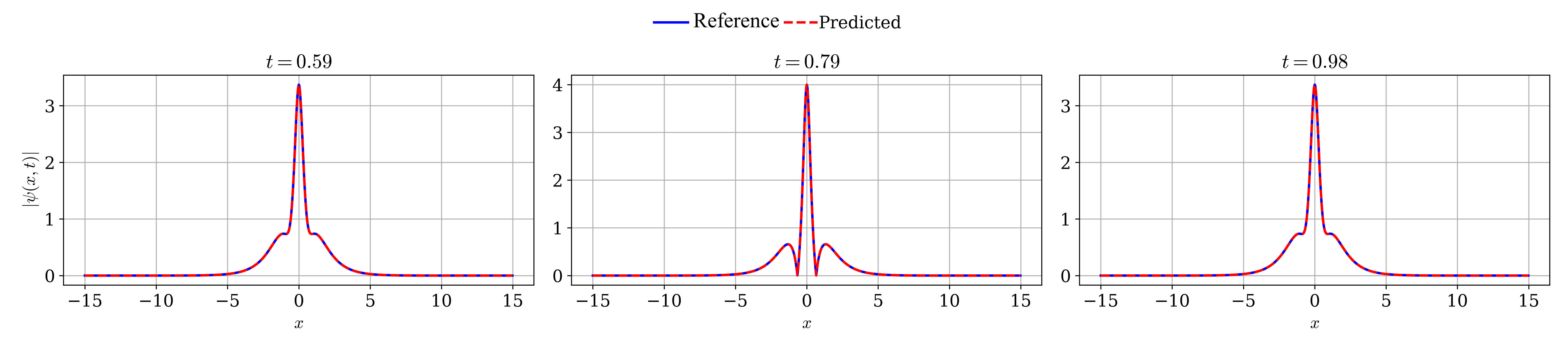}
\end{center}

\vspace{0.5em}

\noindent\textbf{Fig. \thefigure.}
Comparison of the reference and predicted solutions of the nonlinear Schrödinger equation for the modulus field $|\psi(x,t)|$ at three different time instances: $t=0.59\,\mathrm{s}$, $t=0.79\,\mathrm{s}$, and $t=0.98\,\mathrm{s}$. The proposed SCV-PINN predictions closely agree with the reference solution at all temporal snapshots.

\label{fig:fig6}

\subsection{Forward Problem: Nonlinear Poisson Equation}

To further examine the capability of the proposed split complex-valued physics-informed neural network (SCV-PINN) for nonlinear elliptic problems, we consider the nonlinear Poisson equation with an exponential reaction term,
\begin{equation}
\nabla^2\phi-e^{\phi}=r(x,y),
\label{eq:nlp}
\end{equation}
defined on the computational domain
\begin{equation}
\Omega=[0,1]\times[0,1].
\end{equation}

When $r(x,y)=0$, Eq.~\eqref{eq:nlp} reduces to the classical Liouville equation, which appears in differential geometry and several physical applications. Such nonlinear elliptic models arise in the description of mean-field vorticity distributions in steady fluid flows and in various formulations of quantum field theory, including Chern--Simons-type models. The presence of the exponential nonlinearity makes the problem particularly challenging due to the strong coupling between the solution and source terms.

The source term $r(x,y)$ is chosen such that the exact solution is given by
\begin{equation}
\phi(x,y)=1+\sin(k\pi x)\cos(k\pi y),
\label{eq:nlp_exact}
\end{equation}
where $k\in\mathbb{Z}$ controls the oscillatory behavior of the solution. Following a more challenging setting than commonly considered in the literature, we set $k=4$ to generate pronounced spatial oscillations.

The Dirichlet boundary conditions are prescribed as
\begin{align}
\phi(0,y) &= 1 = \phi(1,y) = 1,
\\
\phi(x,0) &= 1+\sin(k\pi x),
\\
\phi(x,1) &= 1+\sin(k\pi x)\cos(k\pi).
\end{align}

These boundary conditions are imposed through the hard-constraint formulation using the functions
\begin{align}
f_b(x,y)
&=
1+
\Bigl[
1-y\bigl(1-\cos(k\pi)\bigr)
\Bigr]
\sin(k\pi x),
\\
h_b(x,y)
&=
xy(1-x)(1-y).
\end{align}

For this problem, the SCV-PINN employs a split complex-valued multilayer perceptron consisting of two hidden layers with 30 neurons per layer. The hidden layers utilize the CGELU activation function. The collocation points are generated using Latin Hypercube Sampling (LHS), and the network is trained using $10,000$ Adam iterations followed by $20,000$ L-BFGS iterations.

The numerical results are presented in Fig.~\ref{fig:fig7} and Table~\ref{tab:tab9}. Fig.~\ref{fig:fig7} compares the exact solution, the SCV-PINN prediction, and the corresponding absolute error distribution. The proposed framework accurately reproduces the highly oscillatory solution field throughout the domain and achieves a relative $L_2$ error of $4.45\times10^{-7}$. The quantitative result reported in Table~\ref{tab:tab9} further confirms the high predictive accuracy of the proposed SCV-PINN for nonlinear Poisson problems with complex spatial variations.

\refstepcounter{figure}

\begin{center}
\includegraphics[width=\textwidth]{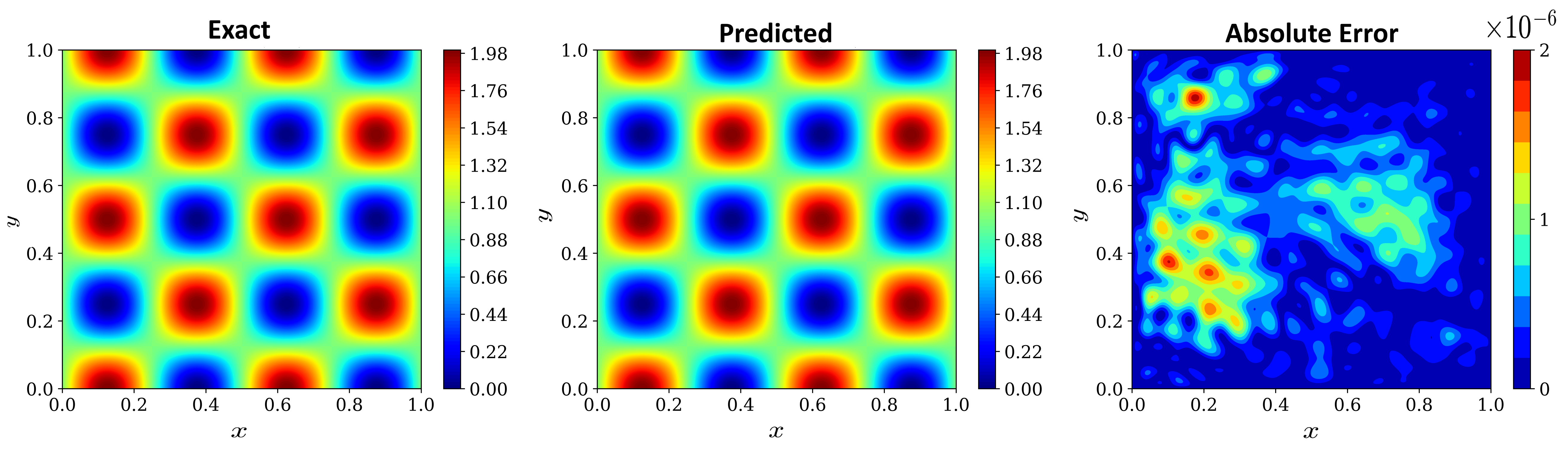}
\end{center}

\vspace{0.5em}

\noindent\textbf{Fig. \thefigure.}
Exact solution, SCV-PINN predicted solution, and corresponding absolute error for the nonlinear Poisson equation on a regular domain with $k=4$. The proposed SCV-PINN achieves a relative $L_2$ error of $4.45\times10^{-7}$.

\label{fig:fig7}

\subsection{Forward Problem: Poisson Equation on an L-Shaped Domain}

To further assess the capability of the proposed split complex-valued physics-informed neural network (SCV-PINN) in irregular geometries, we consider the two-dimensional Poisson equation on an L-shaped domain,
\begin{equation}
-\Delta u(x,y)=1,
\qquad (x,y)\in\Omega,
\label{eq:lshape_poisson}
\end{equation}
subject to homogeneous Dirichlet boundary conditions
\begin{equation}
u(x,y)=0,
\qquad (x,y)\in\partial\Omega,
\label{eq:lshape_bc}
\end{equation}
where
\begin{equation}
\Omega=[-1,1]^2\setminus[0,1]^2.
\end{equation}

The non-convex L-shaped geometry contains a re-entrant corner, leading to reduced solution regularity and localized singular behavior near the corner region. Such features make the problem a challenging benchmark for neural-network-based PDE solvers.

Within the SCV-PINN framework, the solution is represented through a split complex-valued neural network, while the physical solution is recovered from the real component of the network output. For this benchmark, a network with $4$ hidden layers and $50$ neurons per layer is employed. The hidden layers utilize the CGELU activation function. A total of $1200$ interior collocation points are generated using Latin Hypercube Sampling (LHS), together with 120 boundary points prescribed on $\partial\Omega$.

The numerical results are presented in Fig.~\ref{fig:fig8} and Tables~\ref{tab:tab9}--\ref{tab:tab10}. Figure~\ref{fig:fig8} compares the reference solution, the SCV-PINN prediction, and the corresponding absolute error distribution over the L-shaped domain. The proposed framework accurately captures the solution behavior throughout the irregular geometry, including the vicinity of the re-entrant corner, and achieves a relative $L_2$ error of $1.26\times10^{-2}$, as reported in Table~\ref{tab:tab9}.

A quantitative comparison with the conventional real-valued PINN is provided in Table~\ref{tab:tab10}. For a fair comparison, both models employ identical hyperparameters and are trained using 5000 Adam iterations followed by 5000 L-BFGS iterations. The proposed SCV-PINN attains an MSE of $8.84\times10^{-5}$ and a relative $L_2$ error of $1.26\times10^{-2}$, whereas the corresponding values for the real-valued PINN are $1.41\times10^{-4}$ and $1.59\times10^{-1}$, respectively. The loss convergence histories shown in Fig.~\ref{fig:fig27} further demonstrate the improved optimization behavior of the proposed SCV-PINN under identical training conditions.

\refstepcounter{figure}

\begin{center}
\includegraphics[width=\textwidth]{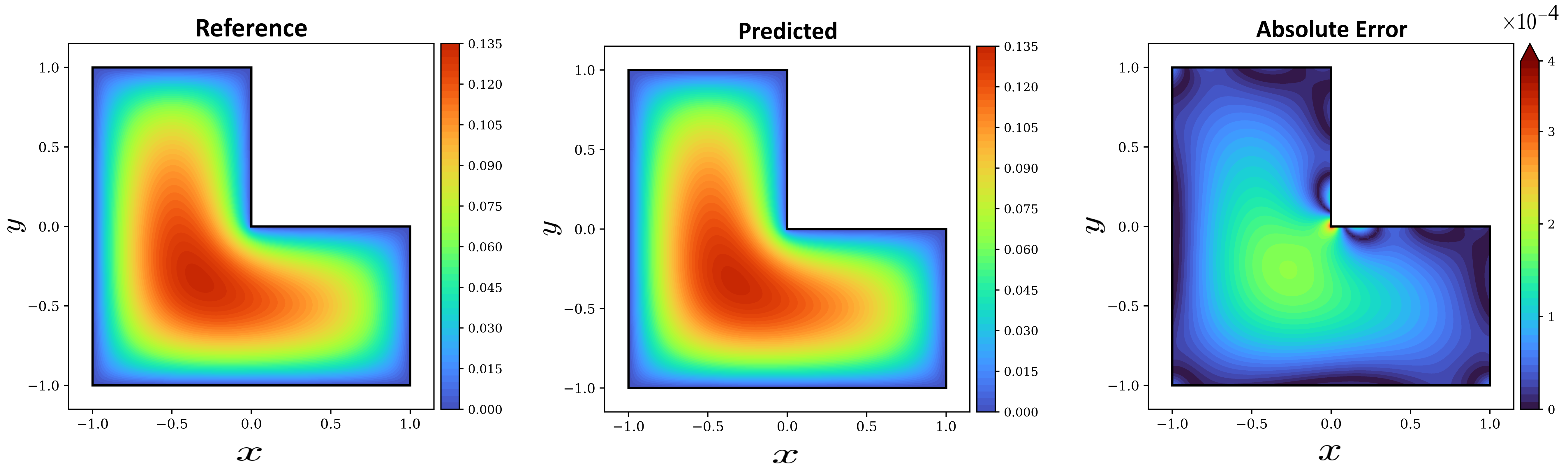}
\end{center}

\vspace{0.5em}

\noindent\textbf{Fig. \thefigure.}
Comparison of the reference solution, SCV-PINN predicted solution, and corresponding absolute error for the Poisson equation on an irregular L-shaped domain.

\label{fig:fig8}

\subsection{Forward Problem: Viscous Burgers' Equation}

To evaluate the performance of the proposed split complex-valued physics-informed neural network (SCV-PINN) for nonlinear convection--diffusion problems, we consider the one-dimensional viscous Burgers' equation,
\begin{equation}
u_t + u u_x - \frac{0.01}{\pi}u_{xx}=0,
\qquad x\in[-1,1], \quad t\in[0,1],
\label{eq:burgers}
\end{equation}
subject to the initial condition
\begin{equation}
u(0,x)=-\sin(\pi x),
\label{eq:burgers_ic}
\end{equation}
and homogeneous Dirichlet boundary conditions
\begin{equation}
u(t,-1)=u(t,1)=0.
\label{eq:burgers_bc}
\end{equation}

The Burgers' equation serves as a classical benchmark for nonlinear PDE solvers due to the interaction between convection and diffusion, which leads to steep gradients and complex solution structures.

Within the SCV-PINN framework, the solution is represented through a split complex-valued neural network, while the physical solution is recovered from the real component of the network output. For this benchmark, an SCV-MLP architecture with eight hidden layers and 20 neurons per layer is employed. The hidden layers utilize the CTanh activation function. Training is performed using $\mathcal{N}_u=100$ initial and boundary points together with $\mathcal{N}_f=10,000$ collocation points generated through Latin Hypercube Sampling (LHS).

The numerical results are summarized in Figs.~\ref{fig:fig9}--\ref{fig:fig10} and Tables~\ref{tab:tab9}--\ref{tab:tab10}. Figure~\ref{fig:fig9} presents the reference solution~\cite{raissi2019physics}, the SCV-PINN prediction, and the corresponding absolute error contours. The predicted solution accurately reproduces the nonlinear spatio-temporal dynamics of the Burgers' equation throughout the computational domain. Further validation is provided in Fig.~\ref{fig:fig10}, where the exact and predicted solutions are compared at four temporal snapshots, $t=0.000,\mathrm{s}$, $t=0.333,\mathrm{s}$, $t=0.667,\mathrm{s}$, and $t=1.000,\mathrm{s}$. In all cases, the predicted profiles closely match the reference solution.

The quantitative error analysis reported in Table~\ref{tab:tab9} shows that the proposed SCV-PINN achieves a relative $L_2$ error of $6.73\times10^{-5}$. A comparison with several representative PINN variants is also provided, where the reported relative $L_2$ errors for PINN~\cite{raissi2019physics}, SA-PINN~\cite{mcclenny2023self}, and LA-PINN~\cite{song2024loss} are $6.7\times10^{-4}$, $4.80\times10^{-4}$, and $2.83\times10^{-4}$, respectively. The proposed SCV-PINN yields the lowest prediction error among these methods.

Fig.~\ref{fig:fig29} compares the relative $L_2$ errors of PINN~\cite{raissi2019physics}, SA-PINN~\cite{mcclenny2023self}, and the proposed SCV-PINN. For a fair comparison, all models employ the same network architecture and training settings. The proposed SCV-PINN achieves the lowest prediction error among the three methods.

A further comparison between the conventional real-valued PINN~\cite{raissi2019physics} and the proposed SCV-PINN is presented in Table~\ref{tab:tab10}. To ensure a fair assessment, both models employ identical network configurations and training settings consisting of $5000$ Adam iterations followed by $5000$ L-BFGS iterations. Under the same computational budget, the SCV-PINN consistently achieves higher predictive accuracy. The corresponding loss convergence histories shown in Fig.~\ref{fig:fig28} further demonstrate the improved optimization performance and convergence behavior of the proposed split complex-valued framework.

\begin{table}[htbp]
\centering
\caption{
Comparison of relative $L_2$ error for the 1-D Burgers' equation using different physics-informed neural network frameworks. For a fair comparison, all models were trained using similar network hyperparameters.
}
\label{tab:tab1}

\vspace{0.3em}

\renewcommand{\arraystretch}{1.15}

\begin{tabular}{lc}
\hline
\textbf{Method} & \textbf{relative $L_2$ error} \\
\hline
PINN~\cite{raissi2019physics} & $6.7 \times 10^{-4}$ \\
SA-PINN~\cite{mcclenny2023self} & $4.80 \times 10^{-4}$ \\
LA-PINN~\cite{song2024loss} & $2.83 \times 10^{-4}$ \\
SCV-PINN (Proposed) & $\mathbf{6.73 \times 10^{-5}}$ \\
\hline
\end{tabular}

\end{table}

\refstepcounter{figure}

\begin{center}
\includegraphics[width=\textwidth]{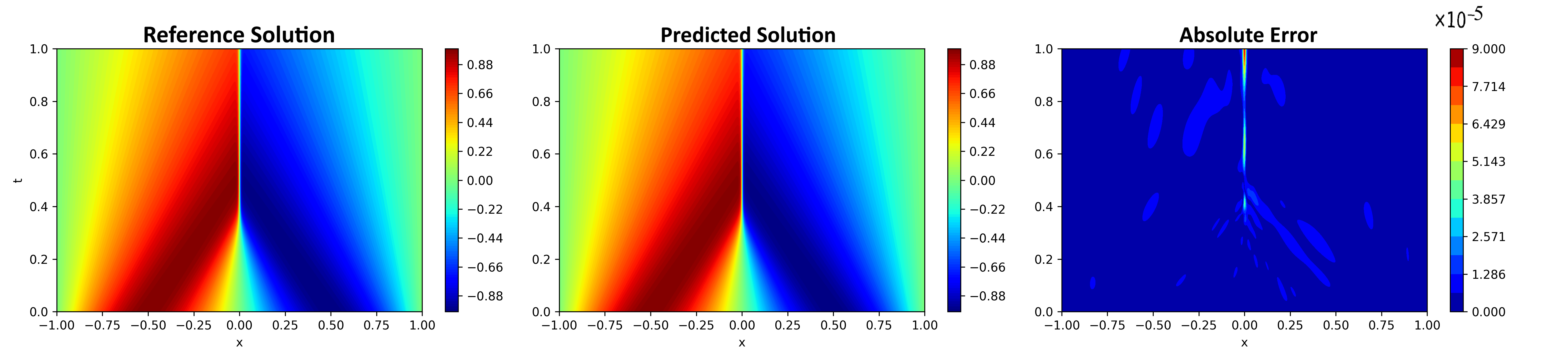}
\end{center}

\vspace{0.5em}

\noindent\textbf{Fig. \thefigure.}
Comparison of the reference solution, predicted solution, and corresponding absolute error for the 1-D Burgers' equation. The proposed SCV-PINN accurately captures the nonlinear shock behavior with a relative $L_2$ error of $6.73\times10^{-5}$.

\label{fig:fig9}

\refstepcounter{figure}

\begin{center}
\includegraphics[width=0.8\textwidth]{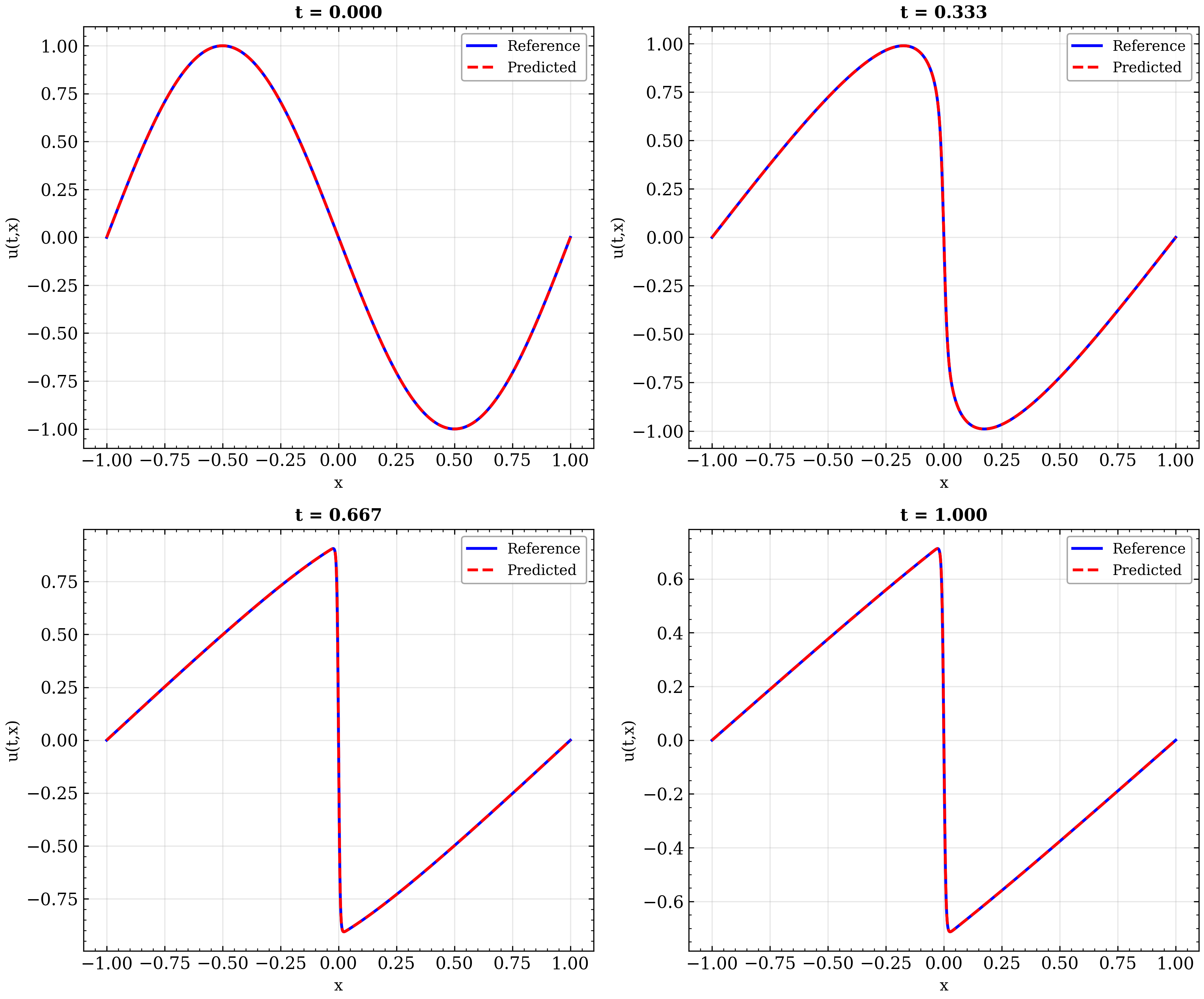}
\end{center}

\vspace{0.5em}

\noindent\textbf{Fig. \thefigure.}
Comparison of the reference and predicted solutions of the 1-D Burgers' equation at four different time instants: $t=0.000\,\mathrm{s}$, $t=0.333\,\mathrm{s}$, $t=0.667\,\mathrm{s}$, and $t=1.000\,\mathrm{s}$.

\label{fig:fig10}

\subsection{Forward Problem: Allen--Cahn Equation}

To further assess the performance of the proposed split complex-valued physics-informed neural network (SCV-PINN) for nonlinear reaction--diffusion systems, we consider the one-dimensional Allen--Cahn equation
\begin{equation}
u_t - 0.0001,u_{xx} + 5u^3 - 5u = 0,
\qquad x \in [-1,1], \quad t \in [0,1].
\label{eq:allen_cahn}
\end{equation}
subject to the initial condition
\begin{equation}
u(0,x)=x^2\cos(\pi x),
\label{eq:allen_initial}
\end{equation}
and periodic boundary conditions
\begin{equation}
u(t,-1)=u(t,1),
\qquad
u_x(t,-1)=u_x(t,1).
\label{eq:allen_bc}
\end{equation}

The Allen--Cahn equation is a widely used phase-field model that describes interface motion and phase separation phenomena. The coexistence of diffusion and nonlinear reaction terms produces sharp transition layers, making the problem a challenging benchmark for physics-informed learning methods. For this problem, the proposed SCV-PINN employs a complex-valued multilayer perceptron with four hidden layers and 128 neurons per layer type of configuration.

The hidden layers utilize the CGELU activation function. The training dataset consists of $\mathcal{N}_f=20,000$ collocation points, $\mathcal{N}_0=100$ initial-condition points, and $\mathcal{N}_b=100$ boundary points generated using Latin Hypercube Sampling (LHS). Owing to the periodic boundary conditions, the total number of boundary training points is effectively $200$. The network is trained using the proposed two-stage optimization strategy based on Adam and L-BFGS.

The numerical results are presented in Fig.~\ref{fig:fig11} and Tables~\ref{tab:tab2} and \ref{tab:tab9}. Figure~\ref{fig:fig11} compares the reference solution~\cite{raissi2019physics}, the SCV-PINN prediction, and the corresponding absolute error distribution. The proposed framework accurately captures the nonlinear spatio-temporal dynamics and achieves a relative $L_2$ error of $4.39\times10^{-4}$, as reported in Table~\ref{tab:tab9}. Additional comparisons at $t=0.10,\mathrm{s}$, $t=0.45,\mathrm{s}$, and $t=0.90,\mathrm{s}$ further demonstrate excellent agreement between the predicted and reference solutions across different temporal locations.

A quantitative comparison with PINN~\cite{raissi2019physics}, SA-PINN~\cite{mcclenny2023self}, and LA-PINN~\cite{song2024loss} is provided in Table~\ref{tab:tab2}. The proposed SCV-PINN achieves the lowest prediction error among the considered methods, indicating the effectiveness of split complex-valued feature representations for learning nonlinear interface dynamics and reaction--diffusion behavior.

\begin{table}[htbp]
\centering
\caption{
Comparison of relative $L_2$ error for the Allen--Cahn equation using different physics-informed neural network frameworks.
}
\label{tab:tab2}

\vspace{0.3em}

\renewcommand{\arraystretch}{1.15}

\begin{tabular}{lc}
\hline
\textbf{Method} & \textbf{relative $L_2$ error} \\
\hline
PINN~\cite{raissi2019physics} & $9.53 \times 10^{-1}$ \\
SA-PINN~\cite{mcclenny2023self}      & $3.34 \times 10^{-2}$ \\
LA-PINN~\cite{song2024loss}      & $8.22 \times 10^{-3}$ \\
SCV-PINN (Proposed) & $\mathbf{4.39 \times 10^{-4}}$ \\
\hline
\end{tabular}

\end{table}

\refstepcounter{figure}

\begin{center}
\includegraphics[width=\textwidth]{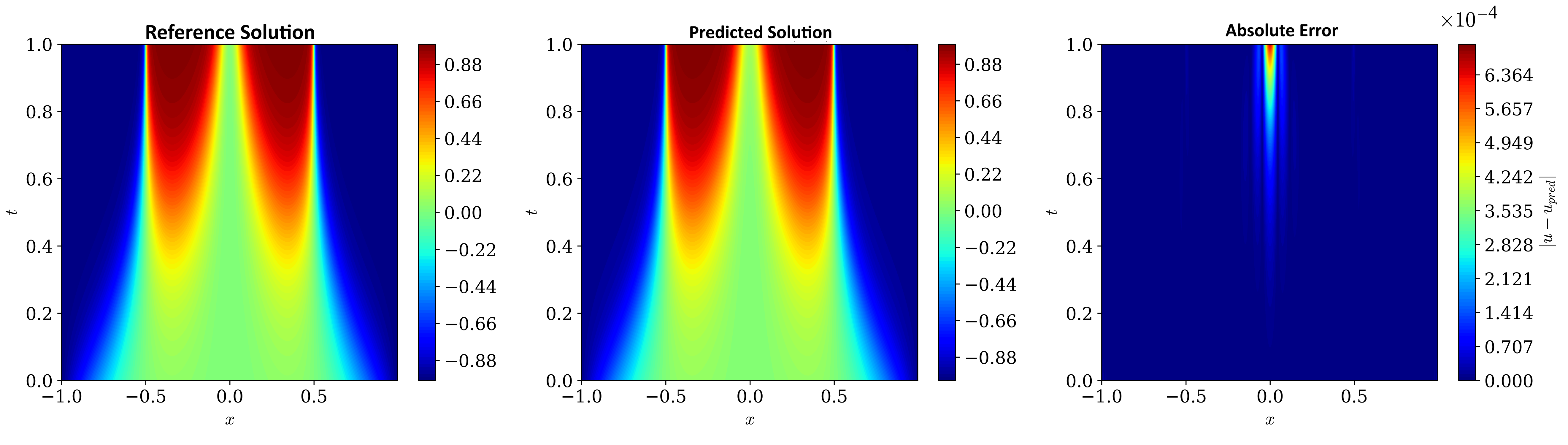}
\end{center}

\vspace{0.5em}

\noindent\textbf{Fig. \thefigure.}
Comparison of the reference solution, predicted solution, and corresponding absolute error for the Allen--Cahn equation.

\label{fig:fig11}
\refstepcounter{figure}

\begin{center}
\includegraphics[width=\textwidth]{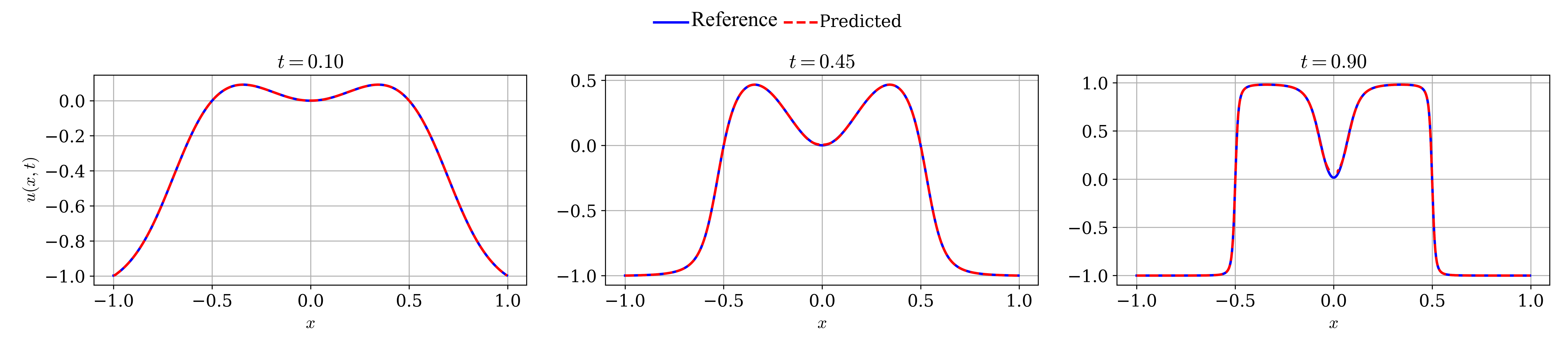}
\end{center}

\vspace{0.5em}

\noindent\textbf{Fig. \thefigure.}
Comparison of the Allen--Cahn solution at three different time instants: $t=0.10\,\mathrm{s}$, $t=0.45\,\mathrm{s}$, and $t=0.90\,\mathrm{s}$.

\label{fig:fig12}

\subsection{Forward Problem: Korteweg--de Vries (KdV) Equation}

To assess the capability of the proposed split complex-valued physics-informed neural network (SCV-PINN) in modeling nonlinear dispersive wave phenomena, we consider the Korteweg--de Vries (KdV) equation

\begin{equation}
\alpha \frac{\partial u}{\partial t}
+
\beta u \frac{\partial u}{\partial x}
+
\gamma \frac{\partial^3 u}{\partial x^3}
=0,
\label{eq:kdv}
\end{equation}

where $\alpha$, $\beta$, and $\gamma$ are constant coefficients. Following the standard benchmark setting, we take $\alpha=1$, $\beta=6$, and $\gamma=1$. The KdV equation is a classical nonlinear wave model that arises in shallow-water hydrodynamics, plasma physics, and nonlinear optics, and is widely used to study soliton propagation and interaction~\cite{segur1973korteweg}.

The analytical two-soliton solution is given by

\begin{equation}
u_{\mathrm{an}}(x,t)=
\frac{
2(c_1-c_2)
\left[
c_1 \cosh^2\!\left(\dfrac{\sqrt{c_2}\zeta_2}{2}\right)
+
c_2 \sinh^2\!\left(\dfrac{\sqrt{c_1}\zeta_1}{2}\right)
\right]
}{
\left[
(\sqrt{c_1}-\sqrt{c_2})
\cosh\!\left(
\dfrac{\sqrt{c_1}\zeta_1+\sqrt{c_2}\zeta_2}{2}
\right)
+
(\sqrt{c_1}+\sqrt{c_2})
\cosh\!\left(
\dfrac{\sqrt{c_1}\zeta_1-\sqrt{c_2}\zeta_2}{2}
\right)
\right]^2
},
\label{eq:kdv_exact}
\end{equation}

where

\begin{equation}
\zeta_i=x-c_i t-x_i,
\qquad i=1,2,
\end{equation}

with

\begin{equation}
x_1=-2,
\qquad
x_2=2,
\qquad
c_1=6,
\qquad
c_2=2.
\end{equation}

The initial and boundary conditions are obtained directly from the analytical solution,

\begin{align}
u(0,x) &= u(x,0),\\
u(t,x_0) &= u(x_0,t),\\
u(t,x_0+L) &= u(x_0+L,t),\\
u_x(t,x_0+L) &= \frac{\partial u}{\partial x}(x_0+L,t).
\end{align}

For this benchmark, the SCV-PINN employs a complex-valued multilayer perceptron consisting of three hidden layers with 30 neurons per layer. The hidden layers use the CGELU activation function. Training is performed using $10\,000$ Adam iterations followed by $20\,000$ L-BFGS iterations for fine-tuning.

The numerical results are summarized in Fig.~\ref{fig:fig13} and Table~\ref{tab:tab9}. Figure~\ref{fig:fig13} presents the reference solution~\cite{urban2025unveiling}, the SCV-PINN prediction, and the corresponding absolute error distribution. The predicted solution accurately reproduces the nonlinear soliton propagation and interaction dynamics over the entire computational domain. The proposed SCV-PINN achieves a relative $L_2$ error of $5.54\times10^{-4}$, as reported in Table~\ref{tab:tab9}, demonstrating its capability to capture complex dispersive wave behavior with high accuracy.

\refstepcounter{figure}

\begin{center}
\includegraphics[width=\textwidth]{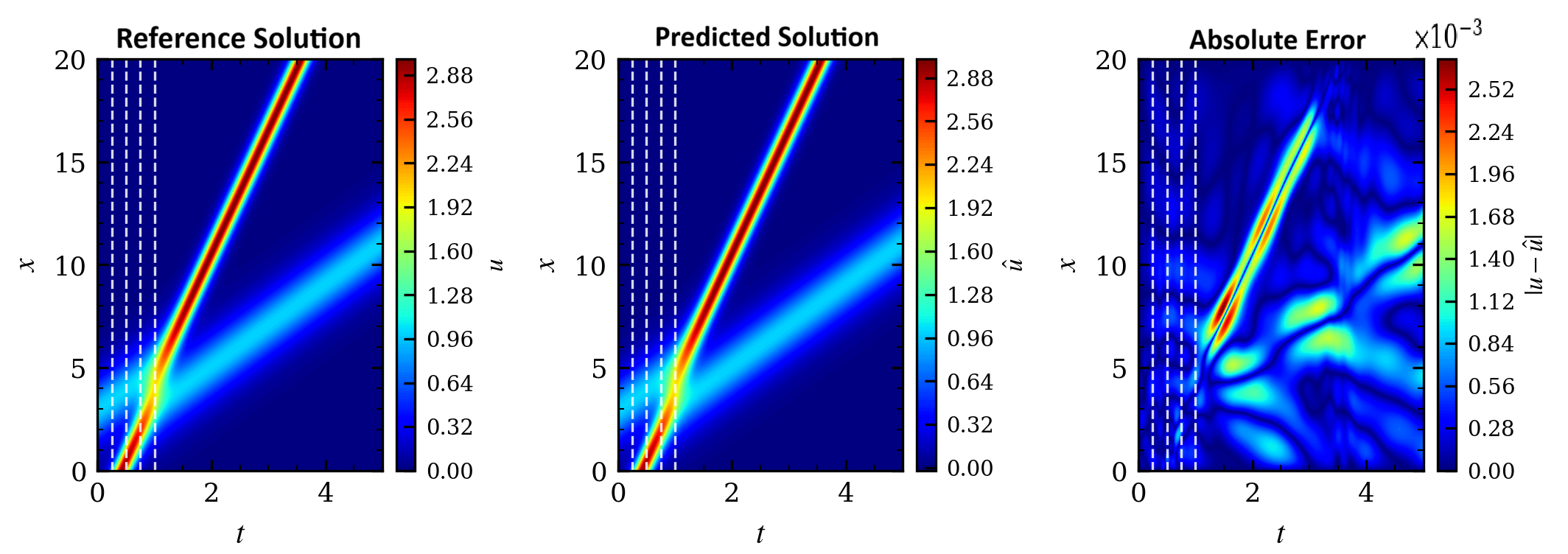}
\end{center}

\vspace{0.5em}

\noindent\textbf{Fig. \thefigure.}
Comparison of the reference solution, predicted solution, and corresponding absolute error for the Korteweg--de Vries (KdV) equation.

\label{fig:fig13}



\subsection{Forward Problem: Helmholtz Equation}

To further investigate the capability of the proposed split complex-valued physics-informed neural network (SCV-PINN) framework in modeling wave propagation phenomena, we consider the two-dimensional Helmholtz equation defined over the spatial domain $(x,y)\in[-1,1]\times[-1,1]$:
\begin{equation}
u_{xx}+u_{yy}+k^2u-q(x,y)=0,
\label{eq:helmholtz}
\end{equation}
subject to homogeneous Dirichlet boundary conditions
\begin{equation}
u(-1,y)=u(1,y)=u(x,-1)=u(x,1)=0.
\label{eq:helmholtz_bc}
\end{equation}

The forcing function $q(x,y)$ is selected such that the governing equation admits an exact analytical solution. Specifically, the source term is given by
\begin{align}
q(x,y) =\,&
-(a_1\pi)^2 \sin(a_1\pi x)\sin(a_2\pi y)
\nonumber\\
&
-(a_2\pi)^2 \sin(a_1\pi x)\sin(a_2\pi y)
\nonumber\\
&
+k^2 \sin(a_1\pi x)\sin(a_2\pi y),
\label{eq:helmholtz_source}
\end{align}
which yields the analytical solution
\begin{equation}
u(x,y)=\sin(a_1\pi x)\sin(a_2\pi y).
\label{eq:helmholtz_exact}
\end{equation}

Following the configuration commonly adopted in the literature, we take
\begin{equation}
a_1=1,
\qquad
a_2=4.
\end{equation}
The Helmholtz equation is widely encountered in acoustics, electromagnetics, optics, and wave propagation problems, where the solution often exhibits highly oscillatory spatial structures. Consequently, it provides an effective benchmark for evaluating the approximation capability of neural network-based PDE solvers.

Within the proposed split complex-valued physics-informed neural network (SCV-PINN) framework, the physical solution is recovered from the real component of the network output. The SCV-MLP architecture consists of four hidden layers with 50 neurons per layer and employs the CTanh activation function in all hidden layers. The collocation points are generated using Latin Hypercube Sampling (LHS), while the network parameters are optimized through a two-stage training procedure consisting of Adam pretraining followed by L-BFGS refinement.

The numerical results are summarized in Figs.~\ref{fig:fig14}--\ref{fig:fig15} and Tables~\ref{tab:tab9}--\ref{tab:tab10}. Figure~\ref{fig:fig14} presents the exact solution, the SCV-PINN prediction, and the corresponding absolute error distribution. The predicted solution accurately reproduces the oscillatory wave patterns throughout the computational domain. The obtained relative $L_2$ error is $2.17\times10^{-4}$, as reported in Table~\ref{tab:tab9}.

To further assess the reconstruction accuracy, Fig.~\ref{fig:fig15} compares the exact and predicted solution profiles along three representative cross-sections, namely $y=-0.5$, $y=0.0$, and $y=0.5$. In all cases, the SCV-PINN prediction remains in close agreement with the analytical solution, demonstrating its capability to capture the spatial oscillations of the Helmholtz problem.

A quantitative comparison between the conventional real-valued PINN and the proposed SCV-PINN is provided in Table~\ref{tab:tab10}. For a fair assessment, both models employ identical network architectures, training data, and optimization settings. The corresponding loss convergence histories are shown in Fig.~\ref{fig:fig27}. Under the same computational budget, the proposed SCV-PINN achieves improved predictive accuracy and faster convergence, highlighting the advantage of split complex-valued feature representations for oscillatory wave propagation problems.

\refstepcounter{figure}

\begin{center}
\includegraphics[width=\textwidth]{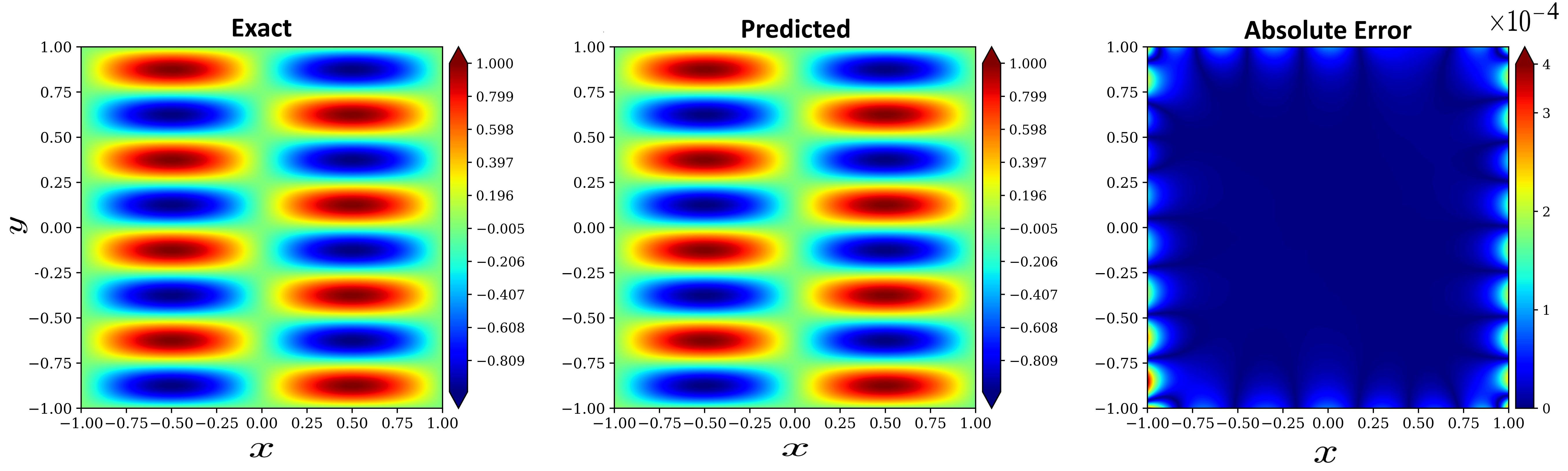}
\end{center}

\vspace{0.5em}

\noindent\textbf{Fig. \thefigure.}
Comparison of the exact and predicted solutions of the Helmholtz equation (regular) for $a_1=1$ and $a_2=4$.

\label{fig:fig14}

\refstepcounter{figure}

\begin{center}
\includegraphics[width=\textwidth]{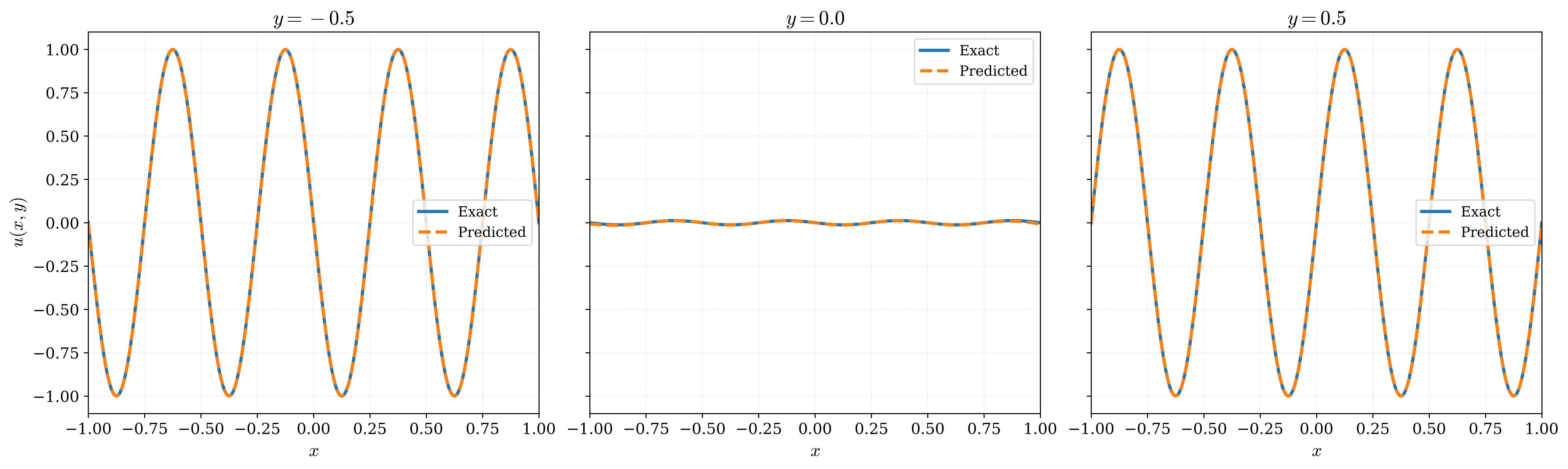}
\end{center}

\vspace{0.5em}

\noindent\textbf{Fig. \thefigure.}
Comparison of the exact and predicted slices of the Helmholtz equation (regular) at $y=0.5$, $y=0.0$, and $y=-0.5$.

\label{fig:fig15}

\subsection{Forward Problem: Helmholtz Equation over a Square Domain with a Circular Hole}

To further investigate the geometric flexibility and boundary handling capability of the proposed split complex-valued physics-informed neural network (SCV-PINN) framework, we consider the Helmholtz equation defined over a two-dimensional square domain containing a circular hole. The computational domain is given by
\begin{equation}
\Omega = [-0.5,0.5]^2 \setminus D(0,R),
\end{equation}
where $D(0,R)$ denotes a circular cavity centered at the origin with radius
\begin{equation}
R=\frac{1}{4}.
\end{equation}

The governing Helmholtz equation is expressed as
\begin{equation}
-\Delta u - k_0^2 u
=
k_0^2 \sin(k_0 x)\sin(k_0 y),
\qquad (x,y)\in\Omega,
\label{eq:helmholtz_hole}
\end{equation}
where the wavenumber is selected as
\begin{equation}
k_0 = 2\pi n,
\qquad n=1.
\end{equation}

The manufactured exact solution corresponding to Eq.~\eqref{eq:helmholtz_hole} is chosen as
\begin{equation}
u_{\mathrm{exact}}(x,y)
=
\sin(k_0 x)\sin(k_0 y).
\label{eq:helmholtz_hole_exact}
\end{equation}

On the outer square boundary $\Gamma_{\mathrm{outer}}$, Dirichlet boundary conditions are imposed according to the analytical solution:
\begin{equation}
u(x,y)\big|_{\Gamma_{\mathrm{outer}}}
=
\sin(k_0 x)\sin(k_0 y).
\end{equation}

On the inner circular boundary $\Gamma_{\mathrm{inner}}$, Neumann boundary conditions are prescribed as
\begin{equation}
\nabla u \cdot \hat{n}
=
g(x,y),
\end{equation}
where $\hat{n}$ denotes the outward unit normal vector of the circular boundary and
\begin{equation}
g(x,y)
=
\left[
k_0\cos(k_0 x)\sin(k_0 y),
\;
k_0\sin(k_0 x)\cos(k_0 y)
\right]
\cdot \hat{n}.
\end{equation}

This problem provides a challenging benchmark for physics-informed neural networks due to the coexistence of oscillatory wave behavior, mixed boundary conditions, and geometric irregularity introduced by the interior circular cavity. Within the proposed split complex-valued physics-informed neural network (SCV-PINN) framework, the physical solution is recovered from the real component of the network output, while the split complex-valued representation enhances the expressive capability of the model.

For the numerical implementation, the SCV-PINN employs three hidden layers with 350 neurons per layer and utilizes the CGELU activation function throughout the network. The training points are generated using Latin Hypercube Sampling (LHS), consisting of 200 Dirichlet boundary points, 200 Neumann boundary points, and 2000 interior collocation points. Spatial derivatives appearing in both the governing equation and Neumann boundary conditions are evaluated through automatic differentiation.

The numerical results are summarized in Fig.~\ref{fig:fig16} and Tables~\ref{tab:tab9}--\ref{tab:tab10}. Figure~\ref{fig:fig16} presents the exact solution, the SCV-PINN prediction, and the corresponding absolute error distribution. The proposed framework accurately captures the oscillatory wave field and boundary interactions within the irregular domain. The obtained relative $L_2$ error is $1.17\times10^{-4}$, as reported in Table~\ref{tab:tab9}.

A quantitative comparison between the conventional real-valued PINN and the proposed SCV-PINN is provided in Table~\ref{tab:tab10}, where both the relative $L_2$ error and mean squared error (MSE) are reported. For a fair assessment, both models employ identical network architectures, training data, and optimization settings. The corresponding loss convergence histories are shown in Fig.~\ref{fig:fig27}. Under the same computational budget, the proposed SCV-PINN consistently achieves superior accuracy and improved convergence behavior, demonstrating its effectiveness for wave propagation problems defined on irregular geometries.

\refstepcounter{figure}

\begin{center}
\includegraphics[width=\textwidth]{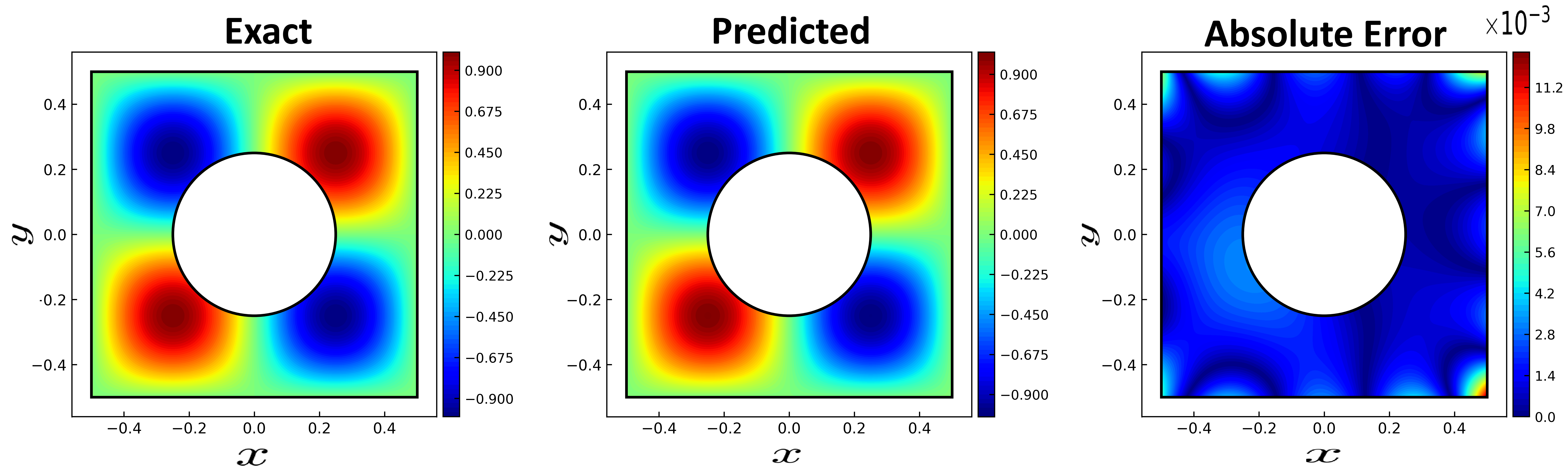}
\end{center}

\vspace{0.5em}

\noindent\textbf{Fig. \thefigure.}
Comparison of the exact solution, predicted solution, and corresponding absolute error for the Helmholtz equation over a domain with a circular hole of radius $R=1/4$.

\label{fig:fig16}

\subsection{Forward Problem: Kovasznay Flow at $\mathrm{Re}=20$}

To assess the performance of the proposed split complex-valued physics-informed neural network (SCV-PINN) for incompressible fluid flow problems, we consider the steady two-dimensional Kovasznay flow governed by the Navier--Stokes equations,
\begin{align}
u\frac{\partial u}{\partial x}
+
v\frac{\partial u}{\partial y}
+
\frac{\partial p}{\partial x}
-
\frac{1}{\mathrm{Re}}
\left(
\frac{\partial^2 u}{\partial x^2}
+
\frac{\partial^2 u}{\partial y^2}
\right)
&=0,
\\
u\frac{\partial v}{\partial x}
+
v\frac{\partial v}{\partial y}
+
\frac{\partial p}{\partial y}
-
\frac{1}{\mathrm{Re}}
\left(
\frac{\partial^2 v}{\partial x^2}
+
\frac{\partial^2 v}{\partial y^2}
\right)
&=0,
\\
\frac{\partial u}{\partial x}
+
\frac{\partial v}{\partial y}
&=0,
\end{align}
defined over
\begin{equation}
(x,y)\in[-0.5,1.0]\times[-0.5,1.5].
\end{equation}

The Reynolds number is fixed at
\begin{equation}
\mathrm{Re}=20,
\qquad
\nu=\frac{1}{\mathrm{Re}}=0.05,
\end{equation}
with
\begin{equation}
\lambda
=
\frac{1}{2\nu}
-
\sqrt{
\frac{1}{4\nu^2}
+
4\pi^2
}.
\end{equation}

The analytical solution is given by
\begin{align}
u(x,y)
&=
1-e^{\lambda x}\cos(2\pi y),
\\
v(x,y)
&=
\frac{\lambda}{2\pi}
e^{\lambda x}
\sin(2\pi y),
\\
p(x,y)
&=
\frac{1}{2}
\left(
1-e^{2\lambda x}
\right).
\end{align}

Velocity boundary conditions are imposed from the analytical solution on all boundaries, while the pressure is prescribed on the outlet boundary. The proposed SCV-PINN employs four hidden layers with 50 neurons per layer, where the outputs represent $(u,v,p)$. The hidden layers use the CTanh activation function. The training dataset consists of 2601 interior collocation points, 400 velocity boundary points, and 100 pressure boundary points.

The numerical results are presented in Figs.~\ref{fig:fig17} and \ref{fig:fig19} and Tables~\ref{tab:tab9}--\ref{tab:tab10}. Figure~\ref{fig:fig17} compares the exact and SCV-PINN predicted $u$-velocity, $v$-velocity, and pressure fields together with their corresponding absolute error distributions. The predicted flow variables closely match the analytical solution throughout the computational domain. The obtained relative $L_2$ errors are $8.20\times10^{-6}$, $3.62\times10^{-5}$, and $2.31\times10^{-5}$ for the $u$-velocity, $v$-velocity, and pressure fields, respectively. Figure~\ref{fig:fig19} further compares the exact and predicted streamline patterns, demonstrating accurate reconstruction of the vortical flow structure.

A quantitative comparison between the conventional real-valued PINN and the proposed SCV-PINN is reported in Table~\ref{tab:tab10}, where both the relative $L_2$ error and MSE are presented. For a fair comparison, identical network architectures and training settings are employed for both models. The corresponding loss convergence histories are shown in Fig.~\ref{fig:fig27}. Under the same computational budget, the proposed SCV-PINN achieves lower prediction errors and improved convergence behavior, highlighting its effectiveness for incompressible flow simulations.

\refstepcounter{figure}

\begin{center}
\includegraphics[width=\textwidth]{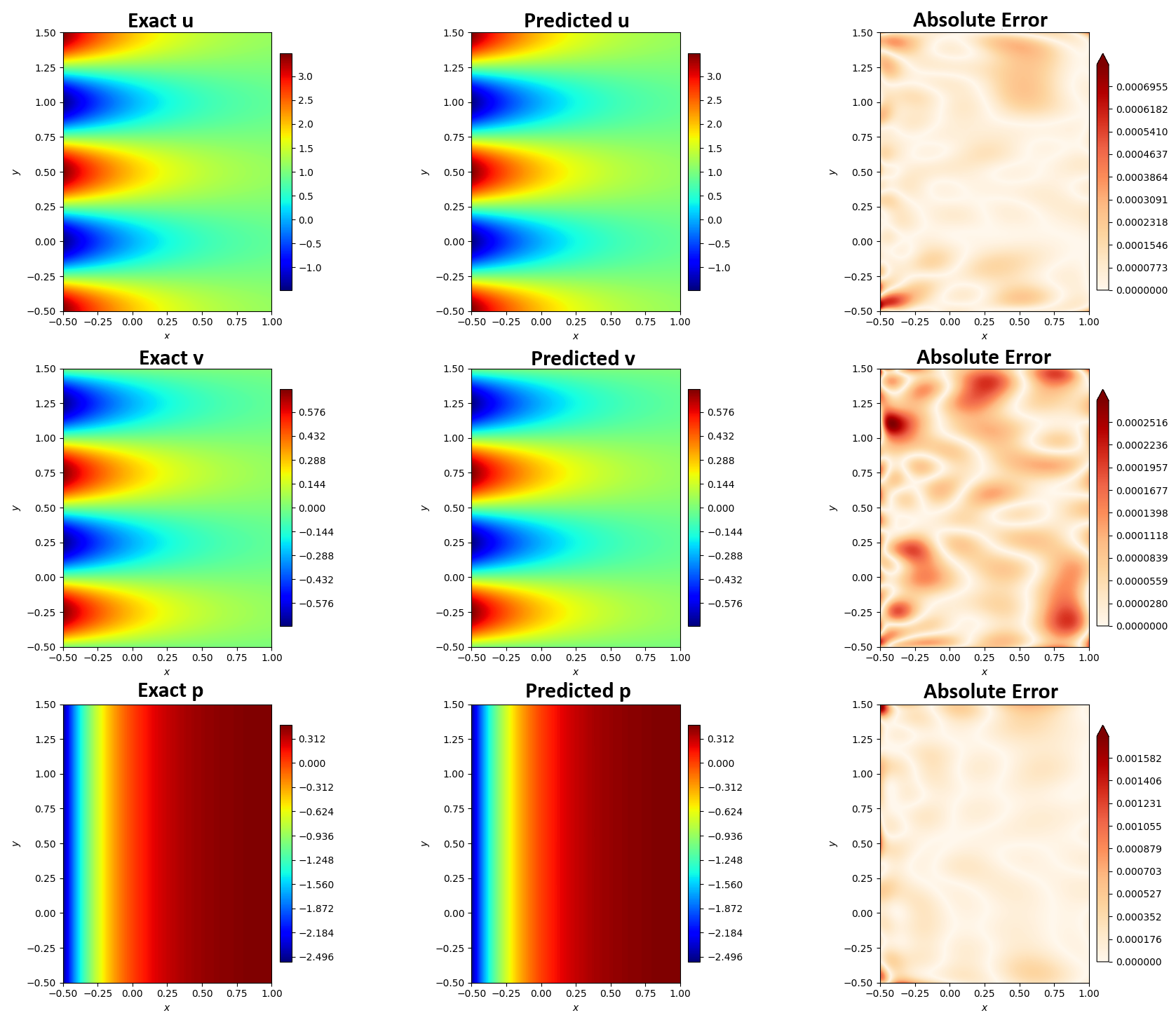}
\end{center}

\vspace{0.5em}

\noindent\textbf{Fig. \thefigure.}
Comparison of the exact and predicted $u$-velocity, $v$-velocity, and pressure fields for the Kovasznay flow at $Re=20$, along with the corresponding absolute error.

\label{fig:fig17}

\refstepcounter{figure}

\begin{center}
\includegraphics[width=\textwidth]{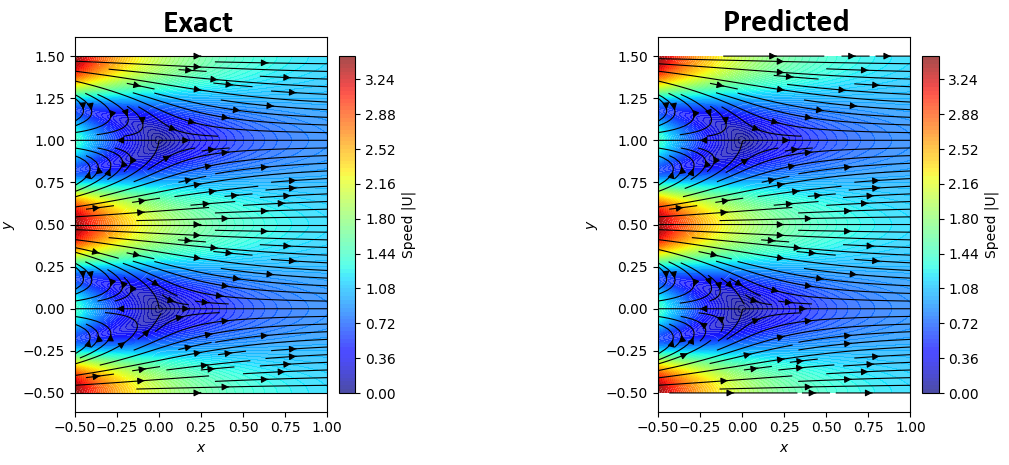}
\end{center}

\vspace{0.5em}

\noindent\textbf{Fig. \thefigure.}
Comparison of the exact and predicted streamline of the Kovasznay flow at $Re=20$.

\label{fig:fig18}

\subsection{Forward Problem: Lid-Driven Cavity Flow at $\mathrm{Re}=100$}

To further assess the capability of the proposed split complex-valued physics-informed neural network (SCV-PINN) for incompressible viscous flows, we consider the classical lid-driven cavity benchmark at $\mathrm{Re}=100$. Owing to the presence of strong velocity gradients, vortex formation, and nonlinear momentum transport, this problem serves as a standard validation case for fluid flow solvers.

The governing equations are the steady incompressible two-dimensional Navier--Stokes equations,
\begin{align}
u_x + v_y &= 0, \\
u u_x + v u_y + \frac{1}{\rho}p_x - \nu (u_{xx}+u_{yy}) &= 0, \\
u v_x + v v_y + \frac{1}{\rho}p_y - \nu (v_{xx}+v_{yy}) &= 0,
\end{align}
where $(u,v)$ denote the velocity components, $p$ is the pressure field, $\rho=1$, and $\nu=0.01$. The computational domain is defined as $(x,y)\in[0,1]\times[0,1]$.

The top lid moves with a constant horizontal velocity,
\begin{equation}
u=1, \qquad v=0,
\end{equation}
while no-slip boundary conditions are imposed on the remaining walls,
\begin{equation}
u=0, \qquad v=0.
\end{equation}

The proposed SCV-PINN employs eight hidden layers with 20 neurons per layer and utilizes the CGELU activation function throughout the hidden layers. Latin Hypercube Sampling (LHS) is used to generate $\mathcal{N}_f=20{,}000$ interior collocation points, while $\mathcal{N}_b=4{,}000$ boundary points are used to enforce the boundary conditions.

The numerical results are presented in Fig.~\ref{fig:fig19} and Tables~\ref{tab:tab3}, \ref{tab:tab9}, and \ref{tab:tab10}. Figure~\ref{fig:fig19} compares the benchmark velocity and pressure fields with the corresponding SCV-PINN predictions. The proposed framework accurately captures the primary vortex structure and pressure distribution within the cavity. The obtained relative $L_2$ errors are $2.64\times10^{-3}$, $8.53\times10^{-3}$, and $6.04\times10^{-2}$ for the $u$-velocity, $v$-velocity, and pressure fields, respectively.

A comparison with the LA-PINN~\cite{song2024loss} results reported in Table~\ref{tab:tab3} shows relative $L_2$ errors of $3.84\times10^{-2}$, $4.92\times10^{-2}$, and $2.04\times10^{-1}$ for the $u$-velocity, $v$-velocity, and pressure fields, respectively, indicating the improved predictive accuracy of the proposed SCV-PINN. Further quantitative comparisons between the conventional real-valued PINN and the proposed SCV-PINN are reported in Table~\ref{tab:tab10}, where both the relative $L_2$ error and MSE are presented. For a fair assessment, identical network architectures, training data, and optimization settings are employed for both models. The corresponding loss convergence histories are shown in Fig.~\ref{fig:fig27}, where the SCV-PINN exhibits improved convergence and lower final loss under the same computational budget.

\begin{table}[htbp]
\centering
\caption{
Comparison of relative $L_2$ error for the lid-driven cavity flow problem at $Re=100$ using LA-PINN~\cite{song2024loss} and the proposed SCV-PINN framework.
}
\label{tab:tab3}

\vspace{0.3em}

\renewcommand{\arraystretch}{1.15}

\begin{tabular}{lccc}
\hline
\textbf{Method} & \textbf{$u$-velocity} & \textbf{$v$-velocity} & \textbf{Pressure} \\
\hline
LA-PINN~\cite{song2024loss} & $3.84 \times 10^{-2}$ & $4.92 \times 10^{-2}$ & $2.04 \times 10^{-1}$ \\

SCV-PINN (Proposed) &
$\mathbf{2.64 \times 10^{-3}}$ &
$\mathbf{8.53 \times 10^{-3}}$ &
$\mathbf{6.04 \times 10^{-2}}$ \\
\hline
\end{tabular}

\end{table}

\refstepcounter{figure}

\begin{center}
\includegraphics[width=\textwidth]{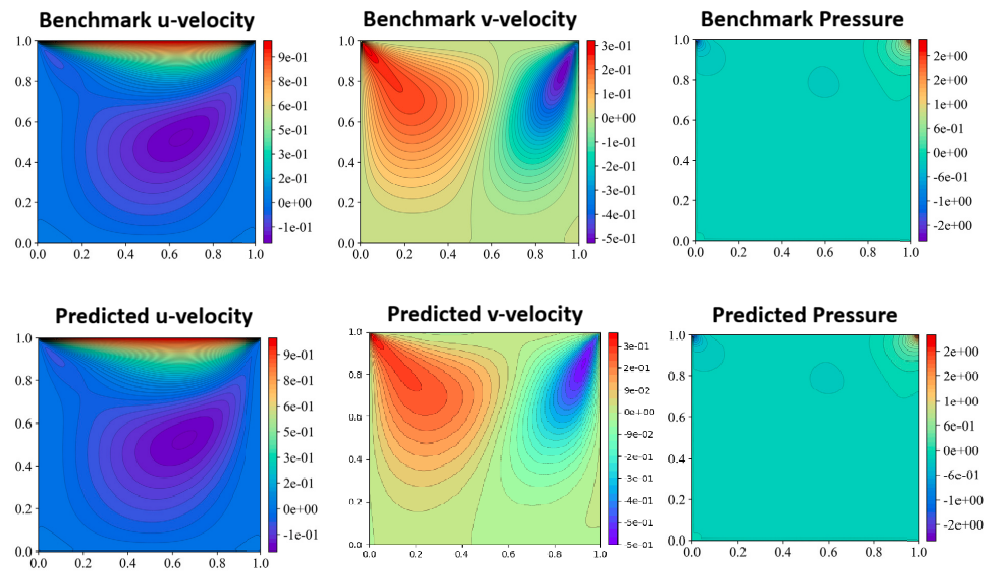}
\end{center}

\vspace{0.5em}

\noindent\textbf{Fig. \thefigure.}
Comparison of the benchmark and predicted $u$-velocity, $v$-velocity, and pressure fields for the lid-driven cavity flow at $Re=100$. The proposed SCV-PINN accurately captures the incompressible flow structures and pressure distribution. The corresponding relative $L_2$ errors for $u$, $v$, and $p$ are $2.64\times10^{-3}$, $8.53\times10^{-3}$, and $6.04\times10^{-2}$, respectively.

\label{fig:fig19}

\subsection{Inverse Problem: Lorenz Dynamical System}

To assess the inverse modeling capability of the proposed split complex-valued physics-informed neural network (SCV-PINN), we consider the classical Lorenz system, where the objective is to simultaneously reconstruct the system trajectories and identify unknown governing parameters from sparse observations. The governing equations are

\begin{equation}
\frac{dy_1}{dt}=\lambda_1(y_2-y_1),
\end{equation}

\begin{equation}
\frac{dy_2}{dt}=y_1(\lambda_2-y_3)-y_2,
\end{equation}

\begin{equation}
\frac{dy_3}{dt}=y_1y_2-\lambda_3y_3,
\end{equation}
subject to
\begin{equation}
y_1(0)=-8,\qquad
y_2(0)=7,\qquad
y_3(0)=27.
\end{equation}

The unknown parameters are
\begin{equation}
\lambda_1=10,\qquad
\lambda_2=15,\qquad
\lambda_3=\frac{8}{3},
\end{equation}
and the temporal domain is defined as
\begin{equation}
t\in[0,3].
\end{equation}

The SCV-PINN employs three hidden layers with 40 neurons per layer and utilizes the CGELU activation function. Latin Hypercube Sampling (LHS) is adopted to generate the residual collocation points, while only 25 numerical data points are used for parameter identification. For details of the reference solution, readers are referred to \cite{lu2021deepxde}.
. The residual functions are defined as

\begin{equation}
f_1=\frac{dy_1}{dt}-\lambda_1(y_2-y_1),
\end{equation}

\begin{equation}
f_2=\frac{dy_2}{dt}-y_1(\lambda_2-y_3)+y_2,
\end{equation}

\begin{equation}
f_3=\frac{dy_3}{dt}-y_1y_2+\lambda_3y_3.
\end{equation}

The numerical results are summarized in Figs.~\ref{fig:fig20}--\ref{fig:fig21} and Tables~\ref{tab:tab4}, \ref{tab:tab5}, and \ref{tab:tab10}. Figure~\ref{fig:fig20} presents the convergence histories of the identified parameters $\lambda_1$, $\lambda_2$, and $\lambda_3$, demonstrating stable convergence toward their corresponding exact values during training. Figure~\ref{fig:fig21} shows the sparse observation data together with the reconstructed trajectories, indicating that the proposed SCV-PINN accurately recovers the complete Lorenz attractor from limited measurements.

Table~\ref{tab:tab4} compares the true parameter values with the predictions obtained using the conventional PINN and the proposed SCV-PINN. The identified parameters obtained by SCV-PINN are in excellent agreement with the reference values. Table~\ref{tab:tab5} further reports the reconstructed Lorenz system and corresponding prediction accuracy, where the proposed SCV-PINN consistently outperforms the real-valued PINN.

A quantitative comparison of the identified parameters and reconstructed state variables $(y_1,y_2,y_3)$ is provided in Table~\ref{tab:tab10}, where both the relative $L_2$ error and MSE are reported for PINN and SCV-PINN. For a fair assessment, identical network architectures and training settings are employed for both models. The corresponding loss convergence histories are shown in Fig.~\ref{fig:fig28}. Under the same computational budget, the proposed SCV-PINN achieves lower prediction errors and faster convergence, demonstrating its effectiveness for inverse identification of nonlinear chaotic dynamical systems.

\begin{table}[htbp]
\centering
\caption{
Comparison of parameter identification results for the Lorenz system using PINN~\cite{lu2021deepxde} and the proposed SCV-PINN framework.
}
\label{tab:tab4}

\vspace{0.3em}

\renewcommand{\arraystretch}{1.15}

\begin{tabular}{lccc}
\hline
\textbf{Method} & $\boldsymbol{\lambda_1}$ & $\boldsymbol{\lambda_2}$ & $\boldsymbol{\lambda_3}$ \\
\hline
True Values & $10.000$ & $15.000$ & $2.667$ \\

PINN~\cite{lu2021deepxde} & $10.002$ & $14.999$ & $2.668$ \\

SCV-PINN (Proposed) &
$\mathbf{10.001}$ &
$\mathbf{15.000}$ &
$\mathbf{2.667}$ \\
\hline
\end{tabular}

\end{table}

\begin{table}[htbp]
\centering
\caption{
Comparison of the exact and identified Lorenz system equations obtained using PINN~\cite{lu2021deepxde} and the proposed SCV-PINN framework.
}
\label{tab:tab5}

\vspace{0.3em}

\renewcommand{\arraystretch}{1.35}

\begin{tabular}{|c|c|}
\hline

\textbf{Equation Type} & \textbf{Lorenz System Equations} \\

\hline

Correct System &
$\begin{aligned}
\frac{dy_1}{dt} &= 10.000(y_2 - y_1) \\
\frac{dy_2}{dt} &= y_1(15.000 - y_3) - y_2 \\
\frac{dy_3}{dt} &= y_1y_2 - 2.667y_3
\end{aligned}$ \\

\hline

PINN~\cite{lu2021deepxde} Identified System &
$\begin{aligned}
\frac{dy_1}{dt} &= 10.002(y_2 - y_1) \\
\frac{dy_2}{dt} &= y_1(14.999 - y_3) - y_2 \\
\frac{dy_3}{dt} &= y_1y_2 - 2.668y_3
\end{aligned}$ \\

\hline

SCV-PINN Identified System &
$\begin{aligned}
\frac{dy_1}{dt} &= 10.001(y_2 - y_1) \\
\frac{dy_2}{dt} &= y_1(15.000 - y_3) - y_2 \\
\frac{dy_3}{dt} &= y_1y_2 - 2.667y_3
\end{aligned}$ \\

\hline

\end{tabular}

\end{table}

\refstepcounter{figure}

\begin{center}
\includegraphics[width=\textwidth]{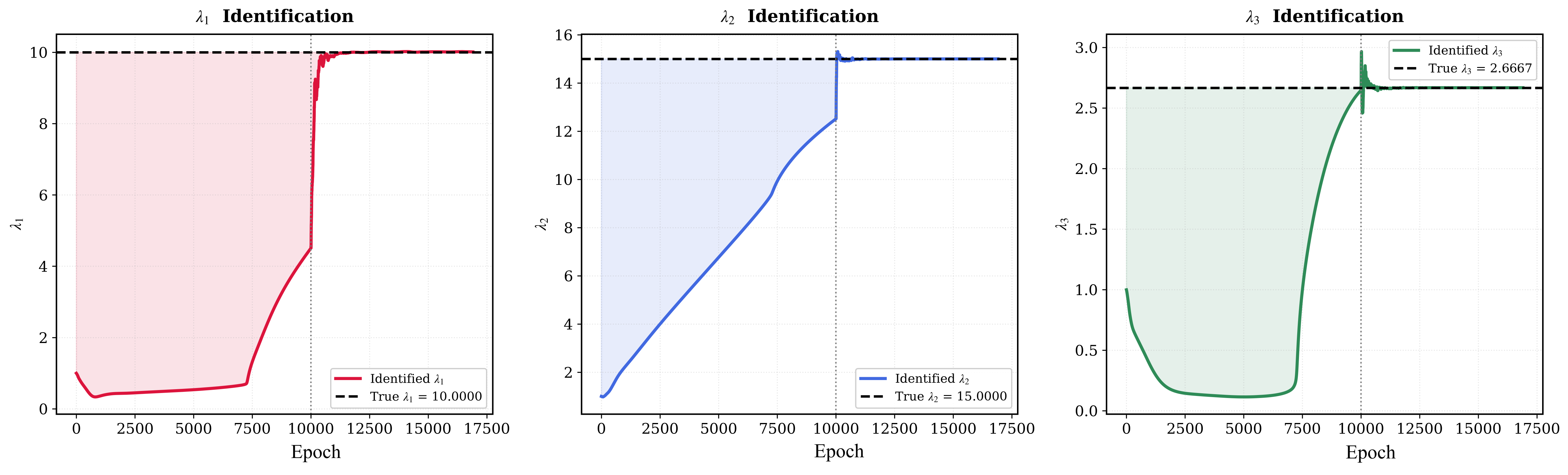}
\end{center}

\vspace{0.5em}

\noindent\textbf{Fig. \thefigure.}
Inverse identification history of the Lorenz system parameters: Prandtl number $(\lambda_1)$, Rayleigh number $(\lambda_2)$, and geometric aspect ratio $(\lambda_3)$ with respect to training epochs. The exact values are $(10.000,\,15.000,\,2.667)$, while the proposed SCV-PINN predicts $(10.001,\,15.000,\,2.667)$, showing excellent agreement with the reference parameters.

\label{fig:fig20}

\refstepcounter{figure}

\begin{center}
\includegraphics[width=\textwidth]{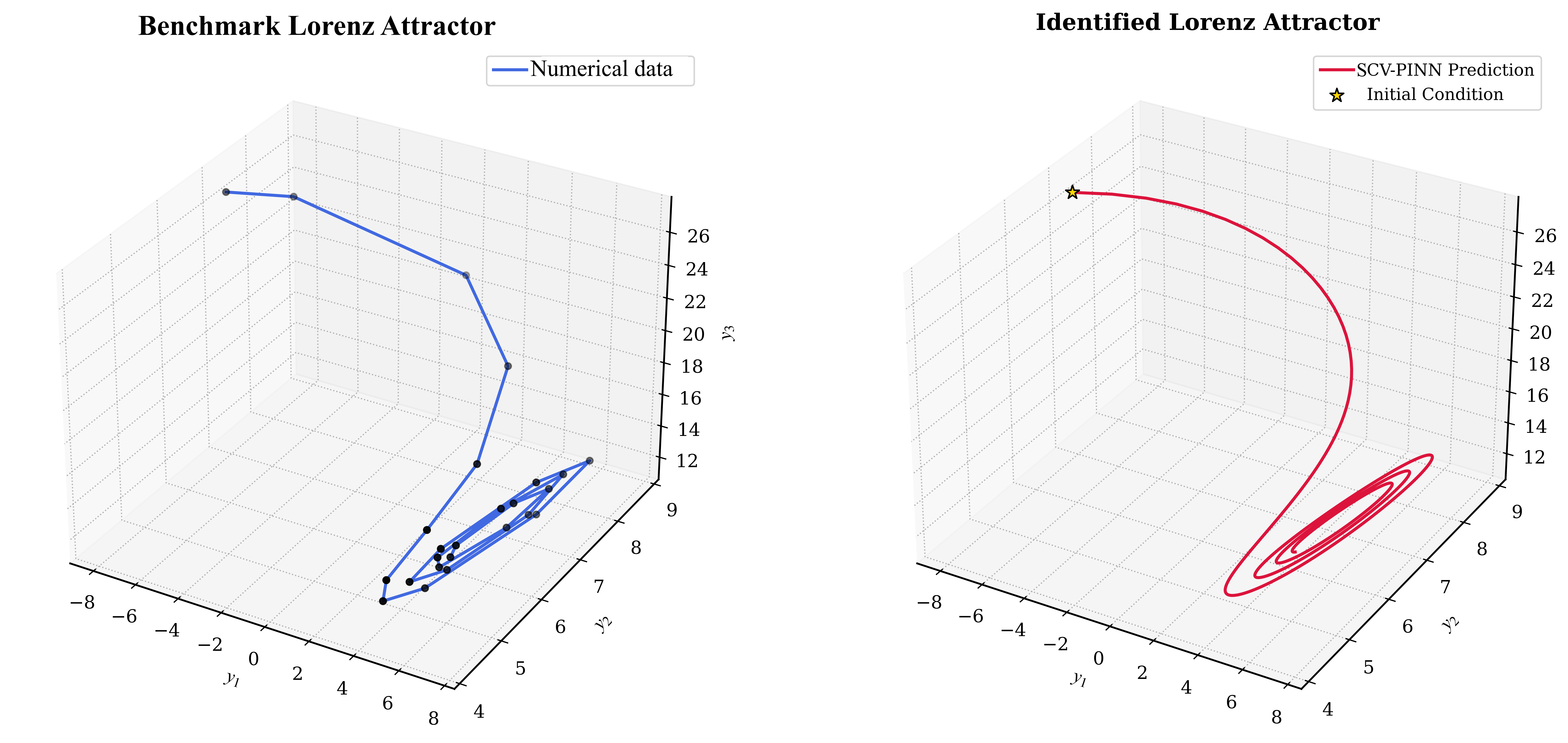}
\end{center}

\vspace{0.5em}

\noindent\textbf{Fig. \thefigure.}
Lorenz system dynamics: left, numerical data points; right, predicted solution trajectory obtained using the proposed SCV-PINN. The proposed framework accurately captures the chaotic butterfly-shaped attractor and nonlinear temporal evolution.

\label{fig:fig21}

\subsection{Inverse Problem: Burgers' Equation}

As a second inverse benchmark, we consider the one-dimensional Burgers' equation
\begin{equation}
u_t+\lambda_1uu_x-\lambda_2u_{xx}=0,
\end{equation}
defined on
\begin{equation}
x\in[-1,1], \qquad t\in[0,1],
\end{equation}
where $\lambda_1$ and $\lambda_2$ denote the unknown convection and diffusion coefficients, respectively. The reference values are
\begin{equation}
\lambda_1=1.0,
\qquad
\lambda_2=\frac{0.01}{\pi}.
\end{equation}

The objective is to simultaneously reconstruct the solution field and identify the unknown physical parameters from scattered observations. For this purpose, the proposed split complex-valued physics-informed neural network (SCV-PINN) employs seven hidden layers with 20 neurons per layer and utilizes the CTanh activation function. The physical solution is recovered from the real component of the network output, while the PDE residual is enforced through automatic differentiation. A total of $\mathcal{N}_u=2000$ observational data points are used together with the governing equation residual in the loss function. Model training is performed using the Adam optimizer followed by L-BFGS fine-tuning.

The identified governing equation and recovered parameters are reported in Table~\ref{tab:tab6}. The proposed SCV-PINN achieves accurate parameter identification using a network with seven hidden layers, while the reference PINN employs a deeper architecture consisting of nine hidden layers with 20 neurons per layer. Despite the reduced network complexity, the SCV-PINN provides comparable or improved parameter recovery.

The reconstructed solution field is presented in Fig.~\ref{fig:fig22}, where the exact solution, SCV-PINN prediction, and corresponding absolute error are compared. The figure also includes solution profiles at $t=0.25\,\mathrm{s}$, $t=0.50\,\mathrm{s}$, and $t=0.75\,\mathrm{s}$, demonstrating excellent agreement between the identified and reference solutions throughout the spatio-temporal domain. Figure~\ref{fig:fig23} illustrates the convergence histories of the identified parameters and compares the optimization behavior of the PINN and SCV-PINN frameworks.

A quantitative comparison is provided in Table~\ref{tab:tab10}, where the relative $L_2$ errors and mean squared errors (MSE) for the identified parameters $\lambda_1$, $\lambda_2$, and the reconstructed solution $u$ are reported for both PINN and SCV-PINN. For a fair assessment, identical training settings and network configurations are employed. The results indicate that the proposed SCV-PINN consistently achieves lower identification and reconstruction errors, highlighting its effectiveness for inverse nonlinear PDE problems.

\refstepcounter{figure}

\begin{center}
\includegraphics[width=\textwidth]{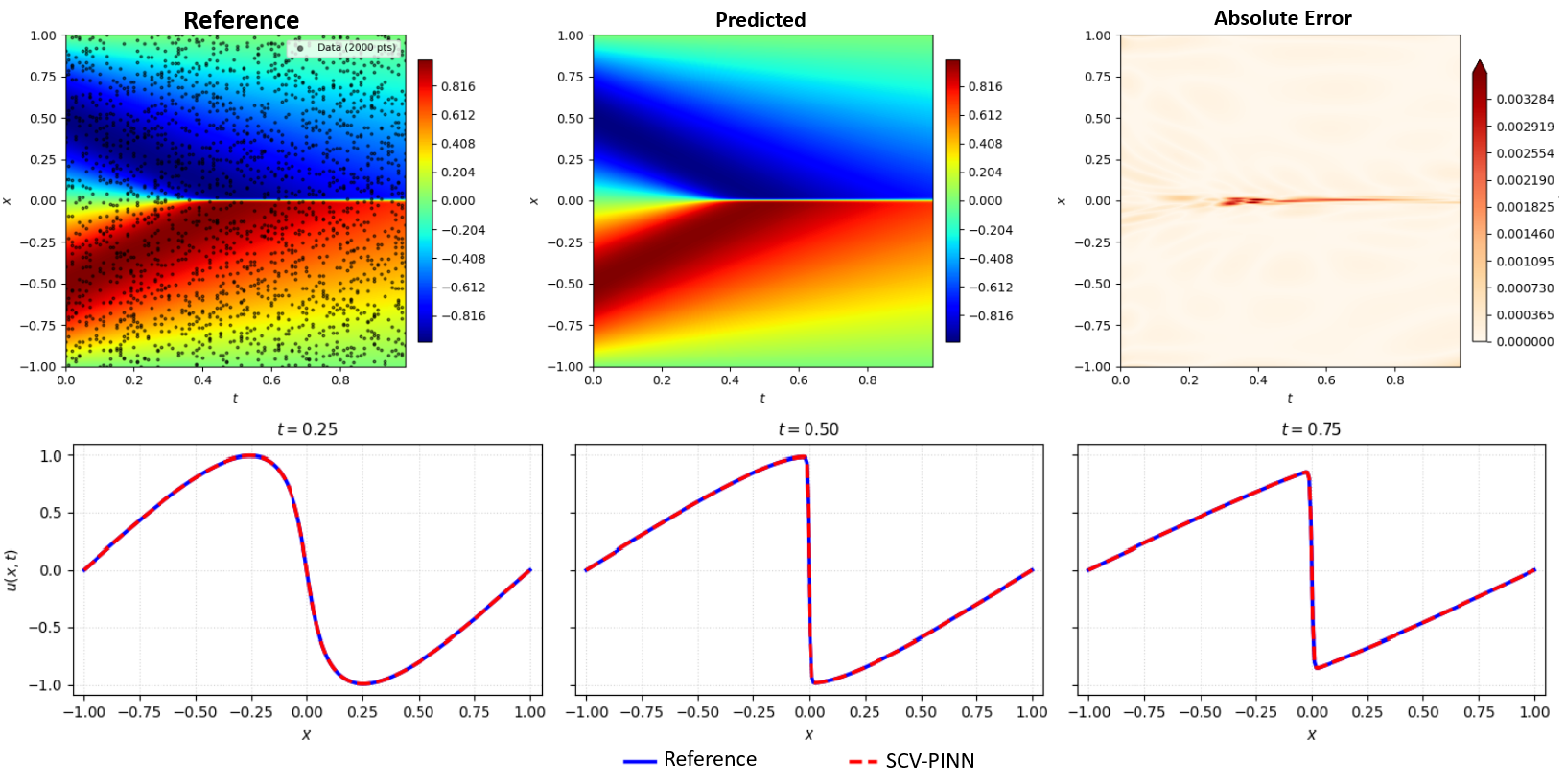}
\end{center}

\vspace{0.5em}

\noindent\textbf{Fig. \thefigure.}
Inverse Burgers' equation results obtained using the proposed SCV-PINN with a $7$-hidden-layer and $20$-neurons-per-layer architecture. The top row shows the exact solution with $2000$ observation data points (left), predicted solution (middle), and corresponding absolute error (right). The bottom row presents the reference and predicted solution comparisons at $t=0.25\,\mathrm{s}$, $t=0.50\,\mathrm{s}$, and $t=0.75\,\mathrm{s}$.

\label{fig:fig22}

\refstepcounter{figure}

\begin{center}
\includegraphics[width=\textwidth]{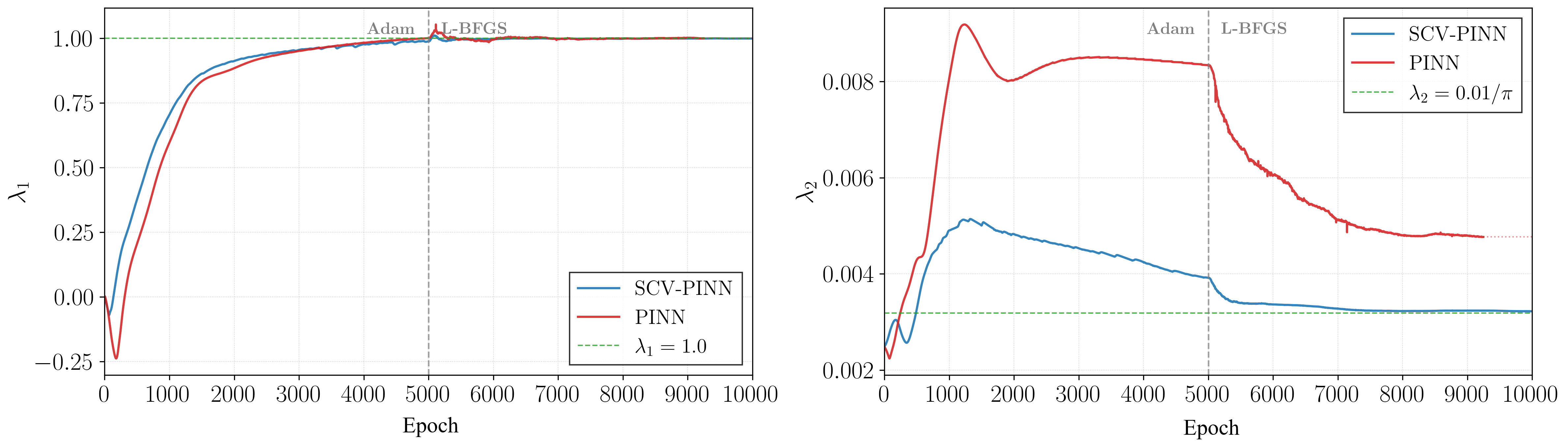}
\end{center}

\vspace{0.5em}

\noindent\textbf{Fig. \thefigure.}
Convergence history of the Burgers' inverse problem parameters: convection coefficient $(\lambda_1)$ and diffusion coefficient $(\lambda_2)$ with respect to training epochs. The comparison between PINN and the proposed SCV-PINN is performed using the same architecture consisting of $8$ hidden layers with $20$ neurons per layer.

\label{fig:fig23}

\begin{table}[htbp]
\centering
\caption{Comparison of the identified Burgers' equation using the conventional PINN and the proposed SCV-PINN frameworks. For a fair comparison, the PINN~\cite{raissi2019physics} model employs 9 hidden layers with 20 neurons per layer, whereas the proposed SCV-PINN uses 7 hidden layers with 20 neurons per layer.}
\label{tab:tab6}
\renewcommand{\arraystretch}{1.2}
\begin{tabular}{|c|c|}
\hline
\textbf{Method} & \textbf{Identified PDE} \\
\hline

Correct PDE 
&
$
u_t + u u_x - 0.0031831\,u_{xx} = 0
$
\\
\hline

PINN~\cite{raissi2019physics} 
(9 hidden layers, 20 neurons/layer)
&
$
u_t + 0.99915\,u u_x - 0.0031794\,u_{xx} = 0
$
\\
\hline

Proposed SCV-PINN 
(7 hidden layers, 20 neurons/layer)
&
$
u_t + 1.00005\,u u_x - 0.0031864\,u_{xx} = 0
$
\\
\hline

\end{tabular}
\end{table}

\subsection{Inverse Problem: Navier--Stokes Equation}

To assess the inverse modeling capability of the proposed split complex-valued physics-informed neural network (SCV-PINN), we consider the two-dimensional incompressible Navier--Stokes equations,

\begin{equation}
u_t+\lambda_1(uu_x+vu_y)
=
-p_x+\lambda_2(u_{xx}+u_{yy}),
\end{equation}

\begin{equation}
v_t+\lambda_1(uv_x+vv_y)
=
-p_y+\lambda_2(v_{xx}+v_{yy}),
\end{equation}

subject to the continuity constraint

\begin{equation}
u_x+v_y=0,
\end{equation}

where $u$, $v$, and $p$ denote the velocity and pressure fields, respectively. The unknown parameters $\lambda_1$ and $\lambda_2$ represent the nonlinear convection and viscosity coefficients. The exact values are

\begin{equation}
\lambda_1=1.0,
\qquad
\lambda_2=0.01.
\end{equation}

The computational domain is defined as

\begin{equation}
x\in[1,8],
\qquad
y\in[-2,2],
\qquad
t\in[0,7].
\end{equation}

To satisfy the incompressibility condition, the velocity field is represented through the stream-function formulation

\begin{equation}
u=\psi_y,
\qquad
v=-\psi_x.
\end{equation}

The SCV-PINN employs a network with nine hidden layers and 20 neurons per layer using the CTanh activation function. A total of 5000 observational data points and $5000$ collocation points are used during training. The unknown parameters are identified simultaneously with the reconstruction of the velocity and pressure fields through the minimization of the data and physics residual losses.

The identified parameter values obtained using the proposed SCV-PINN are in close agreement with the true values. Table~\ref{tab:tab7} presents a comparison of the identified parameters obtained using PINN (vanila-PINN) and SCV-PINN under identical network configurations. Furthermore, Table~\ref{tab:tab8} compares the recovered governing equations with the exact Navier--Stokes system, demonstrating that the proposed SCV-PINN provides a more accurate PDE discovery.

The convergence histories of the identified parameters $\lambda_1$ and $\lambda_2$ are shown in Fig.~\ref{fig:fig24}. The reconstructed flow fields are presented in Fig.~\ref{fig:fig25}, where the predicted $u$-velocity, $v$-velocity, and pressure contours are compared with the reference solutions together with the corresponding absolute error distributions. The results indicate that the proposed SCV-PINN accurately reconstructs the flow dynamics with low prediction errors.

A quantitative comparison between PINN and SCV-PINN is reported in Table~\ref{tab:tab10}, where the relative $L_2$ errors and MSE values for the identified parameters $(\lambda_1,\lambda_2)$ and reconstructed variables $(u,v,p)$ are presented. For a fair comparison, both models employ the same network architecture and training settings. In addition, the loss convergence histories shown in Fig.~\ref{fig:fig27} demonstrate faster and more stable optimization behavior for the proposed SCV-PINN framework.

\begin{table}[htbp]
\centering
\caption{
Comparison of parameter identification results for the inverse Navier--Stokes equation using Vanilla-PINN~\cite{raissi2019physics} and the proposed SCV-PINN framework.
}
\label{tab:tab7}

\vspace{0.3em}

\renewcommand{\arraystretch}{1.15}

\begin{tabular}{lcc}
\hline
\textbf{Method} & \textbf{Convection ($\lambda_1$)} & \textbf{Viscosity ($\lambda_2$)} \\
\hline
True Values & $1.00137$ & $0.01000$ \\

PINN~\cite{raissi2019physics} & $0.99900$ & $0.01047$ \\

SCV-PINN (Proposed) &
$\mathbf{1.00100}$ &
$\mathbf{0.01007}$ \\
\hline
\end{tabular}

\end{table}

\begin{table}[htbp]
\centering
\caption{
Comparison of the exact and identified inverse Navier--Stokes equations obtained using Vanilla-PINN~\cite{raissi2019physics} and the proposed SCV-PINN framework.
}
\label{tab:tab8}

\vspace{0.3em}

\renewcommand{\arraystretch}{1.35}

\begin{tabular}{|c|c|}
\hline

\textbf{Equation Type} & \textbf{Identified PDE System} \\

\hline

Correct PDE &
$\begin{aligned}
u_t + (u u_x + v u_y) &= -p_x + 0.01(u_{xx} + u_{yy}) \\
v_t + (u v_x + v v_y) &= -p_y + 0.01(v_{xx} + v_{yy})
\end{aligned}$ \\

\hline

PINN~\cite{raissi2019physics} Identified PDE &
$\begin{aligned}
u_t + 0.999(u u_x + v u_y) &= -p_x + 0.01047(u_{xx} + u_{yy}) \\
v_t + 0.999(u v_x + v v_y) &= -p_y + 0.01047(v_{xx} + v_{yy})
\end{aligned}$ \\

\hline

SCV-PINN Identified PDE &
$\begin{aligned}
u_t + 1.001(u u_x + v u_y) &= -p_x + 0.01007(u_{xx} + u_{yy}) \\
v_t + 1.001(u v_x + v v_y) &= -p_y + 0.01007(v_{xx} + v_{yy})
\end{aligned}$ \\

\hline

\end{tabular}

\end{table}

\refstepcounter{figure}

\begin{center}
\includegraphics[width=\textwidth]{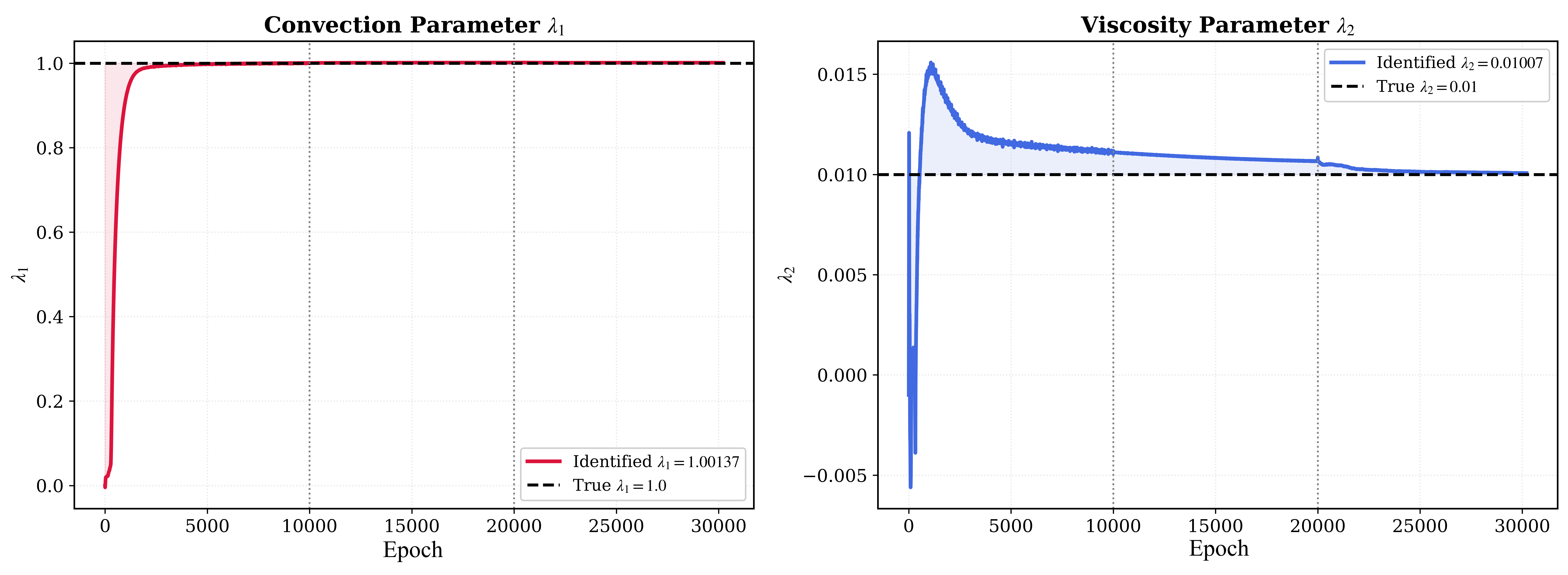}
\end{center}

\vspace{0.5em}

\noindent\textbf{Fig. \thefigure.}
Convergence history of the identified parameters $(\lambda_1,\lambda_2)$ with respect to training epochs for the two-dimensional Navier--Stokes inverse problem. The left and right panels correspond to $\lambda_1$ and $\lambda_2$, respectively. The exact parameter values are $\lambda_1=1.00137$ and $\lambda_2=0.01007$.

\label{fig:fig24}

\refstepcounter{figure}

\begin{center}
\includegraphics[width=\textwidth]{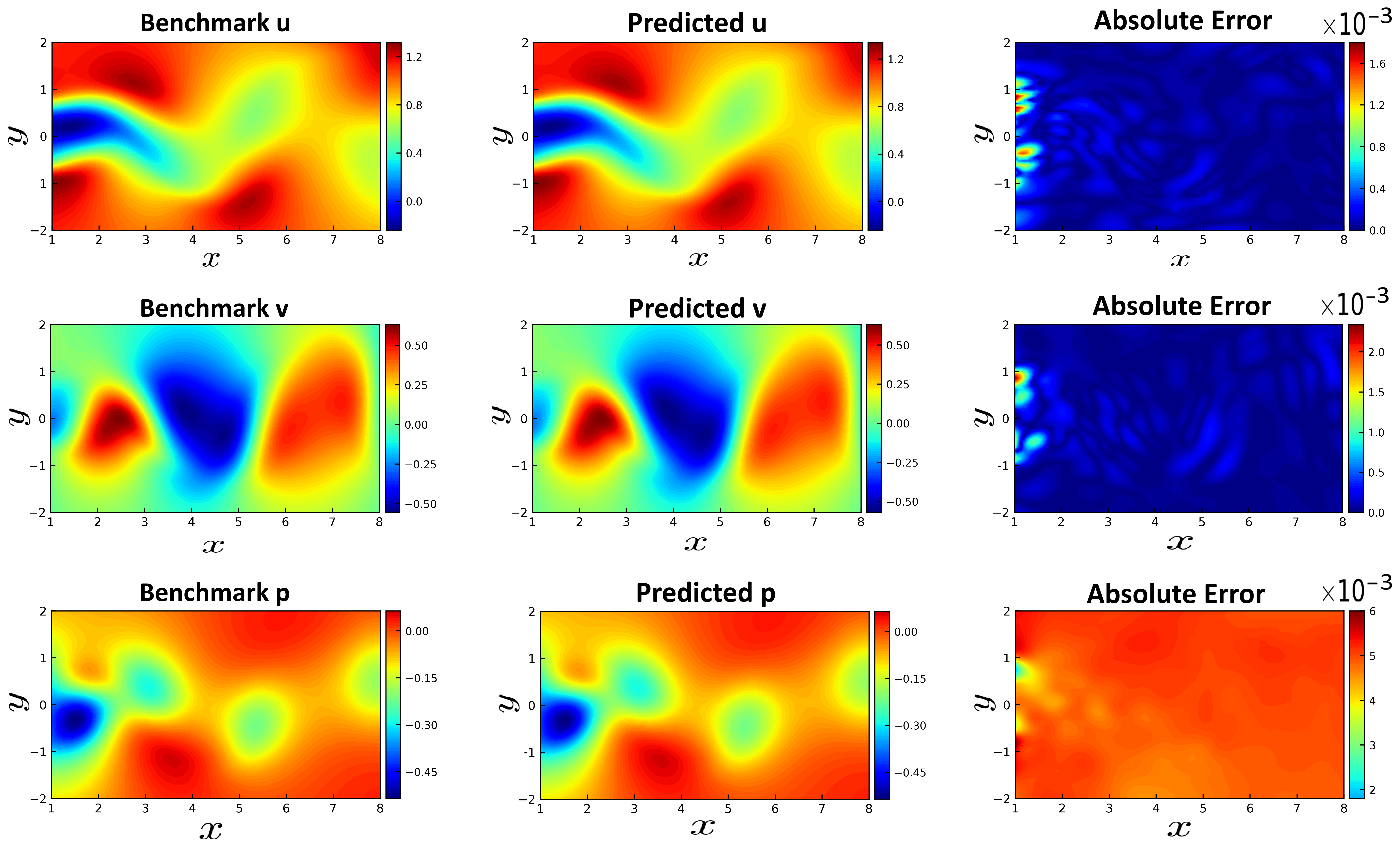}
\end{center}

\vspace{0.5em}

\noindent\textbf{Fig. \thefigure.}
Inverse Navier--Stokes flow field reconstruction at $Re=100$: comparison of the benchmark and predicted $u$-velocity, $v$-velocity, and pressure fields together with their corresponding absolute error distributions obtained using the proposed SCV-PINN.

\label{fig:fig25}

\subsection{3D Navier--Stokes Equation: Beltrami Flow}

To further assess the robustness and versatility of the proposed split complex-valued physics-informed neural network (SCV-PINN) framework for high-dimensional fluid flow problems, we consider the three-dimensional incompressible Navier--Stokes equations corresponding to the Beltrami flow benchmark. Owing to the strong coupling among the velocity components and pressure field, this problem provides a demanding test for evaluating the capability of SCV-PINNs in solving three-dimensional nonlinear flow systems.

The governing equations are given by
\begin{equation}
\frac{\partial \mathbf{u}}{\partial t}
+
(\mathbf{u}\cdot\nabla)\mathbf{u}
=
-\frac{1}{\rho_0}\nabla p
+
\nu \nabla^2\mathbf{u},
\end{equation}
subject to the incompressibility constraint
\begin{equation}
\nabla\cdot\mathbf{u}=0,
\end{equation}
where $\mathbf{u}=(u,v,w)$ denotes the velocity field, $p$ is the pressure, $\rho_0$ is the constant density, and $\nu$ is the kinematic viscosity.

The analytical solution corresponding to the Beltrami flow benchmark is given below. For further details regarding its derivation and physical characteristics, readers are referred to~\cite{ethier1994exact}.
\begin{align}
u(x,y,z,t)
&=
-a
\left[
e^{ax}\sin(ay+dz)
+
e^{az}\cos(ax+dy)
\right]
e^{-d^{2}t},
\\
v(x,y,z,t)
&=
-a
\left[
e^{ay}\sin(az+dx)
+
e^{ax}\cos(ay+dz)
\right]
e^{-d^{2}t},
\\
w(x,y,z,t)
&=
-a
\left[
e^{az}\sin(ax+dy)
+
e^{ay}\cos(az+dx)
\right]
e^{-d^{2}t}.
\end{align}

The corresponding pressure field is given by
\begin{align}
p(x,y,z,t)
=
-\frac{a^{2}}{2}
\Big[
&
e^{2ax}
+
e^{2ay}
+
e^{2az}
\nonumber\\
&
+2e^{a(y+z)}
\sin(ax+dy)\cos(az+dx)
\nonumber\\
&
+2e^{a(z+x)}
\sin(ay+dz)\cos(ax+dy)
\nonumber\\
&
+2e^{a(x+y)}
\sin(az+dx)\cos(ay+dz)
\Big]
e^{-2d^{2}t}.
\end{align}

Within the proposed SCV-PINN framework, the spatial--temporal coordinates $(x,y,z,t)$ are normalized and mapped into the split complex domain before being propagated through the network. The SCV-PINN architecture consists of two hidden layers with 40 neurons per layer and employs the CGELU activation function. The physically meaningful velocity and pressure fields are recovered from the real part of the network outputs, while automatic differentiation is used to enforce the momentum equations and incompressibility constraint.

Dirichlet boundary conditions are imposed on all velocity components along the domain boundaries, whereas the pressure is specified at a single reference point to eliminate the pressure indeterminacy inherent in incompressible flows. The model is trained using 10,000 Adam iterations followed by 20,000 L-BFGS iterations for fine tuning.

Figure~\ref{fig:fig26} presents the exact and predicted $u$-velocity, $v$-velocity, $w$-velocity, and pressure fields together with the corresponding absolute error distributions. The proposed SCV-PINN accurately reconstructs all flow variables throughout the computational domain and achieves a relative $L_2$ error of $4.07\times10^{-5}$. These results demonstrate the effectiveness of the proposed framework for solving high-dimensional nonlinear fluid dynamics problems and highlight its potential for three-dimensional flow simulations.

\begin{table}[htbp]
\centering
\caption{Benchmark problems considered for evaluating the proposed SCV-PINN framework together with the corresponding relative $L_2$ errors.}
\label{tab:tab9}

\vspace{0.3em}

\renewcommand{\arraystretch}{1.15}
\begin{tabular}{lc}
\hline
\textbf{Benchmark Problem} & \textbf{relative $L_2$ error} \\
\hline
Burgers' equation                          & $6.73 \times 10^{-5}$ \\
Allen--Cahn equation                       & $4.39 \times 10^{-4}$ \\
Korteweg--de Vries (KdV) equation          & $5.54 \times 10^{-4}$ \\
Nonlinear Schrödinger equation             & $8.45 \times 10^{-5}$ \\
Helmholtz equation (regular domain)        & $2.17 \times 10^{-4}$ \\
Helmholtz equation (irregular domain)      & $1.17 \times 10^{-4}$ \\
Poisson equation (regular domain)          & $4.45 \times 10^{-7}$ \\
Poisson equation (irregular domain)        & $1.26 \times 10^{-2}$ \\
Kovasznay flow (u) ($Re=20$)                   &$8.20 \times 10^{-5}$ \\
Lid-driven cavity flow (u) ($Re=100$)          & $2.64 \times 10^{-3}$ \\
\hline
\end{tabular}
\end{table}

\refstepcounter{figure}

\begin{center}
\includegraphics[width=\textwidth]{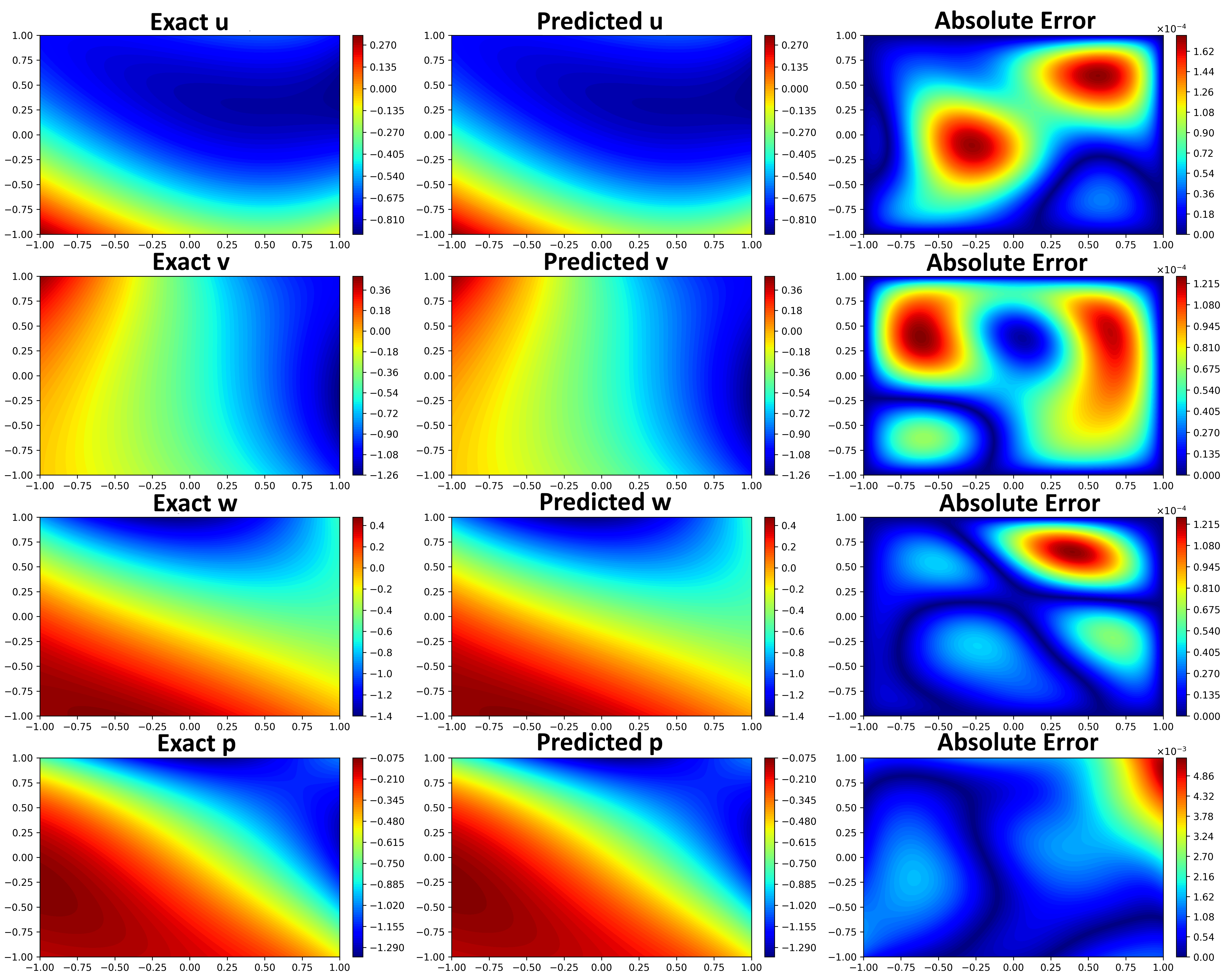}
\end{center}

\vspace{0.5em}

\noindent\textbf{Fig. \thefigure.}
Comparison of the exact and predicted $u$-velocity, $v$-velocity, $w$-velocity, and pressure fields together with their corresponding absolute error distributions for the three-dimensional Navier--Stokes Beltrami flow obtained using the proposed SCV-PINN.

\label{fig:fig26}

\begin{table}[htbp]
\centering
\caption{
Comparison of the proposed SCV-PINN and conventional real-valued PINN~\cite{raissi2019physics} across different PDE benchmark problems. Bold values indicate better performance. All models were trained using Adam optimization for 5000 epochs followed by L-BFGS optimization for 5000 iterations with identical hyperparameters to ensure a fair comparison.
}
\label{tab:tab10}

\vspace{0.3em}

\renewcommand{\arraystretch}{1.10}
\scriptsize

\begin{tabular}{llcccc}
\hline
\textbf{Problem} & \textbf{Field} &
\multicolumn{2}{c}{\textbf{MSE}} &
\multicolumn{2}{c}{\textbf{relative $L_2$ error}} \\
\cline{3-6}
 & & \textbf{SCV-PINN} & \textbf{PINN~\cite{raissi2019physics}} & \textbf{SCV-PINN} & \textbf{PINN~\cite{raissi2019physics}} \\
\hline

Helmholtz & $u$ &
$\mathbf{9.34 \times 10^{-6}}$ & $1.57 \times 10^{-4}$ &
$\mathbf{6.14 \times 10^{-3}}$ & $2.52 \times 10^{-2}$ \\

Helmholtz (hole) & $u$ &
$\mathbf{7.75 \times 10^{-6}}$ & $5.62 \times 10^{-5}$ &
$\mathbf{5.31 \times 10^{-3}}$ & $1.43 \times 10^{-2}$ \\

\hline

Kovasznay & $u$ &
$\mathbf{2.36 \times 10^{-8}}$ & $2.15 \times 10^{-6}$ &
$\mathbf{1.22 \times 10^{-4}}$ & $1.17 \times 10^{-3}$ \\

Kovasznay & $v$ &
$\mathbf{1.95 \times 10^{-8}}$ & $4.24 \times 10^{-7}$ &
$\mathbf{6.46 \times 10^{-4}}$ & $3.01 \times 10^{-3}$ \\

Kovasznay & $p$ &
$\mathbf{7.21 \times 10^{-8}}$ & $1.49 \times 10^{-6}$ &
$\mathbf{3.58 \times 10^{-4}}$ & $1.63 \times 10^{-3}$ \\

\hline

Lid cavity & $u$ &
$\mathbf{1.48 \times 10^{-4}}$ & $2.03 \times 10^{-4}$ &
$\mathbf{2.67 \times 10^{-2}}$ & $3.12 \times 10^{-2}$ \\

Lid cavity & $v$ &
$\mathbf{1.53 \times 10^{-4}}$ & $2.17 \times 10^{-4}$ &
$\mathbf{9.27 \times 10^{-2}}$ & $1.10 \times 10^{-1}$ \\

\hline

Burgers (forward) & $u$ &
$\mathbf{4.48 \times 10^{-5}}$ & $8.76 \times 10^{-2}$ &
$\mathbf{1.09 \times 10^{-2}}$ & $4.82 \times 10^{-1}$ \\

\hline

Burgers inverse & $u$ &
$\mathbf{2.64 \times 10^{-6}}$ & $1.62 \times 10^{-3}$ &
$\mathbf{2.64 \times 10^{-3}}$ & $6.54 \times 10^{-2}$ \\

Burgers inverse & $\lambda_1$ &
$\mathbf{5.61 \times 10^{-7}}$ & $2.99 \times 10^{-6}$ &
$\mathbf{7.49 \times 10^{-4}}$ & $1.73 \times 10^{-3}$ \\

Burgers inverse & $\lambda_2$ &
$\mathbf{1.44 \times 10^{-9}}$ & $2.51 \times 10^{-6}$ &
$\mathbf{1.19 \times 10^{-2}}$ & $4.97 \times 10^{-1}$ \\





\hline

Lorenz inverse & $y_1$ &
$\mathbf{4.74 \times 10^{-6}}$ & $1.29 \times 10^{-3}$ &
$\mathbf{3.74 \times 10^{-4}}$ & $6.17 \times 10^{-3}$ \\

Lorenz inverse & $y_2$ &
$\mathbf{6.35 \times 10^{-6}}$ & $1.70 \times 10^{-3}$ &
$\mathbf{4.02 \times 10^{-4}}$ & $6.58 \times 10^{-3}$ \\

Lorenz inverse & $y_3$ &
$\mathbf{8.26 \times 10^{-6}}$ & $3.15 \times 10^{-3}$ &
$\mathbf{1.79 \times 10^{-4}}$ & $3.50 \times 10^{-3}$ \\

Lorenz inverse & $\lambda_1$ &
$\mathbf{1.61 \times 10^{-5}}$ & $1.97 \times 10^{-3}$ &
$\mathbf{4.01 \times 10^{-4}}$ & $4.44 \times 10^{-3}$ \\

Lorenz inverse & $\lambda_2$ &
$\mathbf{2.72 \times 10^{-7}}$ & $6.93 \times 10^{-5}$ &
$\mathbf{3.48 \times 10^{-5}}$ & $5.55 \times 10^{-4}$ \\

Lorenz inverse & $\lambda_3$ &
$\mathbf{1.06 \times 10^{-6}}$ & $2.69 \times 10^{-5}$ &
$\mathbf{3.86 \times 10^{-4}}$ & $1.95 \times 10^{-3}$ \\

\hline

Navier--Stokes inverse & $u$ &
$\mathbf{8.44 \times 10^{-6}}$ & $1.09 \times 10^{-4}$ &
$\mathbf{3.21 \times 10^{-3}}$ & $1.15 \times 10^{-2}$ \\

Navier--Stokes inverse & $v$ &
$\mathbf{1.55 \times 10^{-5}}$ & $6.38 \times 10^{-5}$ &
$\mathbf{1.44 \times 10^{-2}}$ & $2.92 \times 10^{-2}$ \\

Navier--Stokes inverse & $p$ &
$\mathbf{5.44 \times 10^{-4}}$ & $9.47 \times 10^{-3}$ &
$\mathbf{1.77 \times 10^{-1}}$ & $7.39 \times 10^{-1}$ \\

Navier--Stokes inverse & $\lambda_1$ &
$\mathbf{1.98 \times 10^{-6}}$ & $6.48 \times 10^{-6}$ &
$\mathbf{1.41 \times 10^{-3}}$ & $2.55 \times 10^{-3}$ \\

Navier--Stokes inverse & $\lambda_2$ &
$\mathbf{8.04 \times 10^{-8}}$ & $2.30 \times 10^{-6}$ &
$\mathbf{2.84 \times 10^{-2}}$ & $1.52 \times 10^{-1}$ \\

\hline

Poisson(irregular/L-shape) & $u$ &
$\mathbf{8.84 \times 10^{-5}}$ & $1.41 \times 10^{-4}$ &
$\mathbf{1.26 \times 10^{-2}}$ & $1.59 \times 10^{-1}$ \\

\hline
Schrödinger & $\psi$ &
$\mathbf{8.83 \times 10^{-7}}$ & $2.91 \times 10^{-3}$ &
$\mathbf{1.05 \times 10^{-3}}$ & $6.04 \times 10^{-2}$ \\

\hline
\end{tabular}

\end{table}

\begin{figure}[htbp]
\centering
\includegraphics[width=0.9\textwidth]{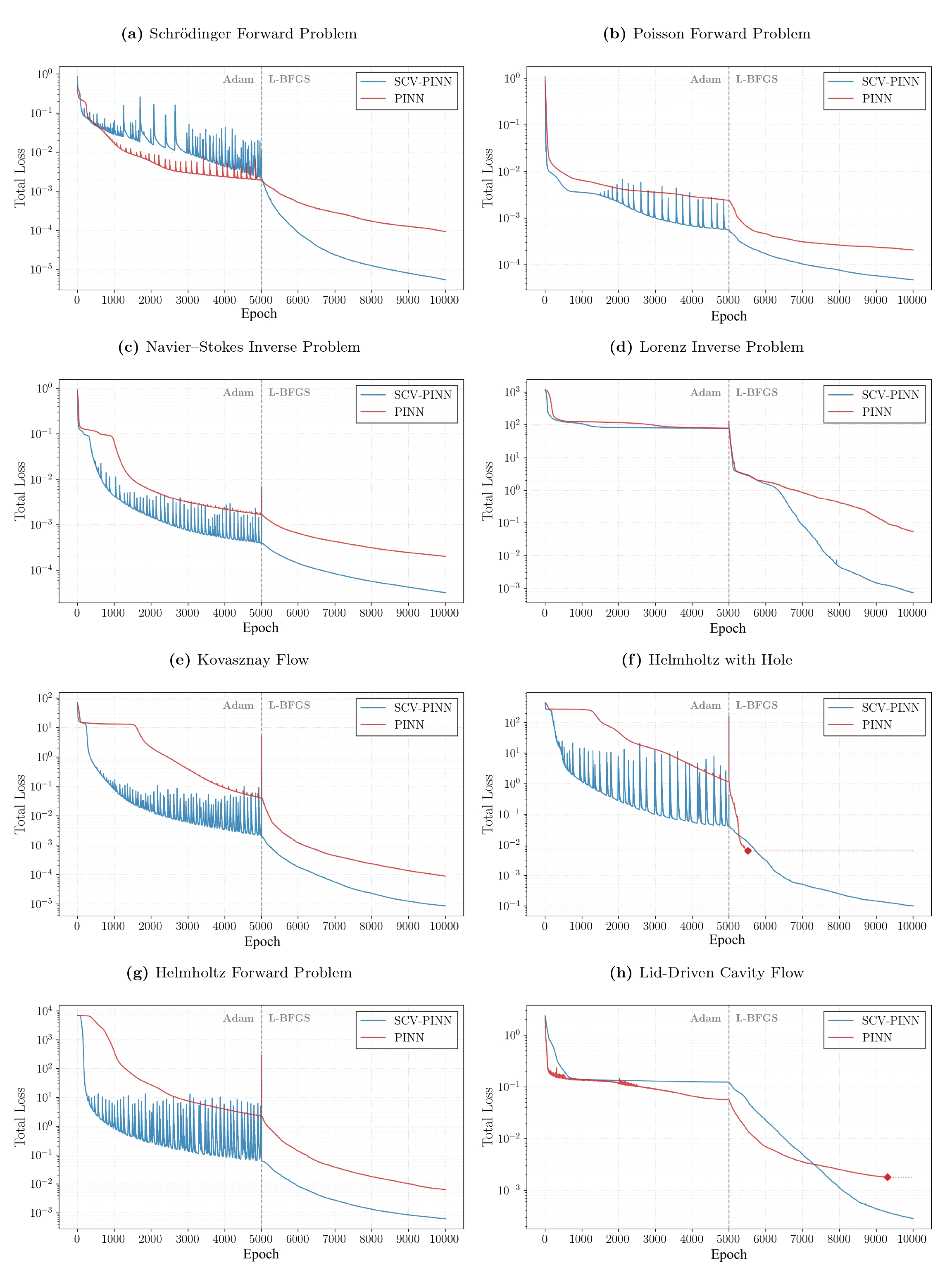}
\caption{Training loss convergence histories of the conventional PINN and the proposed SCV-PINN frameworks. The proposed SCV-PINN consistently exhibits faster convergence and lower final loss values across a range of benchmark problems.}

\label{fig:fig27}
\end{figure}

\begin{figure}[htbp]
\centering
\includegraphics[width=\textwidth]{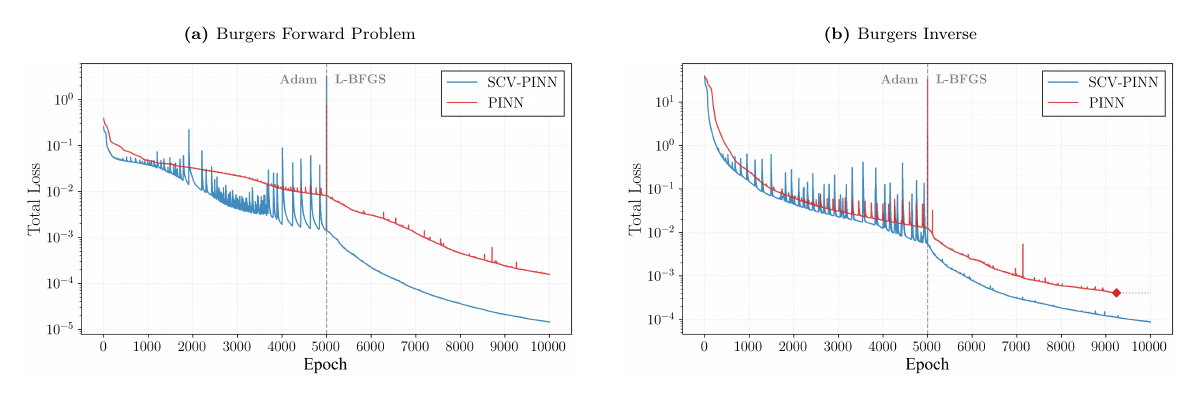}
\caption{Training loss convergence histories obtained using the conventional PINN and the proposed SCV-PINN frameworks for (a) the Burgers' forward problem and (b) the Burgers' inverse problem.}

\label{fig:fig28}
\end{figure}

\begin{figure}[htbp]
\centering
\includegraphics[width=0.5\textwidth]{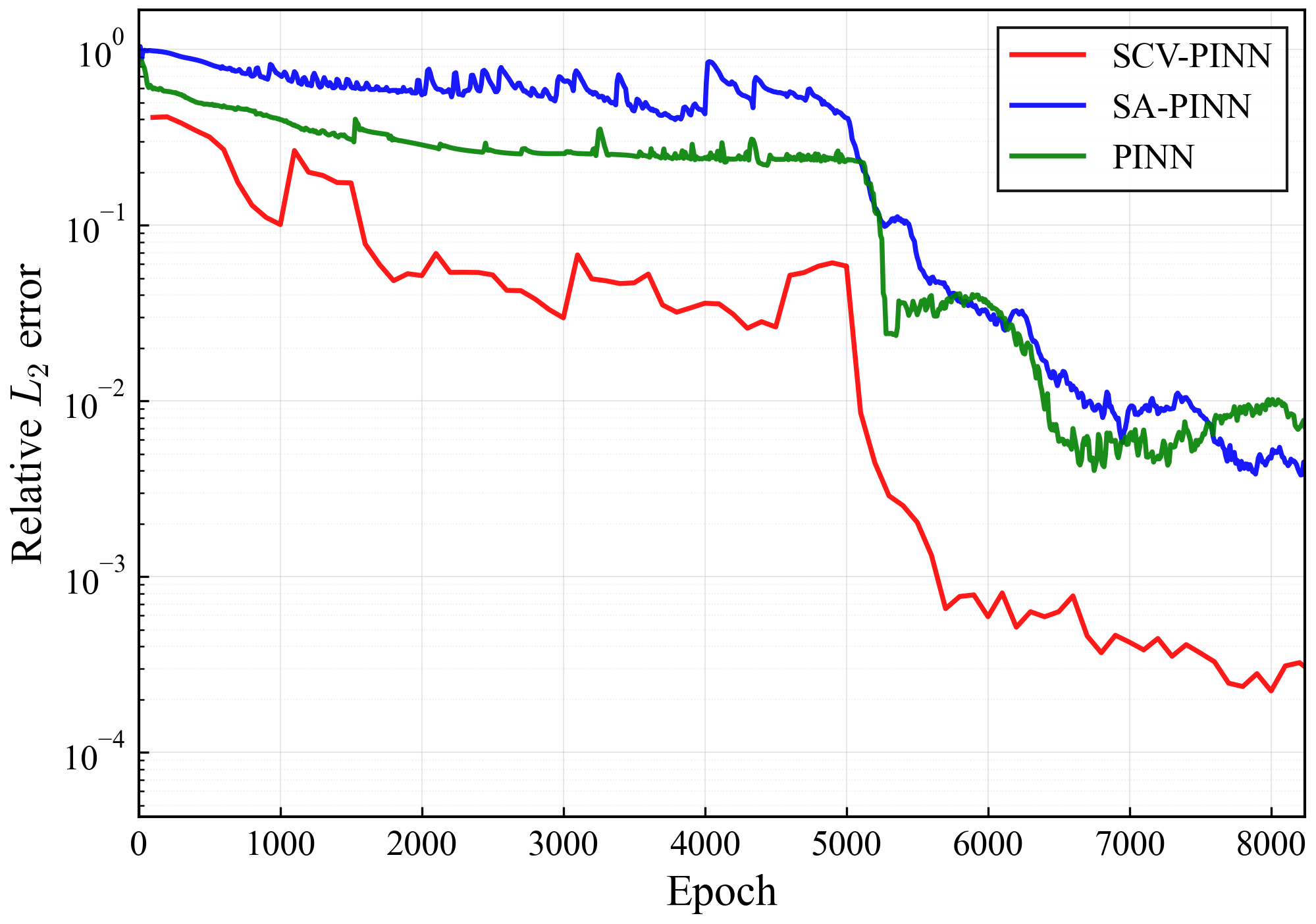}
\caption{Comparison of the relative $L_2$ error obtained using the conventional PINN, SA-PINN, and the proposed SCV-PINN frameworks under identical network architectures and training hyperparameters for the Burgers' equation (forward problem).}
\label{fig:fig29}
\end{figure}

\clearpage
\renewcommand{\thesection}{\arabic{section}}
\setcounter{section}{5}
\section{Conclusion}
\label{sec:sec6}
In this work, we have developed a generalized split complex-valued physics-informed neural network (SCV-PINN) framework for solving a broad class of forward and inverse problems governed by ordinary and partial differential equations. The proposed framework extends conventional real-valued PINNs by introducing complex-valued weights, biases, and latent representations together with split complex-valued activation functions, enabling the network to exploit both magnitude and phase information during training. To assess its effectiveness, the framework was systematically evaluated on a diverse set of benchmark problems, including Burgers', Allen--Cahn, Korteweg--de Vries, nonlinear Schrödinger, Helmholtz, and Poisson equations on both regular and irregular domains, as well as Kovasznay flow, lid-driven cavity flow, Lorenz system identification, inverse Burgers' equation, inverse Navier--Stokes equations, and the three-dimensional Navier--Stokes Beltrami flow benchmark. Across these problems, the proposed SCV-PINN consistently achieved lower relative $L_2$ errors, improved parameter-identification accuracy, and enhanced convergence characteristics compared with conventional PINNs and several existing variants.

One important observation of this study is that the proposed framework naturally generalizes the standard real-valued PINN formulation. While maintaining competitive performance on purely real-valued problems, SCV-PINN demonstrates clear advantages for oscillatory, multi-scale, strongly nonlinear, and complex-valued systems, where the incorporation of phase information plays a crucial role in accurately representing the underlying solution dynamics. The numerical experiments further reveal that the proposed framework remains robust on irregular computational domains and sparse-data inverse problems, indicating its flexibility for realistic scientific and engineering applications. In addition, successful application to the three-dimensional Navier--Stokes Beltrami flow problem demonstrates the scalability of the framework to high-dimensional nonlinear systems. Extensive ablation studies on split activation functions and collocation-point sampling strategies provide additional insights into the factors governing training efficiency and predictive accuracy. Overall, the results establish SCV-PINN as a robust, accurate, and generalized extension of conventional PINNs, offering a promising direction for scientific machine learning, complex PDE modeling, and data-driven discovery of physical systems. Future work will focus on extending the framework to multi-physics systems, turbulent flows, operator-learning architectures, and large-scale scientific computing applications.

\appendix
\section{Appendix A: Ablation Study of Split Complex-Valued Activation Functions}
\label{appen:appenA}

An ablation study was conducted to investigate the influence of different split complex-valued activation functions on the performance of the proposed SCV-PINN framework. The considered activation functions include CGELU, CTanh, CSwish, CSigmoid, CSoftplus, CReLU, and $z$-ReLU. The quantitative comparison of the activation functions is reported in Table~\ref{tab:tab11}, while the corresponding total loss convergence histories are presented in Fig.~\ref{fig:fig30}.

For a fair comparison, all experiments were performed using the same network architecture, optimization settings, and sampling strategy. In particular, Latin hypercube sampling (LHS) was employed for all activation functions to maintain identical collocation point distributions throughout the study. An 8-hidden-layer CV-MLP with 20 neurons per layer was adopted in all cases. The number of initial--boundary training samples and collocation points were fixed as
$\mathcal{N}_u = 100,
\mathcal{N}_f = 10{,}000$,
respectively. Training was carried out using 5,000 Adam iterations followed by 5,000 L-BFGS iterations.

Among the considered activation functions, CTanh and CGELU consistently demonstrated superior performance for the considered physical problems. Although CTanh achieved comparatively lower relative errors within the prescribed optimization setting, CGELU also exhibited stable convergence and improved agreement for several additional benchmark problems. Consequently, both CTanh and CGELU were adopted in the present study depending on the characteristics of the governing equations.

It was further observed that CReLU and $z$-ReLU failed to maintain stable convergence during optimization. In particular, the loss stagnated during the L-BFGS stage due to the non-smooth nature of ReLU-based activations. Since second-order optimization methods such as L-BFGS require higher-order derivative information for approximate Hessian evaluation, the lack of continuous second-order differentiability in ReLU-type activations adversely affects the optimization process and limits convergence accuracy. For further details on second-order optimization and Hessian-based methods, the reader is referred to~\cite{nocedal2006numerical}.

\begin{table}[htbp]
\centering
\caption{Performance comparison of different split complex activation functions for the Burgers equation using the proposed SCV-PINN framework.}
\label{tab:tab11}

\vspace{0.3em}

\renewcommand{\arraystretch}{1.15}

\begin{tabular}{lcc}
\hline
\textbf{Activation} & \textbf{MSE} & \textbf{relative $L_2$ error} \\
\hline
CGELU     & $5.842 \times 10^{-5}$ & $1.244 \times 10^{-2}$ \\
CTanh     & $\mathbf{4.730 \times 10^{-5}}$ & $\mathbf{1.119 \times 10^{-2}}$ \\
CSwish    & $1.246 \times 10^{-4}$ & $1.817 \times 10^{-2}$ \\
CSigmoid  & $8.577 \times 10^{-5}$ & $1.508 \times 10^{-2}$ \\
CSoftplus & $3.074 \times 10^{-3}$ & $9.024 \times 10^{-2}$ \\
CReLU     & $1.882 \times 10^{-1}$ & $7.062 \times 10^{-1}$ \\
zReLU     & $2.068 \times 10^{-1}$ & $7.402 \times 10^{-1}$ \\
\hline
\end{tabular}

\end{table}

\section{Appendix B: Comparison of Sampling Strategies for Burgers' Equation}
\label{appen:appenB}

To investigate the influence of collocation point distributions on the performance of the proposed SCV-PINN framework, several non-adaptive and adaptive sampling strategies were examined for the forward Burgers' equation problem. The considered non-adaptive approaches include uniform random sampling, structured grid sampling, Latin hypercube sampling (LHS), Halton sequence, Hammersley sequence, and Sobol sequence, whereas the adaptive strategy corresponds to residual-based adaptive distribution (RAD). The corresponding distributions of collocation points together with the initial and boundary training samples for all non-adaptive sampling methods are illustrated in Fig.~\ref{fig:fig31}.

For a fair comparison, all experiments were performed using the same CV-MLP architecture and training configuration. In particular, the network consisted of $8$ hidden layers with $20$ neurons per layer, employing the CGELU activation function. The number of collocation points and supervised initial--boundary samples were fixed as $
\mathcal{N}_f = 10{,}000,
\mathcal{N}_u = 100,
$ respectively. Training was carried out using a two-stage optimization strategy with 5,000 Adam iterations followed by 5,000 L-BFGS iterations for all sampling methods.

The quantitative comparison of different sampling approaches is summarized in Table~\ref{tab:tab12}, while the corresponding loss convergence histories are presented in Fig.~\ref{fig:fig30}. It was observed that, during the early optimization stage with limited Adam warm-up iterations, uniform random sampling produced comparatively lower errors. However, after full convergence, the LHS strategy consistently provided superior agreement with the reference solution for the Burgers' equation and also exhibited stable performance across other benchmark problems considered in this work. Consequently, LHS was adopted as the default non-adaptive sampling strategy throughout the remaining simulations.

Using the proposed SCV-PINN framework with the selected sampling strategy, the Burgers' equation achieved a final relative $L_2$ error of $6.73\times10^{-5}$.

\begin{figure}[htbp]
\centering
\includegraphics[width=\linewidth]{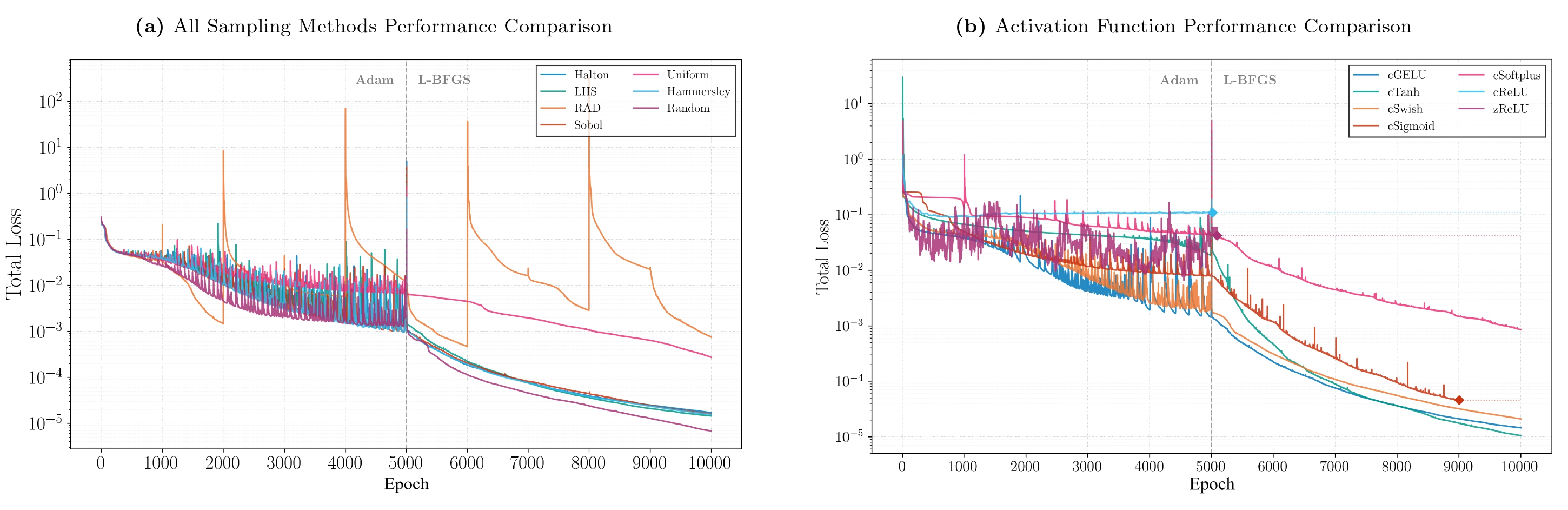}
\caption{
Comparison of different split complex-valued activation functions and collocation point sampling strategies employed in the proposed SCV-PINN framework. The figure illustrates the influence of activation selection and sampling distributions on the training convergence and solution accuracy.
}
\label{fig:fig30}
\end{figure}

\begin{figure}[htbp]
\centering
\includegraphics[width=\textwidth]{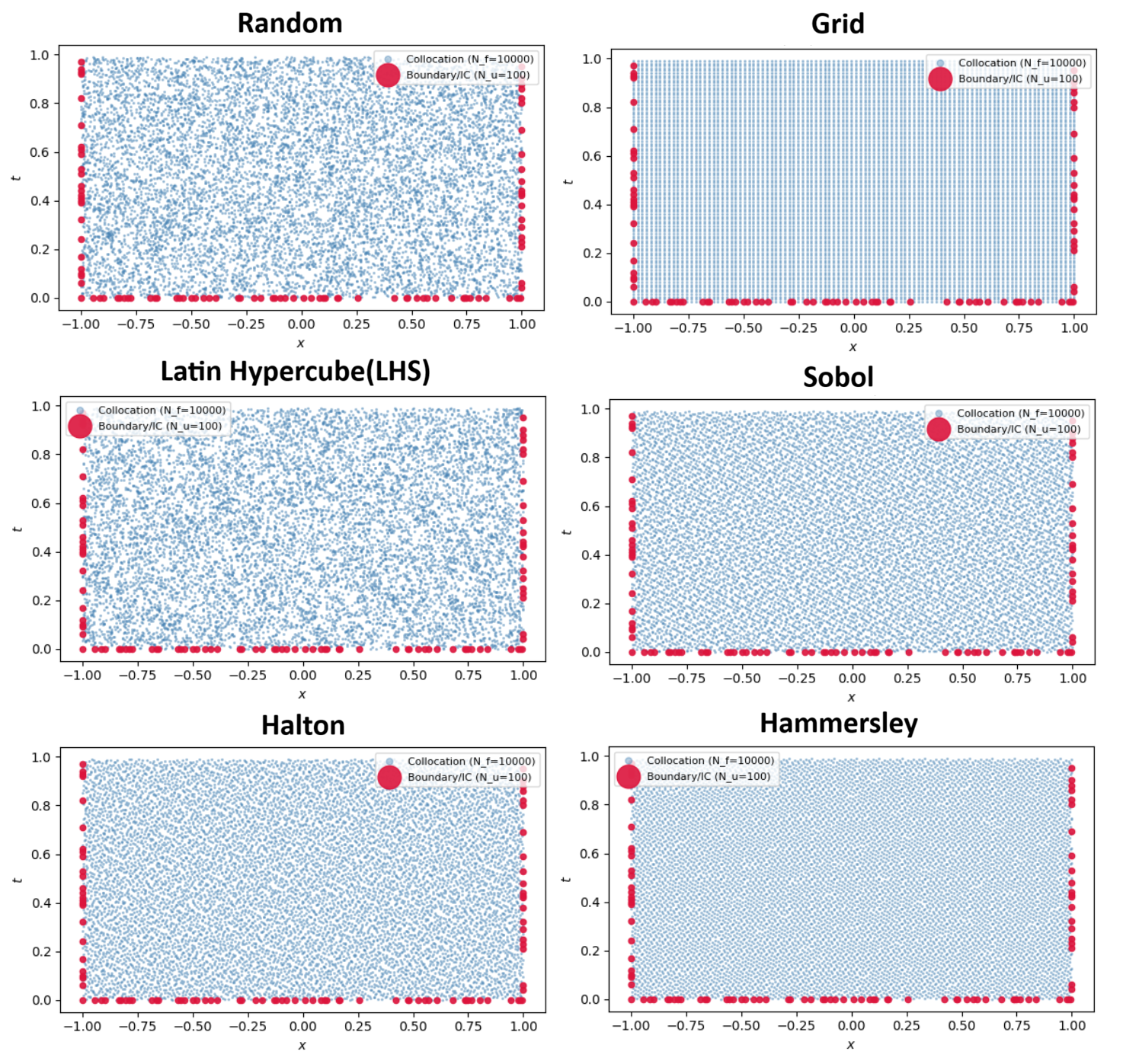}
\caption{
Distribution of collocation and initial--boundary training points for the Burgers' forward problem using different non-adaptive sampling strategies, including random, grid, Latin hypercube sampling (LHS), Sobol, Halton, and Hammersley sequences. The figure illustrates the spatial coverage and point distribution characteristics of each sampling method within the SCV-PINN framework.
}
\label{fig:fig31}
\end{figure}

\begin{table}[htbp]
\centering
\caption{Performance comparison of different collocation point sampling strategies for the Burgers equation using the proposed SCV-PINN framework.}
\label{tab:tab12}

\vspace{0.3em}

\renewcommand{\arraystretch}{1.15}
\begin{tabular}{lcc}
\hline
\textbf{Sampling Method} & \textbf{MSE} & \textbf{relative $L_2$ error} \\
\hline
LHS        & $5.842 \times 10^{-5}$ & $1.244 \times 10^{-2}$ \\
Halton     & $7.559 \times 10^{-5}$ & $1.415 \times 10^{-2}$ \\
Hammersley & $5.332 \times 10^{-5}$ & $1.189 \times 10^{-2}$ \\
Sobol      & $4.919 \times 10^{-5}$ & $1.142 \times 10^{-2}$ \\
Uniform    & $2.137 \times 10^{-2}$ & $2.380 \times 10^{-1}$ \\
Random     & $\mathbf{4.483 \times 10^{-5}}$ & $\mathbf{1.090 \times 10^{-2}}$ \\
RAD        & $3.057 \times 10^{-2}$ & $2.846 \times 10^{-1}$ \\
\hline
\end{tabular}
\end{table}

\section*{Declaration of Generative AI and AI-assisted Technologies in the Writing Process}

During the preparation of this manuscript, the authors used ChatGPT solely for language polishing and grammatical correction. After using this tool, the authors carefully reviewed, revised, and validated the content. The authors take full responsibility for the scientific integrity and content of the published article.

\section*{Author Declaration}
The authors have no conflicts of interest to disclose.

\section*{Data Availability}
The data that support the findings of this study are available from the corresponding author upon reasonable request.


\begin{thebibliography}{00}
\bibitem{goodfellow2016deep}
Goodfellow, I., Bengio, Y., Courville, A. and Bengio, Y., 2016. Deep learning (Vol. 1, No. 2, pp. 1-800). Cambridge: MIT press.

\bibitem{bishop2006pattern}Bishop, C.M. and Nasrabadi, N.M., 2006. Pattern recognition and machine learning (Vol. 4, No. 4, p. 738). New York: springer.

\bibitem{bishop2023deep}Bishop, C.M. and Bishop, H., 2023. Deep learning: Foundations and concepts. Springer Nature.

\bibitem{aggarwal2018neural}Aggarwal, C.C., 2018. Neural networks and deep learning (Vol. 10, No. 978, p. 3). Cham: Springer.

\bibitem{strang2019linear}Strang, G., 2019. Linear algebra and learning from data (Vol. 4). Cambridge: Wellesley-Cambridge Press.

\bibitem{zhang2016understanding}Zhang, C., Bengio, S., Hardt, M., Recht, B. and Vinyals, O., 2016. Understanding deep learning requires rethinking generalization. arXiv preprint arXiv:1611.03530.

\bibitem{krantz1999handbook}Krantz, S.G., Kress, S. and Kress, R., 1999. Handbook of complex variables (p. 290). Boston: Birkhäuser.

\bibitem{ponnusamy2006complex}Ponnusamy, S. and Silverman, H., 2006. Complex variables with applications. Boston, MA: Birkhäuser Boston.

\bibitem{zill2009first}Zill, D. and Shanahan, P., 2009. A first course in complex analysis with applications. Jones \& Bartlett Learning.

\bibitem{suresh2013supervised}Suresh, S., Sundararajan, N. and Savitha, R., 2013. Supervised learning with complex-valued neural networks (Vol. 48). Berlin: Springer.

\bibitem{guberman2016complex}Guberman, N., 2016. On complex valued convolutional neural networks. arXiv preprint arXiv:1602.09046.

\bibitem{barrachina2023theory}Barrachina, J.A., Ren, C., Vieillard, G., Morisseau, C. and Ovarlez, J.P., 2023. Theory and implementation of complex-valued neural networks. arXiv preprint arXiv:2302.08286.

\bibitem{abdalla2023complex}Abdalla, R., 2023. Complex-valued Neural Networks--Theory and Analysis. arXiv preprint arXiv:2312.06087.

\bibitem{lee2022complex}Lee, C., Hasegawa, H. and Gao, S., 2022. Complex-valued neural networks: A comprehensive survey. IEEE/CAA Journal of Automatica Sinica, 9(8), pp.1406-1426.

\bibitem{bassey2021survey}Bassey, J., Qian, L. and Li, X., 2021. A survey of complex-valued neural networks. arXiv preprint arXiv:2101.12249.

\bibitem{fuchs2021complex}Fuchs, A., Rock, J., Toth, M., Meissner, P. and Pernkopf, F., 2021, May. Complex-valued convolutional neural networks for enhanced radar signal denoising and interference mitigation. In 2021 IEEE Radar Conference (RadarConf21) (pp. 1-6). IEEE.

\bibitem{scarnati2021complex}Scarnati, T. and Lewis, B., 2021, May. Complex-valued neural networks for synthetic aperture radar image classification. In 2021 IEEE Radar Conference (RadarConf21) (pp. 1-6). IEEE.

\bibitem{choi2018phase}Choi, H.S., Kim, J.H., Huh, J., Kim, A., Ha, J.W. and Lee, K., 2018, September. Phase-aware speech enhancement with deep complex u-net. In International Conference on Learning Representations.

\bibitem{fink2014predictin}Fink, O., Zio, E. and Weidmann, U., 2014. Predicting component reliability and level of degradation with complex-valued neural networks. Reliability Engineering \& System Safety, 121, pp.198-206.

\bibitem{trabelsi2017deep}Trabelsi, C., Bilaniuk, O., Zhang, Y., Serdyuk, D., Subramanian, S., Santos, J.F., Mehri, S., Rostamzadeh, N., Bengio, Y. and Pal, C.J., 2017. Deep complex networks. arXiv preprint arXiv:1705.09792.

\bibitem{hirose2006complex}Hirose, A., 2006. Complex-valued neural networks. Berlin, Heidelberg: Springer Berlin Heidelberg.

\bibitem{reichert2013neuronal}Reichert, D.P. and Serre, T., 2013. Neuronal synchrony in complex-valued deep networks. arXiv preprint arXiv:1312.6115.

\bibitem{nitta2003solving}Nitta, T., 2003. Solving the XOR problem and the detection of symmetry using a single complex-valued neuron. Neural Networks, 16(8), pp.1101-1105.

\bibitem{nitta2003inherent}Nitta, T., 2003. On the inherent property of the decision boundary in complex-valued neural networks. Neurocomputing, 50, pp.291-303.

\bibitem{nitta2000analysis}Nitta, T., 2000. An analysis of the fundamental structure of complex-valued neurons. Neural Processing Letters, 12(3), pp.239-246.

\bibitem{tanaka2013complex}Tanaka, G., 2013. Complex-valued neural networks: Advances and applications [book review]. ieee Computational intelligenCe magazine, 8(2), pp.77-79.

\bibitem{fischer2002appendix}Fischer, R.F., 2002. Appendix a: Wirtinger calculus.

\bibitem{wirtinger1927formalen}Wirtinger, W., 1927. Zur formalen theorie der funktionen von mehr komplexen veränderlichen. Mathematische Annalen, 97(1), pp.357-375.

\bibitem{amin2013learning}Amin, M.F., Murase, K. and Hirose, A., 2013. Learning algorithms in complex-valued neural networks using Wirtinger calculus. Complex-Valued Neural Networks: Advances and Applications, pp.75-102.

\bibitem{amin2011wirtinger}Amin, M.F., Amin, M.I., Al-Nuaimi, A.Y.H. and Murase, K., 2011, November. Wirtinger calculus based gradient descent and Levenberg-Marquardt learning algorithms in complex-valued neural networks. In International Conference on Neural Information Processing (pp. 550-559). Berlin, Heidelberg: Springer Berlin Heidelberg.

\bibitem{hammad2024comprehensive}Hammad, M.M., 2024. Comprehensive survey of complex-valued neural networks: Insights into backpropagation and activation functions. arXiv preprint arXiv:2407.19258.

\bibitem{raissi2019physics}Raissi, M., Perdikaris, P. and Karniadakis, G.E., 2019. Physics-informed neural networks: A deep learning framework for solving forward and inverse problems involving nonlinear partial differential equations. Journal of Computational physics, 378, pp.686-707.

\bibitem{karniadakis2021physics}Karniadakis, G.E., Kevrekidis, I.G., Lu, L., Perdikaris, P., Wang, S. and Yang, L., 2021. Physics-informed machine learning. Nature Reviews Physics, 3(6), pp.422-440.

\bibitem{cuomo2022scientific}Cuomo, S., Di Cola, V.S., Giampaolo, F., Rozza, G., Raissi, M. and Piccialli, F., 2022. Scientific machine learning through physics–informed neural networks: Where we are and what’s next. Journal of Scientific Computing, 92(3), p.88.

\bibitem{raissi2020hidden}Raissi, M., Yazdani, A. and Karniadakis, G.E., 2020. Hidden fluid mechanics: Learning velocity and pressure fields from flow visualizations. Science, 367(6481), pp.1026-1030.

\bibitem{sun2020surrogate}Sun, L., Gao, H., Pan, S. and Wang, J.X., 2020. Surrogate modeling for fluid flows based on physics-constrained deep learning without simulation data. Computer Methods in Applied Mechanics and Engineering, 361, p.112732.

\bibitem{raissi2019deep}Raissi, M., Wang, Z., Triantafyllou, M.S. and Karniadakis, G.E., 2019. Deep learning of vortex-induced vibrations. Journal of fluid mechanics, 861, pp.119-137.

\bibitem{jin2021nsfnets}Jin, X., Cai, S., Li, H. and Karniadakis, G.E., 2021. NSFnets (Navier-Stokes flow nets): Physics-informed neural networks for the incompressible Navier-Stokes equations. Journal of Computational Physics, 426, p.109951.

\bibitem{kissas2020machine}Kissas, G., Yang, Y., Hwuang, E., Witschey, W.R., Detre, J.A. and Perdikaris, P., 2020. Machine learning in cardiovascular flows modeling: Predicting arterial blood pressure from non-invasive 4D flow MRI data using physics-informed neural networks. Computer methods in applied mechanics and engineering, 358, p.112623.

\bibitem{liu2019multi}Liu, D. and Wang, Y., 2019. Multi-fidelity physics-constrained neural network and its application in materials modeling. Journal of Mechanical Design, 141(12), p.121403.

\bibitem{yang2019adversarial}Yang, Y. and Perdikaris, P., 2019. Adversarial uncertainty quantification in physics-informed neural networks. Journal of Computational Physics, 394, pp.136-152.

\bibitem{zhu2019physics}Zhu, Y., Zabaras, N., Koutsourelakis, P.S. and Perdikaris, P., 2019. Physics-constrained deep learning for high-dimensional surrogate modeling and uncertainty quantification without labeled data. Journal of computational physics, 394, pp.56-81.

\bibitem{yang2021b}Yang, L., Meng, X. and Karniadakis, G.E., 2021. B-PINNs: Bayesian physics-informed neural networks for forward and inverse PDE problems with noisy data. Journal of Computational Physics, 425, p.109913.

\bibitem{han2018solving}Han, J., Jentzen, A. and E, W., 2018. Solving high-dimensional partial differential equations using deep learning. Proceedings of the National Academy of Sciences, 115(34), pp.8505-8510.

\bibitem{pang2019fpinns}Pang, G., Lu, L. and Karniadakis, G.E., 2019. fPINNs: Fractional physics-informed neural networks. SIAM Journal on Scientific Computing, 41(4), pp.A2603-A2626.

\bibitem{fuks2020limitations}Fuks, O. and Tchelepi, H.A., 2020. Limitations of physics informed machine learning for nonlinear two-phase transport in porous media. Journal of Machine Learning for Modeling and Computing, 1(1).

\bibitem{krishnapriyan2021characterizing}Krishnapriyan, A., Gholami, A., Zhe, S., Kirby, R. and Mahoney, M.W., 2021. Characterizing possible failure modes in physics-informed neural networks. Advances in neural information processing systems, 34, pp.26548-26560.

\bibitem{wang2021understanding}Wang, S., Teng, Y. and Perdikaris, P., 2021. Understanding and mitigating gradient flow pathologies in physics-informed neural networks. SIAM Journal on Scientific Computing, 43(5), pp.A3055-A3081.

\bibitem{wang2022and}Wang, S., Yu, X. and Perdikaris, P., 2022. When and why PINNs fail to train: A neural tangent kernel perspective. Journal of Computational Physics, 449, p.110768.

\bibitem{rahaman2019spectral}Rahaman, N., Baratin, A., Arpit, D., Draxler, F., Lin, M., Hamprecht, F., Bengio, Y. and Courville, A., 2019, May. On the spectral bias of neural networks. In International conference on machine learning (pp. 5301-5310). PMLR.

\bibitem{cao2019towards}Cao, Y., Fang, Z., Wu, Y., Zhou, D.X. and Gu, Q., 2019. Towards understanding the spectral bias of deep learning. arXiv preprint arXiv:1912.01198.

\bibitem{tancik2020fourier}Tancik, M., Srinivasan, P., Mildenhall, B., Fridovich-Keil, S., Raghavan, N., Singhal, U., Ramamoorthi, R., Barron, J. and Ng, R., 2020. Fourier features let networks learn high frequency functions in low dimensional domains. Advances in neural information processing systems, 33, pp.7537-7547.

\bibitem{basri2020frequency}Basri, R., Galun, M., Geifman, A., Jacobs, D., Kasten, Y. and Kritchman, S., 2020, November. Frequency bias in neural networks for input of non-uniform density. In International conference on machine learning (pp. 685-694). PMLR.

\bibitem{jacot2018neural}Jacot, A., Gabriel, F. and Hongler, C., 2018. Neural tangent kernel: Convergence and generalization in neural networks. Advances in neural information processing systems, 31.

\bibitem{mcclenny2023self}McClenny, L.D. and Braga-Neto, U.M., 2023. Self-adaptive physics-informed neural networks. Journal of Computational Physics, 474, p.111722.

\bibitem{song2024loss}Song, Y., Wang, H., Yang, H., Taccari, M.L. and Chen, X., 2024. Loss-attentional physics-informed neural networks. Journal of Computational Physics, 501, p.112781.

\bibitem{mattey2022novel}Mattey, R. and Ghosh, S., 2022. A novel sequential method to train physics informed neural networks for Allen Cahn and Cahn Hilliard equations. Computer Methods in Applied Mechanics and Engineering, 390, p.114474.

\bibitem{sundar2025sequential}Sundar, R., Lucor, D. and Sarkar, S., 2025. Sequential learning based PINNs to overcome temporal domain complexities in unsteady flow past flapping wings. Journal of Fluids and Structures, 139, p.104421.

\bibitem{zhao2023pinnsformer}Zhao, Z., Ding, X. and Prakash, B.A., 2023. Pinnsformer: A transformer-based framework for physics-informed neural networks. arXiv preprint arXiv:2307.11833.

\bibitem{zhu2026physicssolver}Zhu, Z., Huang, Y. and Liu, L., 2026. Physicssolver: Transformer-enhanced physics-informed neural networks for forward and forecasting problems in partial differential equations. Journal of Computational and Applied Mathematics, 473, p.116900.

\bibitem{barman2026physicsformer}Barman, B., Chatterjee, D. and Ray, R.K., 2026. PhysicsFormer: An Efficient and Fast Attention-Based Physics-Informed Neural Network for Solving Incompressible Navier-Stokes Equations. arXiv preprint arXiv:2601.03613.

\bibitem{barman2026efficient}Barman, B. and Ray, R.K., 2026. An Efficient Wavelet-based Physics Informed Residual Neural Networks for Flow Field Reconstruction with Extremely Sparse Data. arXiv preprint arXiv:2601.18848.

\bibitem{wu2023comprehensive}Wu, C., Zhu, M., Tan, Q., Kartha, Y. and Lu, L., 2023. A comprehensive study of non-adaptive and residual-based adaptive sampling for physics-informed neural networks. Computer Methods in Applied Mechanics and Engineering, 403, p.115671.

\bibitem{kingma2014adam}Kingma, D.P. and Ba, J., 2014. Adam: A method for stochastic optimization. arXiv preprint arXiv:1412.6980.

\bibitem{liu1989limited}Liu, D.C. and Nocedal, J., 1989. On the limited memory BFGS method for large scale optimization. Mathematical programming, 45(1), pp.503-528.

\bibitem{si2026complex}Si, C., Yan, M., Li, X. and Xia, Z., 2026. Complex physics-informed neural network. Journal of Computational Physics, p.114713.

\bibitem{mohuț2026towards}Mohuț, A.I. and Popa, C.A., 2026. Towards Stable Training of Complex-Valued Physics-Informed Neural Networks: A Holomorphic Initialization Approach. Mathematics, 14(3), p.435.

\bibitem{zhang2025complex}Zhang, L., Du, M., Bai, X., Chen, Y. and Zhang, D., 2025. Complex-valued physics-informed machine learning for efficient solving of quintic nonlinear Schrödinger equations. Physical Review Research, 7(1), p.013164.

\bibitem{yang2007sensitivity}Yang, S.S., Ho, C.L. and Siu, S., 2007. Sensitivity analysis of the split-complex valued multilayer perceptron due to the errors of the iid inputs and weights. IEEE transactions on neural networks, 18(5), pp.1280-1293.

\bibitem{georgiou1992complex}Georgiou, G.M. and Koutsougeras, C., 1992. Complex domain backpropagation. IEEE transactions on Circuits and systems II: analog and digital signal processing, 39(5), pp.330-334.

\bibitem{hirose1992proposal}Hirose, A., 1992, June. Proposal of fully complex-valued neural networks. In [Proceedings 1992] IJCNN International Joint Conference on Neural Networks (Vol. 4, pp. 152-157). IEEE.

\bibitem{nocedal2006numerical}Nocedal, J. and Wright, S.J., 2006. Numerical optimization. New York, NY: Springer New York.

\bibitem{lu2021deepxde}Lu, L., Meng, X., Mao, Z. and Karniadakis, G.E., 2021. DeepXDE: A deep learning library for solving differential equations. SIAM review, 63(1), pp.208-228.

\bibitem{segur1973korteweg}Segur, H., 1973. The Korteweg-de Vries equation and water waves. Solutions of the equation. Part 1. Journal of Fluid Mechanics, 59(4), pp.721-736.

\bibitem{ethier1994exact}Ethier, C.R. and Steinman, D.A., 1994. Exact fully 3D Navier–Stokes solutions for benchmarking. International Journal for Numerical Methods in Fluids, 19(5), pp.369-375.

\bibitem{urban2025unveiling}Urbán, J.F., Stefanou, P. and Pons, J.A., 2025. Unveiling the optimization process of physics informed neural networks: How accurate and competitive can PINNs be?. Journal of Computational Physics, 523, p.113656.

\end{thebibliography}
\end{document}